\title{Normal Form Bisimulations\\ for Delimited-Control Operators}
\author{Dariusz Biernacki \and Sergue\"i Lenglet}
\institute{University of Wroc\l{}aw}
\begin{document}

\pagestyle{plain}

\maketitle

\thispagestyle{plain}

\begin{abstract}
  We define a notion of normal form bisimilarity for the untyped
  call-by-value $\lambda$-calculus extended with the delimited-control
  operators shift and reset. Normal form bisimilarities are simple,
  easy-to-use behavioral equivalences which relate terms without
  having to test them within all contexts (like contextual
  equivalence), or by applying them to function arguments (like
  applicative bisimilarity). We prove that the normal form
  bisimilarity for shift and reset is sound but not complete
  w.r.t. contextual equivalence and we define up-to techniques that
  aim at simplifying bisimulation proofs. Finally, we illustrate the
  simplicity of the techniques we develop by proving several
  equivalences on terms.
\end{abstract}

\section{Introduction}

Morris-style contextual equivalence \cite{JHMorris:PhD} is usually
considered as the most natural behavioral equivalence for functional
languages based on $\lambda$-calculi. Roughly, two terms are
equivalent if we can exchange one for the other in a bigger program
without affecting its behavior (\ie whether it terminates or not). The
quantification over program contexts makes contextual equivalence hard
to use in practice and, therefore, it is common to look for
easier-to-use behavioral equivalences, such as \emph{bisimilarities}.

Several kinds of bisimilarity relations have been defined so far, such
as \emph{applicative bisimilarity} \cite{Abramsky-Ong:IaC93},
\emph{normal form bisimilarity} \cite{Lassen:LICS05} (originally
defined in \cite{Sangiorgi:LICS92}, where it was called \emph{open
  bisimilarity}), and \emph{environmental bisimilarity}
\cite{Sangiorgi-al:LICS07}. Applicative and environmental
bisimilarities usually compare terms by applying them to function
arguments; as a result, we obtain relations which completely
characterize contextual equivalence, but still contain a universal
quantification over arguments in their definitions. In contrast,
normal form bisimilarity does not need such quantification; it equates
terms by reducing them to normal form, and by requiring the sub-terms
of these normal forms to be bisimilar. Normal form relations are
convenient in practice, but they are usually not \emph{complete}
w.r.t. contextual equivalence, \ie there exist contextually equivalent
terms that are not normal form bisimilar.

A notion of normal form bisimulation has been defined in various
calculi, including the pure $\lambda$-calculus
\cite{Lassen:MFPS99,Lassen:LICS05}, the $\lambda$-calculus with
ambiguous choice \cite{Lassen:MFPS05}, the $\lambda\mu$-calculus
\cite{Lassen:LICS06}, and the $\lambda\mu\rho$-calculus
\cite{Stoevring-Lassen:POPL07}, where normal form bisimilarity
completely characterizes contextual equivalence. However, it has not
yet been defined for calculi with \emph{delimited-control} operators,
such as shift and reset~\cite{Danvy-Filinski:LFP90}---programming
constructs rapidly gaining currency in the recent years. Unlike
abortive control operators (such as call/cc), delimited-control
operators allow to delimit access to the current continuation and to
compose continuations. The operators shift and reset were introduced
as a direct-style realization of the traditional success/failure
continuation model of backtracking otherwise expressible only in
continuation-passing style~\cite{Danvy-Filinski:LFP90}. The numerous
theoretical and practical applications of shift and reset (see,
e.g.,~\cite{Biernacka-al:LMCS05} for an extensive list) include the
seminal result by Filinski showing that a programming language endowed
with shift and reset is monadically complete~\cite{Filinski:POPL94}.

Up to now, only an applicative bisimilarity has been defined for a
calculus with shift and reset \cite{Biernacki-Lenglet:FOSSACS12}. In
this paper, we define several notions of normal form bisimilarity for
such a calculus, more tractable than contextual equivalence or
applicative bisimilarity. We prove they are \emph{sound}
w.r.t. contextual equivalence (\ie included in contextual
equivalence), but fail to be complete. We also develop \emph{up-to
  techniques} that are helpful when proving equivalences with normal
form bisimulations.

In Section \ref{s:calculus}, we define the $\lambda$-calculus with
delimited control that we use in this paper, and we recall the
definition of contextual equivalence of
\cite{Biernacki-Lenglet:FOSSACS12} for this calculus. We then define
in Section \ref{s:nf-bisim} the main notion of normal form
bisimilarity and we prove its properties. In Section
\ref{s:refined-up-to}, we refine the definition of normal form
bisimilarity to relate more contextually equivalent terms, at the cost
of extra complexity in bisimulation proofs. We also propose several
up-to techniques which simplify the proofs of equivalence of terms. In
Section~\ref{s:examples}, we illustrate the simplicity of use
(compared to applicative bisimilarity) of the notions we define by
employing them in the proofs of several equivalences of terms.
Section~\ref{s:conclusion} concludes the paper, and Appendix
\ref{a:soundness} contains the congruence proofs of the considered
normal form bisimilarities.

\section{The Calculus $\lamshift$}
\label{s:calculus}

In this section, we present the syntax, reduction semantics, and
contextual equivalence for the language $\lamshift$ studied throughout
this article.

\subsection{Syntax}

The language $\lamshift$ extends the call-by-value $\lambda$-calculus
with the delimited-control operators \emph{shift} and
\emph{reset}~\cite{Danvy-Filinski:LFP90}. We assume we have a set of
term variables, ranged over by $\varx$, $y$, $z$, and $\vark$. We use
the metavariable $k$ for term variables representing a continuation
(\eg when bound with a shift), while $x$, $y$, and $z$ stand for any
values; we believe such distinction helps to understand examples and
reduction rules. The syntax of terms and values is given by the
following grammars:
$$
\begin{array}{llll}
  \textrm{Terms:} & \tm & ::= & \varx \Mid \lam \varx \tm \Mid \app \tm \tm \Mid
  \shift \vark \tm \Mid \reset \tm \\
  \textrm{Values:}& \val & ::= & \lam \varx \tm \Mid \varx
\end{array}
$$ The operator \emph{shift} ($\shift \vark \tm$) is a capture
operator, the extent of which is determined by the delimiter
\emph{reset} ($\rawreset$). A $\lambda$-abstraction $\lam \varx \tm$
binds $\varx$ in $\tm$ and a shift construct $\shift \vark \tm$ binds
$\vark$ in $\tm$; terms are equated up to $\alpha$-conversion of their
bound variables. The set of free variables of $\tm$ is written $\fv
\tm$; a term is \emph{closed} if it does not contain free variables.

We distinguish several kinds of contexts, as follows. 
$$
\begin{array}{llll}
  \textrm{Pure contexts:}& \ctx & ::= & \mtctx \Mid \vctx \val \ctx \Mid \apctx \ctx
  \tm \\
  \textrm{Evaluation contexts:\:}& \rctx & ::= & \mtctx \Mid \vctx \val \rctx \Mid \apctx \rctx
  \tm \Mid \resetctx \rctx \\
  \textrm{Contexts:}& \cctx & ::= & \mtctx \Mid \lam \varx \cctx
  \Mid \vctx \tm \cctx \Mid \apctx \cctx \tm \Mid \shift \vark \cctx \Mid \resetctx
  \cctx \\
\end{array}
$$ Regular contexts are ranged over by $\cctx$. The pure evaluation
contexts\footnote{This terminology comes from Kameyama (\eg in
  \cite{Kameyama-Hasegawa:ICFP03}).} (abbreviated as pure contexts),
ranged over by $\ctx$, represent delimited continuations and can be
captured by the shift operator. The call-by-value evaluation contexts,
ranged over by $\rctx$, represent arbitrary continuations and encode
the chosen reduction strategy. Filling a context $\cctx$ (respectively
$\ctx$, $\rctx$) with a term $\tm$ produces a term, written $\inctx
\cctx \tm$ (respectively $\inctx \ctx \tm$, $\inctx \rctx \tm$); the
free variables of $\tm$ may be captured in the process. A context is
\emph{closed} if it contains only closed terms.

\subsection{Reduction Semantics}
\label{ss:reduction}

Before we present the reduction semantics for $\lamshift$, let us
briefly describe an intuitive semantics of shift and reset by means of
an example written in SML, using Filinski's implementation of shift
and reset~\cite{Filinski:POPL94}.
\begin{example}
  \label{e:copy}
The following function copies a list~\cite{Biernacki-al:RS-05-16},
where the SML expression {\tt shift (fn k => t)} corresponds to
$\shift k t$ and {\tt reset (fn () => t)} corresponds to $\reset t$:
\begin{quote}
\begin{verbatim}
fun copy xs = 
  let fun visit nil = nil
        | visit (x::xs) = visit (shift (fn k => x :: (k xs)))
  in reset (fn () => visit xs) end
\end{verbatim}
\end{quote}

This simple function illustrates the main ideas of programming with
shift and reset:
\begin{itemize}
\item[$\bullet$] The control delimiter reset delimits
  continuations. Any control effects occurring in the subsequent
  calls to function {\tt visit} are local to function {\tt copy}.
\item[$\bullet$] The control operator shift captures delimited
  continuations. Each but last recursive call to {\tt visit} abstracts
  the continuation that can be represented as a function {\tt fn v =>
    reset (fn () => visit v)} and binds it to {\tt k}.
\item[$\bullet$] Captured continuations are composed statically. When
  applied, in the expression {\tt x :: (k xs)}, the captured
  continuation becomes the current delimited continuation that is
  isolated from the rest of the program, and in particular from the
  expression {\tt x ::}, by a control delimiter---witness the control
  delimiter in the expression {\tt fn v => reset (fn () => visit v)}
  representing the captured continuation.
\end{itemize}
\end{example}

Formally, the call-by-value reduction semantics of $\lamshift$ is defined as
follows, where $\subst \tm \varx \val$ is the usual capture-avoiding
substitution of $\val$ for $\varx$ in $\tm$:
$$
\begin{array}{lrll}
  \RRbeta & \quad \inctx \rctx {\app {\lamp \varx \tm} \val} & \redcbv & \inctx
  \rctx {\subst \tm \varx \val} \\
  \RRshift & \quad \inctx \rctx {\reset{\inctx \ctx {\shift \vark \tm}}} &
  \redcbv & \inctx \rctx{\reset{\subst \tm \vark
    {\lam \varx {\reset {\inctx \ctx \varx}}}}} \mbox{ with } \varx \notin \fv
  \ctx\\
  \RRreset & \quad \inctx \rctx {\reset \val} & \redcbv & \inctx \rctx \val
\end{array}
$$ 
The term $\app {\lamp \varx \tm} \val$ is the usual call-by-value redex for
$\beta$-reduction (rule $\RRbeta$). The operator $\shift \vark \tm$ captures its
surrounding context $\ctx$ up to the dynamically nearest enclosing reset, and
substitutes $\lam \varx {\reset {\inctx \ctx \varx}}$ for $\vark$ in $\tm$ (rule
$\RRshift$). If a reset is enclosing a value, then it has no purpose as a
delimiter for a potential capture, and it can be safely removed (rule
$\RRreset$). All these reductions may occur within a metalevel context
$\rctx$. The chosen call-by-value evaluation strategy is encoded in the grammar
of the evaluation contexts.

\begin{example}
  \label{e:reduction}
  Let $i = \lam \varx \varx$ and $\omega = \lam \varx {\app \varx \varx}$. We
  present the sequence of reductions initiated by $\reset {\app{\appp {(\shift
        {\vark_1}{\app i {\appp {\vark_1} i}})}{\shift {\vark_2} \omega}}{\appp
      \omega \omega}}$. The term $\shift {\vark_1}{\app i {\appp {\vark_1} i}}$
  is within the pure context $\ctx = \apctx {(\apctx \mtctx {\shift {\vark_2} \omega})} {\appp \omega
      \omega}$, enclosed in a delimiter $\rawreset$, so $\ctx$ is captured
  according to rule $\RRshift$.
  $$\reset {\app{\appp {(\shift{\vark_1}{\app i {\appp {\vark_1}
            i}})}{\shift {\vark_2} \omega}}{\appp \omega \omega}}
  \redcbv \reset{\app i {\appp {\lamp \varx {\reset {\app{\appp \varx
              {\shift {\vark_2} \omega}}{\appp \omega \omega}}}} i}}$$
  The role of reset in $\lam \varx {\reset{\inctx \ctx \varx}}$ is
  more clear after reduction of the $\beta_v$-redex $\app {\lamp \varx
    {\reset{\inctx \ctx \varx}}} i$.
  $$\reset{\app i {\appp {\lamp \varx {\reset {\app{\appp \varx {\shift {\vark_2}
                \omega}}{\appp \omega \omega}}}} i}} \redcbv \reset {\app i
    {\reset{\app {\appp i {\shift {\vark_2} \omega}}{\appp \omega \omega}}}}$$
  When the captured context $\ctx$ is reactivated, it is not \emph{merged} with
  the context $\vctx i \mtctx$, but \emph{composed} thanks to the
  reset enclosing $\ctx$. As a result, the capture triggered by $\shift
  {\vark_2} \omega$ leaves the term $i$ outside the first enclosing reset
  untouched. 
  $$\reset {\app i {\reset{\app {\appp i {\shift {\vark_2}
            \omega}}{\appp \omega \omega}}}} \redcbv \reset {\app i
    {\reset \omega}}$$ Because $\vark_2$ does not occur in $\omega$,
  the context $\apctx {(\vctx i \mtctx)} {\appp \omega \omega}$ is
  discarded when captured by $\shift {\vark_2} \omega$. Finally, we
  remove the useless delimiter $\reset {\app i {\reset \omega}}
  \redcbv \reset{\app i \omega}$ with rule $\RRreset$, and we then
  $\beta_v$-reduce and remove the last delimiter $\reset{\app i
    \omega} \redcbv \reset{\omega} \redcbv \omega$. Note that while
  the reduction strategy is call-by-value, some function arguments are
  not evaluated, like the non-terminating term $\app \omega \omega$
  in this example.  
\end{example}

There exist terms which are not values and which cannot be reduced any
further; these are called \emph{stuck terms}.
\begin{definition}
  A term $\tm$ is stuck if $\tm$ is not a value and $\tm \not
  \redcbv$.
\end{definition}
For example, the term $\inctx \ctx {\shift \vark \tm}$ is stuck
because there is no enclosing reset; the capture of $\ctx$ by the
shift operator cannot be triggered. In fact, stuck terms are easy to
characterize.
\begin{lemma}
  \label{l:stuck}
  A term $\tm$ is stuck iff $\tm = \inctx \ctx {\shift \vark {\tm'}}$
  for some $\ctx$, $k$, and $\tm'$ or $\tm = \inctx \rctx {\app \varx
    \val}$ for some $\rctx$, $\varx$, and $\val$.
\end{lemma}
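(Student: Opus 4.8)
The plan is to prove the two implications separately. For the right-to-left direction, assume $\tm=\inctx\ctx{\shift\vark{\tm'}}$ or $\tm=\inctx\rctx{\app\varx\val}$; I must check that $\tm$ is neither a value nor reducible. That $\tm$ is not a value is immediate from the grammars of pure and evaluation contexts: plugging any term into $\mtctx$ yields $\shift\vark{\tm'}$ or $\app\varx\val$, and plugging into a non-empty such context yields an application or a $\rawreset$, and none of these shapes is a variable or a $\lambda$-abstraction. For $\tm\not\redcbv$ I would induct on $\ctx$ (resp.\ $\rctx$). In the base case $\tm$ is $\shift\vark{\tm'}$ or $\app\varx\val$, and inspection of the left-hand sides of $\RRbeta$, $\RRshift$, and $\RRreset$ shows no rule can fire: a bare shift is neither an application nor a $\rawreset$, while $\app\varx\val$ is an application whose operator is a variable (so not a $\beta_v$-redex) and is not a $\rawreset$. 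In the inductive step, matching $\tm$ against the left-hand side of $\RRbeta$, $\RRshift$, or $\RRreset$ — each of the shape $\inctx{\rctx'}{r}$ — would, after identifying the outermost constructor of $\tm$, force one of: the subterm of $\tm$ occupying the hole of $\rctx'$ to be a value, impossible as that subterm is again of a stuck form; that subterm, or a value sitting beside it in $\tm$, to reduce, excluded by the induction hypothesis and by the fact that values do not reduce; or $\tm$ to be a $\rawreset$ while it is an application, or the converse. Hence $\tm\not\redcbv$.

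For the left-to-right direction I would prove, by induction on the structure of $\tm$, the slightly stronger progress statement that every term is a value, or reduces, or has one of the two stated forms; since a stuck term is by definition neither a value nor reducible, the lemma follows. Variables and $\lambda$-abstractions are values, and $\shift\vark{\tm'}=\inctx\mtctx{\shift\vark{\tm'}}$. For $\tm=\reset{\tm'}$, apply the induction hypothesis to $\tm'$: if $\tm'$ is a value, $\RRreset$ fires; if $\tm'$ reduces, so does $\tm$, since $\resetctx{\rctx}$ is an evaluation context; if $\tm'=\inctx{\ctx'}{\shift{\vark'}{\tm''}}$, then $\RRshift$ fires; and if $\tm'=\inctx{\rctx'}{\app\varx\val}$, then $\tm=\inctx{\resetctx{\rctx'}}{\app\varx\val}$ is of the second form. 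For $\tm=\app{\tm_1}{\tm_2}$, apply the induction hypothesis to $\tm_1$: if $\tm_1$ reduces, or is stuck of one of the two forms, conclude using that $\apctx{\rctx_1}{\tm_2}$ is an evaluation context and $\apctx{\ctx_1}{\tm_2}$ a pure context; otherwise $\tm_1$ is a value $\val_1$, and applying the induction hypothesis to $\tm_2$ gives the same cases, now with the pure context $\vctx{\val_1}{\ctx_2}$ and the evaluation context $\vctx{\val_1}{\rctx_2}$; finally, if $\tm_2$ is also a value $\val_2$, then $\app{\val_1}{\val_2}$ is a $\beta_v$-redex when $\val_1$ is a $\lambda$-abstraction (so $\RRbeta$ fires) and equals $\inctx\mtctx{\app\varx{\val_2}}$ when $\val_1$ is a variable $\varx$.

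The only genuinely delicate step is the non-reduction argument of the right-to-left direction, which is in essence the uniqueness of the decomposition of a term as a redex in an evaluation context. The subtle sub-case is to rule out that $\tm=\reset{\inctx{\rctx'}{\app\varx\val}}$ reduces via $\RRshift$, i.e.\ that $\inctx{\rctx'}{\app\varx\val}=\inctx{\ctx'}{\shift{\vark'}{\tm''}}$ for some pure context $\ctx'$ — and, symmetrically, that an $\inctx\ctx{\shift\vark{\tm'}}$ cannot be reparsed with an $\app\varx\val$ in evaluation position. I expect this to be the main obstacle; I would handle it by a separate induction on the context, the key structural fact being that a pure context never contains a $\rawreset$ on the path to its hole, so as soon as $\rctx'$ contributes an outermost $\rawreset$ the two shapes already differ, while in the remaining cases matching the outermost application constructor either peels one layer off both contexts (so the induction applies to strictly smaller contexts) or forces a value to equal a non-value term. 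Alternatively, one could prove a self-standing unique-decomposition lemma and obtain Lemma~\ref{l:stuck} as a corollary.
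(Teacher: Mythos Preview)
Your argument is correct and is the standard progress/unique-decomposition reasoning one expects here. Note, however, that the paper does not actually supply a proof of this lemma: it is stated as an easy characterization and left to the reader, with the companion unique-decomposition lemma stated immediately afterwards in the same spirit. So there is no ``paper's proof'' to compare against; what you wrote is exactly the routine verification the authors elide. Your closing remark that one could instead establish a self-standing unique-decomposition lemma and read off Lemma~\ref{l:stuck} as a corollary matches how the paper organizes the surrounding material.
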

We call \emph{control stuck terms} terms of the form $\inctx \ctx
{\shift \vark \tm}$ and \emph{open stuck terms} the terms of the form
$\inctx \rctx {\app \varx \val}$.
\begin{definition}
  A term $\tm$ is a normal form, if $\tm$ is a value or a stuck term.
\end{definition}

We call \emph{redexes} (ranged over by $\redex$) terms of the form
$\app{\lamp \varx \tm} \val$, $\reset {\inctx \ctx {\shift \vark
    \tm}}$, and $\reset \val$. Thanks to the following
unique-decomposition property, the reduction relation $\redcbv$ is
deterministic.
\begin{lemma}
  For all terms $\tm$, either $\tm$ is a normal form, or there exist a
  unique redex $\redex$ and a unique context $\rctx$ such that $\tm =
  \inctx \rctx \redex$.
\end{lemma}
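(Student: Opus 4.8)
The plan is to prove existence of the decomposition by structural induction on $\tm$, and then to establish, as a separate argument, both the mutual exclusivity of the two alternatives and the uniqueness of the pair $(\rctx,\redex)$. For existence, the cases where $\tm$ is a variable or a $\lambda$-abstraction are immediate, since $\tm$ is then a value; likewise $\tm = \shift\vark{\tm'}$ is immediate, since $\tm = \inctx\mtctx{\shift\vark{\tm'}}$ is a control stuck term, hence a normal form. The substance of the induction lies in the two compound cases.

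For $\tm = \app{\tm_1}{\tm_2}$, I would apply the induction hypothesis to $\tm_1$. If $\tm_1 = \inctx{\rctx_1}{\redex}$, then $\tm = \inctx{\apctx{\rctx_1}{\tm_2}}{\redex}$. If $\tm_1$ is stuck, then by Lemma~\ref{l:stuck} it is $\inctx{\ctx_1}{\shift\vark{\tm'}}$ or $\inctx{\rctx_1}{\app\varx\val}$, whence $\tm$ equals $\inctx{\apctx{\ctx_1}{\tm_2}}{\shift\vark{\tm'}}$ or $\inctx{\apctx{\rctx_1}{\tm_2}}{\app\varx\val}$ and is again stuck, hence a normal form. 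If $\tm_1$ is a value $\val_1$, I would recurse on $\tm_2$, obtaining either $\tm = \inctx{\vctx{\val_1}{\rctx_2}}{\redex}$, or $\tm$ stuck, or---when $\tm_2$ is a value $\val_2$---either $\tm = \inctx\mtctx{\app{\lamp\varx{\tm'}}{\val_2}}$ with $\app{\lamp\varx{\tm'}}{\val_2}$ a redex (if $\val_1 = \lamp\varx{\tm'}$), or $\tm = \inctx\mtctx{\app\varx{\val_2}}$ open stuck (if $\val_1 = \varx$). For $\tm = \reset{\tm'}$, applying the induction hypothesis to $\tm'$ yields: $\tm = \inctx{\resetctx{\rctx'}}{\redex}$ if $\tm' = \inctx{\rctx'}{\redex}$; the redex $\reset\val$ if $\tm'$ is a value; the redex $\reset{\inctx{\ctx'}{\shift\vark{\tm''}}}$ if $\tm'$ is a control stuck term; and the open stuck term $\inctx{\resetctx{\rctx'}}{\app\varx\val}$ if $\tm'$ is open stuck. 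Throughout, I would check that each context built is in the grammar of pure or evaluation contexts, which holds by construction.

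For mutual exclusivity, I would note that rules $\RRbeta$, $\RRshift$, $\RRreset$ are stated so that $\inctx\rctx\redex \redcbv$ for \emph{every} evaluation context $\rctx$ and redex $\redex$; hence no term of the form $\inctx\rctx\redex$ is a normal form, which precludes the overlap of the two alternatives and, in particular, shows that $\inctx{\rctx}{\redex}$ is never a value and never a stuck term. For uniqueness of $(\rctx,\redex)$, I would run a second induction on $\tm$, comparing two decompositions $\inctx{\rctx_1}{\redex_1} = \inctx{\rctx_2}{\redex_2}$: the top constructor of $\tm$ (application versus reset) dictates the outermost constructor of any non-empty $\rctx_i$, and the position of the hole within that frame is forced because $\inctx{\rctx'}{\redex'}$ is always an application or a reset term, never a value; for instance, when $\tm$ is an application $\app{\tm_1}{\tm_2}$, the hole must descend into $\tm_1$ if $\tm_1$ is not a value and into $\tm_2$ if it is. The induction hypothesis applied to the thus-determined sub-term then gives equality of the residual contexts and of the redexes.

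The step I expect to be the main obstacle is the uniqueness argument in the cases where one decomposition has $\rctx_i = \mtctx$, i.e.\ $\tm$ is literally a redex, and one must show the other decomposition agrees. The delicate sub-case is the redex $\reset{\inctx\ctx{\shift\vark{\tm'}}}$: to rule out $\rctx_{3-i} = \resetctx{\rctx'}$ one needs that $\inctx\ctx{\shift\vark{\tm'}}$, being a control stuck term and hence a normal form, admits no decomposition $\inctx{\rctx'}{\redex'}$---which is exactly the mutual-exclusivity fact established just above. The redexes $\app{\lamp\varx{\tm'}}{\val}$ and $\reset\val$ are dispatched more cheaply, using only that $\inctx{\rctx'}{\redex'}$ is not a value.
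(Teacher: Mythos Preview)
Your proof is correct and follows the standard approach for unique-decomposition lemmas in reduction semantics. The paper does not prove this lemma---it is stated as a routine fact about the operational semantics---so there is nothing to compare against beyond noting that your structural induction for existence, together with the second induction for uniqueness (leaning on the mutual-exclusivity observation that every $\inctx{\rctx}{\redex}$ reduces), is exactly the expected argument; in particular, your identification of the $\reset{\inctx\ctx{\shift\vark{\tm'}}}$ sub-case as the one requiring mutual exclusivity is on point.
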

 
Finally, we write $\clocbv$ for the transitive and reflexive closure
of $\redcbv$, and we define the evaluation relation of $\lamshift$ as
follows.
\begin{definition}
  We write $\tm \evalcbv \tm'$ if $\tm \clocbv \tm'$ and $\tm' \not\redcbv$. 
\end{definition}
The result of the evaluation of a term, if it exists, is a normal
form. If a term $\tm$ admits an infinite reduction sequence, we say it
\emph{diverges}, written $\tm \divcbv$. In the rest of the article, we
use extensively $\Omega = \app {\lamp \varx {\app \varx \varx}}{\lamp
  \varx {\app \varx \varx}}$ as an example of such a term.

\subsection{Contextual Equivalence}

In this paper, we use the same contextual equivalence as in
\cite{Biernacki-Lenglet:FOSSACS12}, where control stuck terms can be
observed. Note that this relation is a bit more discriminative than
simply observing termination, as pointed out in
\cite{Biernacki-Lenglet:FOSSACS12}.
\begin{definition}
  Let $\tmzero$, $\tmone$ be terms. We write $\tmzero \ctxequiv
  \tmone$ if for all $\cctx$ such that $\inctx \cctx \tmzero$ and
  $\inctx \cctx \tmone$ are closed, the following hold:
  \begin{itemize}
  \item $\inctx \cctx \tmzero \evalcbv \valzero$ implies $\inctx
    \cctx \tmone \evalcbv \valone$;
  \item $\inctx \cctx \tmzero \evalcbv \tmzero'$, where $\tmzero'$
    is control stuck, implies $\inctx \cctx \tmone \evalcbv \tmone'$, with
    $\tmone'$ control stuck as well;
  \end{itemize}
  and conversely for $\inctx \cctx \tmone$.
\end{definition}
We can simplify the proofs of contextual equivalence of terms by
relying on the following context lemma~\cite{Milner:TCS77} for
$\lamshift$ (for a proof see Definition 5 and Section~3.4
in~\cite{Biernacki-Lenglet:FOSSACS12}). Instead of testing terms with
(free-variables capturing) general contexts, we can simply first close
them (using closed values) and then put them within (closed)
evaluation contexts.

\begin{lemma}[Context Lemma]
  \label{l:context-lemma}
  We have $\tmzero \ctxequiv \tmone$ iff for all closed contexts $\rctx$ and for
  all substitutions $\sigma$ (mapping variables to closed values) such that
  $\tmzero\sigma$ and $\tmone\sigma$ are closed, the following hold:
  \begin{itemize}
  \item $\inctx \rctx {\tmzero\sigma} \evalcbv \valzero$ implies
    $\inctx \rctx {\tmone\sigma} \evalcbv \valone$;
  \item $\inctx \rctx {\tmzero\sigma} \evalcbv \tmzero'$, where
    $\tmzero'$ is control stuck, implies $\inctx \rctx {\tmone\sigma} \evalcbv
    \tmone'$, with $\tmone'$ control stuck as well;
  \end{itemize}
  and conversely for $\inctx \rctx {\tmone\sigma}$.
\end{lemma}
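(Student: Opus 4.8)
The plan is to prove the two implications separately; the left-to-right direction is routine and all the difficulty is in the converse. For the forward direction, assume $\tmzero \ctxequiv \tmone$ and fix a closed evaluation context $\rctx$ together with a substitution $\sigma = [\varx_1 := \val_1, \dots, \varx_n := \val_n]$ (each $\val_i$ closed, $\tmzero\sigma$ and $\tmone\sigma$ closed). I would realise this ``$\rctx$-and-$\sigma$'' test as an ordinary context test: let $\cctx$ be the context that plugs its argument under the abstractions of $\lam{\varx_1}{\cdots \lam{\varx_n}{\mtctx}}$ and then applies the result successively to $\val_1, \dots, \val_n$, the whole being placed inside $\rctx$. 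Then $\inctx\cctx\tmzero$ and $\inctx\cctx\tmone$ are closed, and $n$ uses of rule $\RRbeta$ --- legitimate since the composition of $\rctx$ with the nested application frames is again an evaluation context --- give $\inctx\cctx\tmzero \clocbv \inctx\rctx{\tmzero\sigma}$ and $\inctx\cctx\tmone \clocbv \inctx\rctx{\tmone\sigma}$. Since $\redcbv$ is deterministic, a term and each of its reducts evaluate to the same normal form, so $\inctx\cctx\tmzero$ and $\inctx\rctx{\tmzero\sigma}$ exhibit the same observations (converging to a value, converging to a control stuck term, or neither of these), and likewise on the right; instantiating $\tmzero \ctxequiv \tmone$ at $\cctx$ yields both clauses, and the ``conversely'' part is symmetric.

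For the converse, write $\mathcal R$ for the relation defined by the right-hand side. I would first record its elementary closure properties, each immediate from the definition together with the remark that a term has the same evaluation behaviour as its reducts: $\mathcal R$ is an equivalence; it is closed under $\redcbv$ in either component (a redex stays a redex under substitution and under composition with evaluation contexts, and composition of evaluation contexts is an evaluation context); it is closed under substitution by closed values (compose the substitutions); and it is closed under plugging into closed evaluation contexts (compose the contexts). The core of the argument is then the statement that $\tmzero \mathrel{\mathcal R} \tmone$ implies $\inctx\cctx\tmzero \mathrel{\mathcal R} \inctx\cctx\tmone$ for every context $\cctx$; granting it, instantiating $\mathcal R$ at the empty evaluation context and the identity substitution gives exactly $\tmzero \ctxequiv \tmone$.

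I would prove this core statement by well-founded induction on the length of a reduction realising an observation of $\inctx\cctx\tmzero$ --- treating, say, $\inctx\cctx\tmzero \clocbv \valzero$, the control stuck case and the two converses being handled in the same way --- with a nested structural induction on $\cctx$ for the zero-step situations, where $\inctx\cctx\tmzero$ is already a normal form, so that Lemma~\ref{l:stuck} (together with the grammar of values) constrains the shape of $\cctx$. In the step case one inspects where the first redex of $\inctx\cctx\tmzero$ lies relative to the occurrence of $\tmzero$: if the two are disjoint, one performs the reduction, re-expresses the result by means of the closure properties above, and applies the induction hypothesis to a strictly shorter reduction; if the occurrence of $\tmzero$ is contained in the redex --- which, since $\tmzero$ itself need not be a redex, can only happen because a free variable of $\tmzero$ that has been instantiated along the way is now applied to an argument, or a $\rawreset$ has closed over $\tmzero$ --- then, after factoring out the closed-value substitution accumulated so far, the configuration is handled directly by the definition of $\mathcal R$, possibly combined with the induction hypothesis.

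The main obstacle, and the part genuinely new with respect to the corresponding context lemma for the plain call-by-value $\lambda$-calculus, is pushing this reduction-length induction through rule $\RRshift$: when the pure context around the occurrence of $\tmzero$ is captured, that occurrence is enclosed in a fresh $\rawreset$ and, if the captured continuation variable is used several times, replicated into several copies of $\lam\varx{\reset{\inctx\ctx\varx}}$, so one must check that closedness is preserved and that each copy still faces a strictly shorter remaining reduction. A further subtlety is that the observable is not plain termination but also being control stuck, so the case analysis has to follow, through captures and substitutions, the precise shape of the normal form reached; here the unique-decomposition property and Lemma~\ref{l:stuck} are what keep the bookkeeping under control.
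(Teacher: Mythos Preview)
The paper itself gives no proof of this lemma: it defers entirely to~\cite{Biernacki-Lenglet:FOSSACS12}. That paper develops applicative bisimilarity for $\lamshift$ and proves it sound and complete w.r.t.\ contextual equivalence via Howe's method; the context lemma falls out of that development (the right-hand side above is, up to presentation, the relation shown there to coincide with applicative bisimilarity, hence with $\ctxequiv$). So there is no direct Milner-style argument in either paper to compare your sketch against; the intended route is indirect.

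Your forward direction is fine. For the converse, the direct reduction-length induction has a real gap at exactly the place you call the ``main obstacle''. When the hole of $\cctx$ lies inside the captured pure context $\ctx$ in a $\RRshift$ step, the reduct contains one copy of $\lam\varx{\reset{\inctx\ctx\varx}}$---and hence one copy of the hole---for \emph{each} occurrence of $\vark$ in the shift body. Your induction hypothesis, however, is stated for single-hole contexts, so after such a step the reduct can no longer be written as $\inctx{\cctx'}{\tmzero}$ for any one $\cctx'$, and the remark that ``each copy still faces a strictly shorter remaining reduction'' does not by itself license an application of that hypothesis. To make the argument go through you would have to strengthen the inductive statement to multi-hole contexts (all holes filled with $\tmzero$ on one side, $\tmone$ on the other), in the style of Mason--Talcott ciu-theorems, and carry that generalisation through every case; this is feasible but is a genuine addition, not bookkeeping. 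The route via applicative bisimilarity and Howe's method in~\cite{Biernacki-Lenglet:FOSSACS12} sidesteps the issue because Howe's closure is compatible with all term formers from the outset, duplication included.
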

In the rest of the paper, when proving that terms are contextually equivalent,
we implicitly use Lemma \ref{l:context-lemma}.

\section{Normal Form Bisimilarity}
\label{s:nf-bisim}

In this section, we discuss a notion of bisimulation based on the
evaluation of terms to normal forms. The difficulties are mainly in
the handling of control stuck terms and in the definition of the
relation on non-pure evaluation contexts. We propose here a first way
to deal with control stuck terms, that will be refined in the next
section. In any definitions or proofs, we say a variable is
\emph{fresh} if it does not occur free in the terms or contexts under
consideration.

\subsection{Definition}

Following Lassen's approach \cite{Lassen:LICS05}, we define a normal
form bisimulation where we relate terms by comparing the results of
their evaluation (if they exist). As we need to compare terms as well
as evaluation contexts, we extend a relation $\rel$ on terms to
contexts in the following way: we write $\rctxzero \rel \rctxone$ if
$\rctxzero = \inctx{\rctxzero'}{\reset \ctxzero}$, $\rctxone =
\inctx{\rctxone'}{\reset \ctxone}$, $\inctx {\rctxzero'} \varx \rel
\inctx {\rctxone'} \varx$, and $\reset{\inctx {\ctxzero} \varx} \rel
\reset{\inctx {\ctxone} \varx}$ for a fresh $\varx$, or if $\rctxzero =
\ctxzero$, $\rctxone = \ctxone$, and $\inctx \ctxzero \varx \rel
\inctx \ctxone \varx$ for a fresh $\varx$. The rationale behind this
definition is explained later. Following \cite{Lassen:LICS05}, we
define the application $\appval \val y$ as $\app \varx y$ if $\val =
\varx$, and as $\subst \tm \varx y$ if $\val = \lam \varx
\tm$. Finally, given a relation $\rel$ on terms, we write $\inv\rel$
for its inverse, and we inductively define a relation $\nf\rel$ on
normal forms as follows:
\begin{mathpar}
  \inferrule{\appval \valzero \varx \rel \appval \valone \varx \\ \varx \textrm{
      fresh}}
  {\valzero \nf\rel \valone}
  \and
  \hspace{-1em}\inferrule{\ctxzero \rel \ctxone \\ \reset \tmzero \rel \reset \tmone}
  {\inctx \ctxzero {\shift \vark \tmzero} \nf\rel \inctx \ctxone {\shift \vark
      \tmone}}
  \and
  \hspace{-1em}\inferrule{\rctxzero \rel \rctxone \\ \valzero \nf\rel \valone}
  {\inctx \rctxzero {\app \varx \valzero} \nf\rel \inctx \rctxone {\app \varx
      \valone}}
\end{mathpar}

\begin{definition}
  \label{d:nf-bisim}
  A relation $\rel$ on terms is a normal form simulation if $\tmzero \rel
  \tmone$ and $\tmzero \evalcbv \tmzero'$ implies $\tmone \evalcbv \tmone'$ and
  $\tmzero' \nf\rel \tmone'$. A relation $\rel$ is a normal form bisimulation if
  both $\rel$ and $\inv\rel$ are normal form simulations. Normal form
  bisimilarity, written $\bisim$, is the largest normal form bisimulation.
\end{definition}

Henceforth, we often drop the ``normal form'' attribute when talking about
bisimulations for brevity. Two terms $\tmzero$ and $\tmone$ are bisimilar if
their evaluations lead to matching normal forms (\eg if $\tmzero$ evaluates to a
control stuck term, then so does $\tmone$) with bisimilar sub-components. We now
detail the different cases.

Normal form bisimilarity does not distinguish between evaluation to a variable
and evaluation to a $\lambda$-abstraction. Instead, we relate terms evaluating
to any values $\valzero$ and $\valone$ by comparing $\appval \valzero \varx$ and
$\appval \valone \varx$, where $\varx$ is fresh. As originally pointed out by
Lassen \cite{Lassen:LICS05}, this is necessary for the bisimilarity to be sound
w.r.t. $\eta$-expansion; otherwise it would distinguish $\eta$-equivalent terms
such as $\lam y {\app \varx y}$ and $\varx$. Using $\rawappval$ instead of
regular application avoids the introduction of unnecessary $\beta$-redexes,
which could reveal themselves problematic in proofs.

For a control stuck term $\inctx \ctxzero {\shift \vark {\tmzero}}$ to
be executed, it has to be plugged into an evaluation context
surrounded by a reset; by doing so, we obtain a term of the form
$\reset{\subst {\tmzero} \vark {\lam \varx {\reset {\inctx {\ctxzero'}
        \varx}}}}$ for some context $\ctxzero'$. Notice that the
resulting term is within a reset; similarly, when comparing $\inctx
\ctxzero {\shift \vark {\tmzero}}$ and $\inctx \ctxone {\shift \vark
  {\tmone}}$, we ask for the shift bodies $\tmzero$ and $\tmone$ to be
related when surrounded by a reset. We also compare $\ctxzero$ and
$\ctxone$, which amounts to executing $\inctx \ctxzero \varx$ and
$\inctx \ctxone \varx$ for a fresh~$\varx$, since the two contexts are
pure. Comparing $\tmzero'$ and $\tmone'$ without reset would be too
discriminating, as it would distinguish the two contextually
equivalent terms $\shift \vark {\reset \tm}$ and $\shift \vark
\tm$.\footnote{The equivalence $\shift \vark {\reset \tm} \equiv
  \shift \vark \tm$ comes from Kameyama and Hasegawa's axiomatization
  of shift and reset \cite{Kameyama-Hasegawa:ICFP03} and has been
  proved using applicative bisimilarity
  in~\cite{Biernacki-Lenglet:FOSSACS12}.} Indeed, without reset, we
would have to relate $\reset \tm$ and $\tm$, which are not equivalent
in general (take $\tm=\shift {\vark'} \val$ for some $\val$), while
Definition~\ref{d:nf-bisim} requires $\reset {\reset \tm}$ and $\reset
\tm$ to be related (which holds for all $\tm$; see
Example~\ref{e:reset-reset}).

Two normal forms $\inctx \rctxzero {\app \varx \valzero}$ and $\inctx \rctxone
{\app \varx \valone}$ are bisimilar if the values $\valzero$ and $\valone$ as
well as the contexts $\rctxzero$ and $\rctxone$ are related. We have to be
careful when defining bisimilarity on (possibly non pure) evaluation
contexts. We cannot simply relate $\rctxzero$ and $\rctxone$ by executing
$\inctx \rctxzero y$ and $\inctx \rctxone y$ for a fresh $y$. Such a definition
would equate the contexts $\mtctx$ and $\reset \mtctx$, which in turn would
relate the terms $\app \varx \val$ and $\reset {\app \varx \val}$, which are not
contextually equivalent: they are distinguished by the context $\app {\lamp
  \varx \mtctx}{\lam y {\shift \vark \Omega}}$. A context containing a reset
enclosing the hole should be related only to contexts with the same
property. However, we do not want to precisely count the number of delimiters
around the hole; doing so would distinguish $\reset \mtctx$ and $\reset {\reset
  \mtctx}$, and therefore it would discriminate the contextually equivalent
terms $\reset{\app \varx \val}$ and $\reset{\reset {\app \varx \val}}$. Hence,
the extension of bisimulation to contexts (given before Definition
\ref{d:nf-bisim}) checks that if one of the contexts contains a reset
surrounding the hole, then so does the other; then it compares the contexts
beyond the first enclosing delimiter by simply evaluating them using a fresh
variable.  As a result, it rightfully distinguishes $\mtctx$ and $\reset
\mtctx$, but it relates $\reset \mtctx$ and $\reset {\reset \mtctx}$.

\begin{example}
  \label{e:reset-reset}
  We prove that $\reset \tm \bisim \reset{\reset \tm}$ by showing that
  $\rel = \{ (\reset \tm, \reset {\reset \tm}) \} \cup \bisim$ is a
  bisimulation. If $\reset \tm \evalcbv \val$, then $\reset {\reset
    \tm} \evalcbv \val$, and $\val \nf\bisim \val$ holds. The case
  $\reset \tm \evalcbv \inctx \ctx {\shift \vark {\tm'}}$ is not
  possible; one can check that if $\reset \tm \redcbv \tm'$, then
  $\tm'$ is a value, or can be written $\reset{\tm''}$ for some
  $\tm''$ (and the same holds for $\reset \tm \evalcbv \tm'$).

  If $\reset \tm \evalcbv \inctx \rctx {\app \varx \val}$, then there
  exists $\rctx'$ such that $\tm \evalcbv \inctx {\rctx'}{\app \varx
    \val}$ and $\rctx = \reset {\rctx'}$. Therefore, we have $\reset
  {\reset \tm} \evalcbv \reset{\reset{\inctx {\rctx'}{\app \varx
        \val}}}$. We have $\val \nf\bisim \val$, and we have to prove
  that $\reset {\rctx'} \rel \reset{\reset {\rctx'}}$ to conclude. If
  $\rctx'$ is a pure context $\ctx$, then we have to prove
  $\reset{\inctx \ctx y} \rel \reset{\inctx \ctx y}$ and $y \rel
  \reset y$ for a fresh $y$, which are both true because $\bisim
  \subseteq \rel$. If $\rctx' = \inctx {\rctx''}{\reset {\ctx}}$, then
  given a fresh $y$, we have to prove $\reset {\inctx {\rctx''} y}
  \rel \reset{\reset {\inctx {\rctx''} y}}$ (clear by the definition
  of $\rel$), and $\reset{\inctx \ctx y} \rel \reset{\inctx \ctx y}$
  (true because $\bisim \subseteq \rel$).
  
  Similarly, it is easy to check that the evaluations of $\reset{\reset \tm}$ are
  matched by $\reset \tm$.
\end{example}

\begin{example}
  \label{e:fixed-point}
  In \cite{Danvy-Filinski:DIKU89}, the authors propose variants of
  Curry's and Turing's call-by-value fixed point combinators using
  shift and reset. Let $\theta = \lam {\varx y}{\app y {\lamp z {\app
        {\app {\app \varx \varx} y} z}}}$. We prove that Turing's
  combinator $\tmzero = \app \theta \theta$ is bisimilar to its shift
  and reset variant $\tmone = \reset {\app \theta {\shift \vark {\app
        \vark \vark}}}$. We build the candidate relation $\rel$
  incrementally, starting from $(\tmzero, \tmone)$. Evaluating
  $\tmzero$ and $\tmone$, we obtain $\tmzero \evalcbv \lam y {\app y
    {\lamp z {\app {\app {\app \theta \theta} y} z}}}=\valzero$ and
  $\tmone \evalcbv \lam y {\app y {\lamp z {\app {\app {\app {\lamp
              \varx {\reset {\app \theta \varx}}}{\lamp \varx {\reset
                {\app \theta \varx}}}} y} z}}}=\valone$; we have to
  add $(\appval \valzero y, \appval \valone y)$ (for a fresh $y$) to
  $\rel$. To relate these terms, we must add $(\appval {\valzero'} z,
  \appval {\valone'} z)$ and $(z, z)$ for a fresh~$z$ to $\rel$, where
  $\valzero' = \lam z {\app {\app {\app \theta \theta} y} z}$ and
  $\valone' = \lam z {\app {\app {\app {\lamp \varx {\reset {\app
              \theta \varx}}}{\lamp \varx {\reset {\app \theta
              \varx}}}} y} z}$. Evaluating $\appval{\valzero'} z$ and
  $\appval{\valone'} z$, we obtain respectively $\app {\app y
    {\valzero'}} z$ and $\app {\app y {\valone'}} z$; to relate these
  two normal forms, we just need to add $(\app \varx z, \app \varx z)$
  (for a fresh $\varx$) to $\rel$, since we already have $\valzero'
  \nf\rel \valone '$. One can check that the constructed relation
  $\rel$ is a normal form bisimulation.

  In contrast, Curry's combinator $\tmzero' = \lam \varx
  {\app{\delta_\varx}{\delta_\varx}}$, where $\delta_\varx = \lam y {\app \varx
    {\lamp z {\app {\app y y} z}}}$, is not bisimilar to its delimited-control
  variant $\tmone'=\lam \varx {\reset {\app {\delta_\varx}{\shift \vark {\app
          \vark \vark}}}}$. Indeed, evaluating the bodies of the two values, we
  obtain respectively $\app \varx {\lamp z {\app {\app
        {\delta_\varx}{\delta_\varx}} z}}$ and $\reset {\reset {\app \varx
      {\lamp z {\app {\app {\lamp y {\reset{\app {\delta_\varx} y}}}{\lamp y
              {\reset{\app {\delta_\varx} y}}}} z}}}}$, and these open stuck
  terms are not bisimilar, because $\mtctx \not\bisim \reset{\reset \mtctx}$. In
  fact, $\tmzero'$ and $\tmone'$ are distinguished by the context $\app \mtctx
  {\lam \varx {\shift \vark \Omega}}$. Finally, we can prove that the two
  original combinators $\app \theta \theta$ and $\lam \varx
  {\app{\delta_\varx}{\delta_\varx}}$ are bisimilar, using the same bisimulation
  as in \cite{Lassen:LICS05}.
\end{example}

\subsection{Soundness and Completeness}
\label{ss:soundness}

Usual congruence proofs for normal form bisimilarities include direct
proofs, where a context and/or substitutive closure of the
bisimilarity is proved to be itself a
bisimulation~\cite{Lassen:MFPS99,Lassen:MFPS05,Stoevring-Lassen:POPL07},
and proofs based on continuation-passing style (CPS) translations
\cite{Lassen:LICS05,Lassen:LICS06}.  The CPS approach consists in
proving a CPS-based correspondence between the bisimilarity $\rel_1$
we want to prove sound and a relation $\rel_2$ that we already know 
is a congruence. Because CPS translations are usually themselves
compatible, we can then conclude that $\rel_1$ is a congruence. For
example, for the $\lambda$-calculus, Lassen proved a
CPS-correspondence between the eager normal form bisimilarity and the
B{\"o}hm trees equivalence \cite{Lassen:LICS05}.

Because shift and reset have been originally defined in terms of CPS
\cite{Danvy-Filinski:LFP90}, one can expect the CPS approach to be
successful. However, the CPS translation of shift and reset assumes
that $\lamshift$ terms are executed within an outermost reset, and
therefore they cannot evaluate to a control stuck term. For the normal
form bisimilarity to be sound w.r.t. CPS, we would have to restrict
its definition to terms of the form $\reset \tm$. This does not seem
possible while keeping Definition \ref{d:nf-bisim} without
quantification over contexts. For example, to relate values $\valzero$
and $\valone$, we would have to execute $\appval \valzero \varx$ and
$\appval \valone \varx$ (where $\varx$ is fresh) under reset. However,
requiring simply $\reset{\appval \valzero \varx}$ and $\reset{\appval
  \valone \varx}$ to be related would be unsound; such a definition
would relate $\lam y {\shift \vark {\app \vark y}}$ and $\lam y
{\shift \vark {\app {\lamp z z} y}}$, which can be distinguished by
the context $\reset {\app \mtctx \app {\lamp z z} \Omega}$. To be
sound, we would have to require $\reset{\inctx \ctx {\appval \valzero
    \varx}}$ to be related to $\reset{\inctx \ctx {\appval \valone
    \varx}}$ \emph{for every $\ctx$}; we then introduce a
quantification over contexts that we want to avoid in the first
place. Because normal forms may contain control stuck terms as
sub-terms, normal form bisimilarity has to be able to handle them,
and, therefore, it cannot be restricted to terms of the form $\reset
\tm$ only.

Since CPS cannot help us in proving congruence, we follow a more
direct approach, by relying on a context closure. Given a relation
$\rel$, we define its substitutive, reflexive, and context closure
$\closc\rel$ by the rules of Fig. \ref{f:closc}. The main lemma of the
congruence proof is then as follows:

 \begin{figure}
  \begin{mathpar}
    \inferrule{ }
    {\tm \closc\rel \tm}
    \and
    \inferrule{\tmzero \rel \tmone}
    {\tmzero \closc\rel \tmone}
    \and
    \inferrule{\tmzero \closc\rel \tmone \\ \valzero \nf{\closc\rel} \valone}
    {\subst \tmzero \varx \valzero \closc\rel \subst \tmone \varx \valone}
    \and
    \inferrule{\tmzero \closc\rel \tmone \\ \rctxzero \closc\rel \rctxone}
    {\inctx \rctxzero \tmzero \closc\rel \inctx \rctxone \tmone}
    \and
    \inferrule{\tmzero \closc\rel \tmone}
    {\lam \varx \tmzero \closc\rel \lam \varx \tmone}
    \and
    \inferrule{\tmzero \closc\rel \tmone}
    {\shift \vark \tmzero \closc\rel \shift \vark \tmone}
  \end{mathpar}
  \caption{Substitutive, reflexive, and context closure of a relation $\rel$}
  \label{f:closc}
\end{figure}

\begin{lemma}
  \label{l:main-lemma-bisim}
  If $\rel$ is a normal form bisimulation, then so is $\closc\rel$.
\end{lemma}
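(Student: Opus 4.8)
\noindent The plan is to prove that $\closc\rel$ is a normal form bisimulation by showing it is a normal form \emph{simulation} and then appealing to symmetry. The rules of Fig.~\ref{f:closc}, the operator $\nf{\cdot}$, and the extension of a relation to evaluation contexts all commute with relational inverse, so $\inv{(\closc\rel)} = \closc{(\inv\rel)}$; since $\inv\rel$ is a normal form bisimulation whenever $\rel$ is, it suffices to establish once that the closure of a bisimulation is a simulation. Concretely, the goal becomes: if $\tmzero \closc\rel \tmone$ and $\tmzero \evalcbv \tmzero'$, then $\tmone \evalcbv \tmone'$ for some $\tmone'$ with $\tmzero' \nf{\closc\rel} \tmone'$.

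First I would record the structural properties of $\closc\rel$ that can essentially be read off from Fig.~\ref{f:closc}: it is reflexive, contains $\rel$, is monotone (hence $\nf\rel \subseteq \nf{\closc\rel}$), is a congruence, is closed under plugging into evaluation contexts, and substituting $\nf{\closc\rel}$-related values preserves it. I also need a context-composition lemma --- that $\closc\rel$, extended to evaluation contexts as before Definition~\ref{d:nf-bisim}, is closed under context composition --- proved by a short case analysis on whether the outer context carries a reset around the hole. All of this follows by straightforward inductions on derivations.

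The core is an induction on the length of the reduction $\tmzero \evalcbv \tmzero'$, with a subsidiary induction on the derivation of $\tmzero \closc\rel \tmone$ for the cases that do not shorten the reduction. The base cases are immediate: the $\rel$-rule uses that $\rel$ is a simulation together with $\rel \subseteq \closc\rel$ and $\nf\rel \subseteq \nf{\closc\rel}$, and reflexivity uses reflexivity of $\closc\rel$ and $\nf{\closc\rel}$. The congruence cases for $\lam\varx{\cdot}$ and $\shift\vark{\cdot}$ are easy because both related terms are already normal forms (a value, resp.\ a control stuck term with empty pure context), so no reduction occurs and the $\nf{\closc\rel}$-relation follows from the premise --- $\subst{\tmzero}\varx y \closc\rel \subst{\tmone}\varx y$ for abstractions, and $\mtctx \closc\rel \mtctx$ together with $\reset{\tmzero} \closc\rel \reset{\tmone}$ for shifts. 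The essential cases are the substitution rule, $\subst{\tmzero}\varx\valzero \closc\rel \subst{\tmone}\varx\valone$, and the evaluation-context rule, $\inctx\rctxzero{\tmzero} \closc\rel \inctx\rctxone{\tmone}$. For the latter I would reduce $\tmzero$ to its normal form (this reduction is an initial segment of that of $\inctx\rctxzero{\tmzero}$), apply the subsidiary induction to the strictly smaller derivation $\tmzero \closc\rel \tmone$ to obtain a matching normal form of $\tmone$, and then analyse how $\rctxzero$ acts on the normal form reached: a value triggers $\RRreset$, $\RRbeta$, or nothing; a control stuck term triggers a $\RRshift$-capture or nothing, depending on whether $\rctxzero$ carries a reset around the hole (which, by the definition of the extension to contexts, is mirrored on the $\rctxone$ side); an open stuck term triggers nothing. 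In each sub-case I re-establish a $\closc\rel$-relation between the two resulting configurations and, having made at least one step, close by the main induction hypothesis --- or else I have reached matching normal forms directly. The substitution rule is treated analogously, by tracking the reductions of $\subst{\tmzero}\varx\valzero$ against those of $\tmzero$: they proceed in lock-step until $\tmzero$ reaches a normal form, after which substituting $\valzero$ may enable one further step (e.g.\ a $\RRbeta$-step when $\tmzero$ is an open stuck term with head variable $\varx$ and $\valzero$ an abstraction).

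The main obstacle, I expect, is in these last two cases, in two respects. The first is the $\RRshift$ sub-case of the evaluation-context rule: the firing redex straddles the boundary between $\rctxzero$ and the normal form reached by $\tmzero$, the captured pure context being the concatenation of the pure part of $\rctxzero$ below its innermost enclosing reset with the pure context inside that normal form; after reduction this concatenation is reified as an abstraction $\lam\varx{\reset{\inctx\ctx\varx}}$ and substituted for the captured variable, so I must show that the two reified continuations remain related (using context composition and congruence) and that the resulting substitutions into the two shift bodies remain $\closc\rel$-related. The second, and more delicate, is that $\nf{\closc\rel}$ on values is defined through $\rawappval$, whereas a value produced by a term is then consumed by ordinary application (when plugged into a context) or by ordinary substitution; consequently a $\nf{\closc\rel}$-related pair of values cannot simply be propagated, because spurious top-level $\RRbeta$-redexes appear --- compare $\app\valzero\varx$ with $\appval\valzero\varx$ --- precisely when a term $\eta$-expands to a variable that is subsequently replaced by an abstraction. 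Reconciling this seems to require an auxiliary lemma absorbing such administrative reductions into $\closc\rel$, equivalently a strengthening of the induction invariant so that it is closed under the $\RRbeta$- and $\RRreset$-steps introduced by substitution and context-plugging. Using $\rawappval$ in the substitution rule of Fig.~\ref{f:closc} keeps this friction minimal, following Lassen's treatment of the plain $\lambda$-calculus~\cite{Lassen:LICS05}, but the extra resets of delimited control add layers that must be threaded through carefully; getting this bookkeeping right while keeping the two inductions well-founded is the part of the argument that demands the most attention.
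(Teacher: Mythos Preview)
Your proposal is correct and follows essentially the same approach as the paper: a lexicographic induction on the length of the evaluation and the derivation of $\tmzero \closc\rel \tmone$, with the substitution and evaluation-context rules as the substantive cases and the $\rawappval$-versus-application mismatch handled by case analysis on whether the related values are variables or abstractions. The paper organizes the evaluation-context case by splitting on the two clauses of the context extension (pure $\ctx$ versus $\inctx{\rctx'}{\reset\ctx}$) rather than isolating a separate context-composition lemma, and it resolves the administrative-redex issue by direct case analysis inside the induction (invoking the hypothesis at $m=0$ when one value is a variable and the other an abstraction) rather than via an auxiliary absorption lemma, but these are matters of presentation rather than substance.
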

More precisely, we prove that if $\tmzero \closc\rel \tmone$ and $\tmzero$
evaluates to some normal form $\tmzero'$ in $m$ steps, then $\tmone$ evaluates
to a normal form $\tmone'$ such that $\tmzero' \nf{\closc\rel} \tmone'$. The proof is
by nested induction on $m$ and on the definition of $\closc\rel$; it can be
found in Appendix~\ref{a:soundness}. Congruence of $\bisim$ then follows
immediately.
\begin{corollary}
  \label{c:congruence}
  The relation $\bisim$ is a congruence
\end{corollary}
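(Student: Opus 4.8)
The plan is to obtain the result essentially for free from Lemma~\ref{l:main-lemma-bisim} and the maximality of $\bisim$. First I would instantiate Lemma~\ref{l:main-lemma-bisim} with $\rel := \bisim$, which gives that $\closc\bisim$ is a normal form bisimulation. Since $\bisim$ is by definition the largest normal form bisimulation, $\closc\bisim \subseteq \bisim$; the reverse inclusion $\bisim \subseteq \closc\bisim$ holds by the second rule of Fig.~\ref{f:closc}. Hence $\closc\bisim = \bisim$, so $\bisim$ inherits all the closure properties built into $\closc{(\cdot)}$: it is reflexive, it is preserved by $\lam\varx{\cdot}$ and by $\shift\vark{\cdot}$, it is preserved by plugging into an evaluation context whose own sub-terms are $\bisim$-related in the sense of the context extension, and it is preserved by substitution of $\bisim$-related values.

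From these facts one reads off compatibility with every production of the grammar of general contexts $\cctx$ --- that is, $\tmzero \bisim \tmone$ implies $\inctx\cctx\tmzero \bisim \inctx\cctx\tmone$ for all $\cctx$ --- by a straightforward structural induction on $\cctx$. The constructors $\lam\varx{\cdot}$ and $\shift\vark{\cdot}$ are handled by the corresponding rules; $\resetctx{\cdot}$ and the function-position case $\apctx{\cdot}\tm$ follow from the evaluation-context rule applied to the pure contexts $\resetctx\mtctx$ and $\apctx\mtctx\tm$ (with reflexivity discharging the side condition on the context); and the argument-position case $\vctx\tm\cctx$ of an arbitrary, possibly non-value, term $\tm$ is obtained by first routing the $\bisim$-related pair through the function position of a fresh variable --- where it sits inside a genuine evaluation context --- and then a further application of the evaluation-context rule, again with reflexivity. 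To get congruence in the sense of a \emph{compatible equivalence}, I would also note that $\bisim$ is symmetric directly from Definition~\ref{d:nf-bisim} (if $\rel$ is a bisimulation so is $\inv\rel$, hence $\inv\bisim$ is a bisimulation and maximality forces $\inv\bisim = \bisim$), and transitive because the relational composition of two normal form bisimulations is again a normal form bisimulation; the one point to check there is that $\nf{(\cdot)}$ distributes over composition appropriately, which is an easy induction on the derivation of the composed normal-form relation, using determinism of $\redcbv$ so that the intermediate term has a single normal form that both halves of the composition can share.

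Since Lemma~\ref{l:main-lemma-bisim} already does the real work, I do not expect a genuine obstacle here. The only thing needing a little care is the bookkeeping of the second paragraph: checking that the particular shape of $\closc{(\cdot)}$ in Fig.~\ref{f:closc} --- phrased with evaluation contexts plus the explicit $\lambda$- and shift-rules, rather than with all of $\cctx$ at once --- really does entail compatibility with each production of $\cctx$, in particular the argument-position case, and, if one insists on the full equivalence-plus-compatibility reading of congruence, the routine composition argument for transitivity.
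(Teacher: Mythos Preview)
Your proposal is correct and matches the paper's approach: the paper states only that congruence ``follows immediately'' from Lemma~\ref{l:main-lemma-bisim}, and your first paragraph spells out exactly the intended argument (instantiate with $\rel:=\bisim$, use maximality to get $\closc\bisim=\bisim$). Your second paragraph's derivation of compatibility with the argument-position constructor $\vctx\tm\cctx$ from the evaluation-context rule is the right extra bookkeeping---though note your phrasing is slightly off: the fresh variable goes in \emph{function} position and the $\bisim$-related terms in \emph{argument} position of $\vctx y\mtctx$, which is a pure context because $y$ is a value; then $\apctx\mtctx{\tmzero}\closc\bisim\apctx\mtctx{\tmone}$ follows, and a second application of the context rule with $\tm$ in the hole finishes the case.
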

We can then easily prove that $\bisim$ is sound w.r.t. contextual equivalence.
\begin{theorem}
  We have $\bisim \, \subseteq \, \ctxequiv$.
\end{theorem}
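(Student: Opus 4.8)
The plan is to obtain the inclusion as an immediate consequence of congruence (Corollary~\ref{c:congruence}), so that the quantification over contexts in the definition of $\ctxequiv$ is discharged for free and no appeal to the Context Lemma is needed. Before doing so I would record two preliminary observations. The first is that $\bisim$ is symmetric: if $\rel$ is a normal form bisimulation then so is $\inv\rel$, hence $\inv\bisim \subseteq \bisim$ by maximality of $\bisim$, and applying the inverse once more gives $\bisim = \inv\bisim$; consequently it suffices to check the two ``forward'' clauses of $\ctxequiv$ for $\bisim$, the converse clauses then following by symmetry. The second is that $\nf\bisim$ relates normal forms of the same kind only: each of the three rules defining $\nf\rel$ has a conclusion in which both sides are values, or both are control stuck terms $\inctx \ctx {\shift \vark \tm}$, or both are open stuck terms $\inctx \rctx {\app \varx \val}$. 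Since these three classes of normal forms are pairwise syntactically disjoint --- a value is never stuck, and the characterization of stuck terms in Lemma~\ref{l:stuck} separates control stuck from open stuck terms --- $\tmzero' \nf\bisim \tmone'$ forces $\tmone'$ to be a value whenever $\tmzero'$ is one, and control stuck whenever $\tmzero'$ is.

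With these in hand the main argument is short. Let $\tmzero \bisim \tmone$ and let $\cctx$ be any context such that $\inctx \cctx \tmzero$ and $\inctx \cctx \tmone$ are closed. By Corollary~\ref{c:congruence}, $\inctx \cctx \tmzero \bisim \inctx \cctx \tmone$. Suppose $\inctx \cctx \tmzero \evalcbv \valzero$ for some value $\valzero$. Since $\bisim$ is a bisimulation (Definition~\ref{d:nf-bisim}), $\inctx \cctx \tmone \evalcbv \tmone'$ for some normal form $\tmone'$ with $\valzero \nf\bisim \tmone'$, and by the second observation $\tmone'$ is a value; this is the first clause of $\ctxequiv$. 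Likewise, if $\inctx \cctx \tmzero \evalcbv \tmzero'$ with $\tmzero'$ control stuck, then $\inctx \cctx \tmone \evalcbv \tmone'$ with $\tmzero' \nf\bisim \tmone'$, and so $\tmone'$ is control stuck, which is the second clause. The two clauses obtained by swapping the roles of $\tmzero$ and $\tmone$ follow from $\bisim = \inv\bisim$. Hence $\tmzero \ctxequiv \tmone$, and since $\tmzero$ and $\tmone$ were arbitrary, $\bisim \subseteq \ctxequiv$.

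Essentially all of the work is concentrated in the congruence of $\bisim$, \ie in Lemma~\ref{l:main-lemma-bisim}, which is the real difficulty and which we take as given here; the argument above only assembles that fact with the elementary observations on $\nf\bisim$. The single point that still asks for a little care is the disjointness claim --- that $\nf\bisim$ cannot relate, for instance, a value to a control stuck term --- but this is immediate from the shapes of the three defining rules together with Lemma~\ref{l:stuck}, and brings in no new idea.
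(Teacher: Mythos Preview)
Your argument is correct and follows the same route the paper takes: the theorem is stated right after Corollary~\ref{c:congruence} with the remark that soundness now follows ``easily,'' and your write-up is precisely the elaboration of that remark---congruence transports $\bisim$ through the context $\cctx$, and the shape of the three $\nf\rel$-rules forces the resulting normal forms to be of the same kind. The only cosmetic difference is that the paper announces it will implicitly use the Context Lemma when establishing contextual equivalences, whereas you bypass it and work directly with general contexts; both are fine here, since full congruence already subsumes what the Context Lemma would provide.
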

The following counter-example shows that the inclusion is in fact strict; normal
form bisimilarity is not complete.
\begin{proposition}
  \label{p:cex-dupl}
  Let $i=\lam y y$. We have $\reset{\app {\reset {\app \varx i}}{\shift \vark
      i}} \ctxequiv \reset{\app {\reset {\app \varx i}}{\appp{\reset {\app \varx
          i}}{\shift \vark i}}}$, but $\reset{\app {\reset {\app \varx
        i}}{\shift \vark i}} \not\bisim \reset{\app {\reset {\app \varx
        i}}{\appp{\reset {\app \varx i}}{\shift \vark i}}}$.
\end{proposition}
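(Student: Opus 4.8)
Write $\tmzero = \reset{\app{\reset{\app\varx i}}{\shift\vark i}}$ and $\tmone = \reset{\app{\reset{\app\varx i}}{\appp{\reset{\app\varx i}}{\shift\vark i}}}$. The plan is to prove the two halves separately, after one preliminary observation: both $\tmzero$ and $\tmone$ are already open stuck terms, because in each the leftmost subterm $\reset{\app\varx i}$ is itself open stuck (the free variable $\varx$ is applied to the value $i$). Concretely, $\tmzero = \inctx{\rctxzero}{\app\varx i}$ and $\tmone = \inctx{\rctxone}{\app\varx i}$ for $\rctxzero = \reset{\app{\reset\mtctx}{\shift\vark i}}$ and $\rctxone = \reset{\app{\reset\mtctx}{\appp{\reset{\app\varx i}}{\shift\vark i}}}$, and this evaluation-context decomposition is unique. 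So both terms are normal forms reducing only to themselves, and the rule of $\nf\bisim$ relating open stuck terms shows that $\tmzero \bisim \tmone$ would entail $\rctxzero \bisim \rctxone$ (and $i \nf\bisim i$, which holds trivially since $\bisim$ is a congruence).

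To disprove $\rctxzero \bisim \rctxone$: neither context is pure and in each the hole is directly enclosed by a reset, so the only applicable clause of the context extension of $\bisim$ is the first one, and it forces the inner pure part of the decomposition to be $\mtctx$ on both sides (using the outer reset instead would make the inner part contain a reset, which is disallowed). Hence $\rctxzero \bisim \rctxone$ would imply, for a fresh variable $z$, that $\reset{\app z{\shift\vark i}} \bisim \reset{\app z{\appp{\reset{\app\varx i}}{\shift\vark i}}}$. But the left term evaluates to the value $i$ --- the captured context $\vctx z\mtctx$ is discarded because $\vark \notin \fv i$, and then $\RRreset$ removes the delimiter --- whereas the right term is open stuck on $\varx$ (the operator $\reset{\app\varx i}$ of its operand blocks evaluation) and reduces only to itself. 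Since $\nf\bisim$ never relates a value to an open stuck term, this is a contradiction, so $\rctxzero \not\bisim \rctxone$ and therefore $\tmzero \not\bisim \tmone$.

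For contextual equivalence I would use the Context Lemma (Lemma~\ref{l:context-lemma}) and compare $\inctx\rctx{\tmzero\sigma}$ with $\inctx\rctx{\tmone\sigma}$ for an arbitrary closed evaluation context $\rctx$ and substitution $\sigma$ sending $\varx$ to a closed value $\val$, so that $\tmzero\sigma = \reset{\app{\reset{\app\val i}}{\shift\vark i}}$ and $\tmone\sigma = \reset{\app{\reset{\app\val i}}{\appp{\reset{\app\val i}}{\shift\vark i}}}$. The key point is the behavior of $\reset{\app\val i}$: since $\val$ is a closed value, $\app\val i$ $\beta$-reduces to a closed term, and a closed term of the form $\reset\tm$ can only diverge or evaluate to a closed value (by Lemma~\ref{l:stuck} a closed term reduces only to values and control stuck terms, and a reset around a control stuck term is a redex). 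If $\reset{\app\val i}$ diverges, then so do $\tmzero\sigma$ and $\tmone\sigma$ --- it sits in the operator position of the outermost application in each --- hence $\inctx\rctx{\tmzero\sigma}$ and $\inctx\rctx{\tmone\sigma}$ both diverge. Otherwise $\reset{\app\val i} \evalcbv \valzero$ for a closed value $\valzero$; then in each of $\tmzero\sigma$, $\tmone\sigma$ the occurrences of $\reset{\app\val i}$ reduce to $\valzero$, and the unique remaining redex is a shift whose captured context ($\vctx\valzero\mtctx$ for $\tmzero\sigma$, $\vctx\valzero{\vctx\valzero\mtctx}$ for $\tmone\sigma$) is thrown away because $\vark \notin \fv i$, leaving $\reset i$, which $\RRreset$-reduces to $i$. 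So $\tmzero\sigma \evalcbv i$ and $\tmone\sigma \evalcbv i$, and because each begins with a reset these reduction sequences take place unchanged inside $\rctx$ (a composition of evaluation contexts is again an evaluation context); hence $\inctx\rctx{\tmzero\sigma}$ and $\inctx\rctx{\tmone\sigma}$ both reduce to the single term $\inctx\rctx i$ and have identical behavior. This gives $\tmzero \ctxequiv \tmone$.

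I expect the \emph{self-delimitation} argument in the last part to be the main obstacle: one must make precise that, because $\tmzero\sigma$, $\tmone\sigma$, and $\reset{\app\val i}$ each begin with a reset, their internal reductions --- the $\RRshift$ steps in particular --- never reach out into the enclosing context, so that the dichotomy established for $\reset{\app\val i}$ in the empty context lifts to every $\rctx$. Given that, the remainder --- using unique decomposition to check that no stray redex interferes and that the two shift/reset sequences really normalize to $i$ --- is routine bookkeeping.
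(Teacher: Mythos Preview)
Your proposal is correct and follows essentially the same route as the paper: for non-bisimilarity you unfold the open-stuck decomposition and the context extension of $\bisim$ to reduce the question to $\reset{\app z{\shift\vark i}} \bisim \reset{\app z{\appp{\reset{\app\varx i}}{\shift\vark i}}}$, which fails because the first evaluates to $i$ while the second is an open stuck term; for contextual equivalence you invoke the Context Lemma and case-split on whether $\reset{\app\val i}$ terminates. Your self-delimitation worry is harmless---closure of $\redcbv$ under evaluation contexts (immediate from the rule format) is all that is needed to lift the reductions $\tmzero\sigma \clocbv i$ and $\tmone\sigma \clocbv i$ into any $\rctx$, and the paper takes this for granted.
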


\begin{proof}
  Replacing $\varx$ by a closed value $\val$, we get $\reset{\app {\reset
      {\app \val i}}{\shift \vark i}}$ and $\reset{\app {\reset {\app \val
        i}}{\appp{\reset {\app \val i}}{\shift \vark i}}}$, which both evaluate
  to $i$ if the evaluation of $\reset {\app \val i}$ terminates (otherwise, they
  both diverge). With this observation, it is easy to prove that $\reset{\app
    {\reset {\app \varx i}}{\shift \vark i}}$ and $\reset{\app {\reset {\app
        \varx i}}{\appp{\reset {\app \varx i}}{\shift \vark i}}}$ are
  contextually equivalent. They are not bisimilar, because the terms
  $\reset{\app y {\shift \vark i}}$ and $\reset{\app y {\appp{\reset {\app \varx
          i}}{\shift \vark i}}}$ (where $y$ is fresh) are not bisimilar: the
  former evaluates to $i$ while the latter is in normal form (but is not a
  value).  \qed
\end{proof}

\section{Refined Bisimilarity and Up-to Techniques}
\label{s:refined-up-to}

In this section, we propose an improvement of the definition of normal
form bisimilarity, and we discuss some proof techniques which aim at
simplifying equivalence proofs.

\subsection{Refined Bisimilarity}
\label{ss:refined}

Normal form bisimilarity could better deal with control stuck terms. To
illustrate this, consider the following terms.
\begin{proposition}
  \label{p:cex-stuck}
  Let $i=\lam x x$. We have $\shift \vark i \ctxequiv \app{(\shift \vark i)}
  \Omega$, but $\shift \vark i \not\bisim \app{(\shift \vark i)} \Omega$.
\end{proposition}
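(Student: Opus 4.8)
The plan is to handle the two claims separately. For $\shift\vark i \ctxequiv \app{(\shift\vark i)}\Omega$ I would invoke the context lemma (Lemma~\ref{l:context-lemma}): both terms are closed, so the substitution $\sigma$ is irrelevant and it suffices to show that for every closed evaluation context $\rctx$ the terms $\inctx\rctx{\shift\vark i}$ and $\inctx\rctx{\app{(\shift\vark i)}\Omega}$ have matching evaluation behavior. I distinguish two cases on $\rctx$. If $\rctx$ does not enclose the hole in a reset, then $\rctx$ is a pure context $\ctx$, and by Lemma~\ref{l:stuck} both $\inctx\ctx{\shift\vark i}$ and $\inctx\ctx{\app{(\shift\vark i)}\Omega} = \inctx{\inctx\ctx{\apctx\mtctx\Omega}}{\shift\vark i}$ are control stuck terms; hence each evaluates to itself and the two (control stuck) results match. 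Otherwise $\rctx = \inctx{\rctx'}{\reset\ctx}$ with $\ctx$ pure, and rule $\RRshift$ fires in both terms, capturing $\ctx$ (resp.\ $\inctx\ctx{\apctx\mtctx\Omega}$). Crucially, $\vark$ does not occur free in $i$, so the captured continuation is discarded by the substitution, and both terms reduce to $\inctx{\rctx'}{\reset i}$ and then by $\RRreset$ to the very same term $\inctx{\rctx'}{i}$; from that point on the two reduction sequences coincide, so the behaviors match. As all clauses of the context lemma hold, this gives $\shift\vark i \ctxequiv \app{(\shift\vark i)}\Omega$.

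For $\shift\vark i \not\bisim \app{(\shift\vark i)}\Omega$, observe that both terms are control stuck normal forms, so each evaluates to itself, and $\shift\vark i = \inctx\mtctx{\shift\vark i}$ while $\app{(\shift\vark i)}\Omega = \inctx{\apctx\mtctx\Omega}{\shift\vark i}$, these decompositions being unique. Therefore, were the two terms bisimilar, the clause of $\nf\bisim$ for control stuck terms would force $\mtctx \bisim \apctx\mtctx\Omega$ as contexts (the condition $\reset i \bisim \reset i$ being trivially met). By the extension of bisimilarity to these pure contexts, this requires $\varx \bisim \app\varx\Omega$ for a fresh $\varx$; but $\varx$ is a value and evaluates to itself, whereas $\app\varx\Omega$ has $\Omega$ in evaluation position and diverges, so $\varx \not\bisim \app\varx\Omega$, a contradiction.

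I expect the only subtle step to be the case analysis on $\rctx$ in the contextual-equivalence part: one must identify precisely the maximal pure context surrounding $\shift\vark i$ in each term and observe that $\Omega$, which lives inside that pure context, is captured together with it and then thrown away because $i$ ignores its argument, so $\Omega$ is never evaluated. Everything else is routine use of the reduction rules, Lemma~\ref{l:stuck}, and the unique-decomposition property. This example also pinpoints the incompleteness being addressed in this section: $\nf\bisim$ counts the extra application frame $\apctx\mtctx\Omega$ sitting inside the captured continuation, even though that continuation is dead code.
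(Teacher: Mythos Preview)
Your proof is correct and follows essentially the same approach as the paper: the contextual-equivalence part is proved via the context lemma by a case split on whether the surrounding evaluation context contains a reset around the hole, and the non-bisimilarity part is proved by observing that $\mtctx$ and $\apctx\mtctx\Omega$ cannot be related since $\varx$ converges while $\app\varx\Omega$ diverges. Your write-up is simply more explicit about invoking Lemma~\ref{l:context-lemma} and Lemma~\ref{l:stuck} than the paper's terse version.
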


\begin{proof}
  If $\shift \vark i$ and $\app{(\shift \vark i)} \Omega$ are put within a pure
  context, then we obtain two control stuck terms, and if we put these two terms
  within a context $\inctx \rctx {\reset \ctx}$, then they both
  reduce to $\inctx \rctx i$. Therefore, $\shift \vark i$ and $\app{(\shift
    \vark i)} \Omega$ are contextually equivalent. They are not normal form bisimilar, since
  the contexts $\mtctx$ and $\apctx \mtctx \Omega$ are not bisimilar
  ($\varx$ converges while $\app \varx \Omega$ diverges).  \qed
\end{proof}

When comparing control stuck terms, normal form bisimilarity considers
contexts and shift bodies separately, while they are combined if the
control stuck terms are put under a reset and the capture goes
through. To fix this issue, we consider another notion of
bisimulation. Given a relation $\rel$ on terms, we define $\rnf\rel$
on normal forms, which is defined the same way as $\nf\rel$ on values
and open stuck terms, and is defined on control stuck terms as
follows:
\begin{mathpar}
  \inferrule{\reset {\subst {\tmzero'} \vark {\lam \varx
        {\reset {\app {\vark'}{\inctx \ctxzero \varx}}}}} \rel \reset
    {\subst{\tmone'} \vark {\lam \varx {\reset{\app {\vark'}{\inctx \ctxone
              \varx}}}}} \\ \vark', \varx \textrm{
      fresh}}
  {\inctx \ctxzero {\shift \vark \tmzero} \rnf\rel \inctx \ctxone {\shift \vark
      \tmone}}
\end{mathpar}

\begin{definition}
  \label{d:refined-bisim}
  A relation $\rel$ on terms is a refined normal form simulation if $\tmzero \rel
  \tmone$ and $\tmzero \evalcbv \tmzero'$ implies $\tmone \evalcbv \tmone'$ and
  $\tmzero' \rnf\rel \tmone'$. A relation $\rel$ is a refined normal form bisimulation if
  both $\rel$ and $\inv\rel$ are refined normal form simulations. Refined normal form
  bisimilarity, written $\rbisim$, is the largest refined normal form bisimulation.
\end{definition}

In the control stuck terms case, Definition \ref{d:refined-bisim} simulates the capture
of $\ctxzero$ (respectively $\ctxone$) by $\shift \vark {\tmzero}$ (respectively
$\shift \vark {\tmone}$). However, if $\tmzero$ is put into a context
$\reset{\ctx}$, then $\shift \vark {\tmzero}$ captures a context bigger
than $\ctxzero$, namely $\inctx \ctx {\ctxzero}$. We take such
possibility into account by using a variable $\vark'$ in the definition of
$\rnf\rel$, which represents the context that can be captured beyond $\ctxzero$ and
$\ctxone$.

Refined bisimilarity contains the regular bisimilarity.
\begin{proposition}
  \label{p:bisim-subset-rbisim}
  We have $\bisim \, \subset \, \rbisim$.
\end{proposition}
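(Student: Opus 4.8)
The plan is to prove the inclusion $\bisim \subseteq \rbisim$ by showing that $\bisim$ is itself a refined normal form bisimulation, and then to exhibit a concrete pair of terms witnessing the strictness of the inclusion. For the inclusion, the only place where the definitions of $\nf\rel$ and $\rnf\rel$ differ is the control stuck term case, so it suffices to check that whenever $\inctx \ctxzero {\shift \vark \tmzero} \nf\bisim \inctx \ctxone {\shift \vark \tmone}$, we also have $\inctx \ctxzero {\shift \vark \tmzero} \rnf\bisim \inctx \ctxone {\shift \vark \tmone}$. By definition of $\nf\bisim$ we know $\ctxzero \bisim \ctxone$ (as contexts) and $\reset \tmzero \bisim \reset \tmone$. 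We have to deduce that for fresh $\vark'$ and $\varx$, $\reset{\subst{\tmzero} \vark {\lam \varx {\reset{\app {\vark'}{\inctx \ctxzero \varx}}}}} \bisim \reset{\subst{\tmone} \vark {\lam \varx {\reset{\app {\vark'}{\inctx \ctxone \varx}}}}}$. The natural tool here is Lemma~\ref{l:main-lemma-bisim}: since $\bisim$ is a bisimulation, $\closc\bisim$ is one too, hence $\closc\bisim \subseteq \bisim$ (by maximality of $\bisim$). So it is enough to build the required pair inside $\closc\bisim$ using the substitution, context, and $\lambda$/shift rules of Figure~\ref{f:closc}, starting from $\reset \tmzero \closc\bisim \reset \tmone$ and from $\ctxzero \closc\bisim \ctxone$ (which follows from $\ctxzero \bisim \ctxone$ together with reflexivity, closing the contexts with a fresh variable the way $\closc\rel$ is extended to contexts).

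The main obstacle is that the substitution rule of $\closc\rel$ substitutes a value for a variable, not a context-dependent term, and the two candidate values $\lam \varx {\reset{\app {\vark'}{\inctx \ctxzero \varx}}}$ and $\lam \varx {\reset{\app {\vark'}{\inctx \ctxone \varx}}}$ are not literally related by $\closc\bisim$ unless we first establish $\lam \varx {\reset{\app {\vark'}{\inctx \ctxzero \varx}}} \nf{\closc\bisim} \lam \varx {\reset{\app {\vark'}{\inctx \ctxone \varx}}}$, i.e. $\reset{\app {\vark'}{\inctx \ctxzero \varx}} \closc\bisim \reset{\app {\vark'}{\inctx \ctxone \varx}}$ for a fresh $\varx$. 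This in turn should come from $\inctx \ctxzero \varx \closc\bisim \inctx \ctxone \varx$ — which is exactly what $\ctxzero \bisim \ctxone$ gives us via the definition of the extension to pure contexts — followed by applying the context closure rule with the context $\reset{\app {\vark'} \mtctx}$ (itself related to itself by reflexivity and the context-extension clause). Once this value pair is in $\nf{\closc\bisim}$, the substitution rule applied to $\tmzero \closc\bisim \tmone$ (obtained from $\reset \tmzero \closc\bisim \reset \tmone$; note $\reset \tmzero \bisim \reset \tmone$ so we could alternatively argue directly on the bodies) yields $\subst{\tmzero} \vark {\cdots} \closc\bisim \subst{\tmone} \vark {\cdots}$, and one more application of the context rule with $\reset \mtctx$ closes the case. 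A small amount of care is needed to make sure that $\vark'$ and $\varx$ remain fresh throughout and that $\vark \notin \fv \ctxzero \cup \fv \ctxone$ so the substitution commutes correctly, but this is the same routine freshness bookkeeping used elsewhere in the paper.

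For strictness, the natural witness is precisely the pair from Proposition~\ref{p:cex-stuck}: $\shift \vark i$ and $\app{(\shift \vark i)} \Omega$, where $i = \lam x x$. They are not in $\bisim$, as shown there, because after one step the contexts $\mtctx$ and $\apctx \mtctx \Omega$ must be related, and they are not ($\varx$ converges while $\app \varx \Omega$ diverges — equivalently, under $\nf\bisim$ the context comparison fails). On the other hand, under $\rnf\bisim$ the capture is simulated immediately: both $\shift \vark i$ and $\app{(\shift \vark i)} \Omega$ are already control stuck terms $\inctx \ctxzero {\shift \vark i}$ with $\ctxzero = \mtctx$ on one side and $\ctxzero = \apctx \mtctx \Omega$ on the other, and $\rnf\bisim$ only requires $\reset{\subst i \vark {\lam \varx {\reset{\app {\vark'}{\inctx \ctxzero \varx}}}}}$ to be related on both sides; since $\vark \notin \fv i$, both sides are just $\reset i$, which is trivially bisimilar to itself. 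Hence the pair lies in $\rbisim$ but not in $\bisim$, giving $\bisim \subsetneq \rbisim$. The only subtlety worth a sentence is to present a small explicit refined bisimulation containing this pair (e.g. $\{(\shift \vark i, \app{(\shift \vark i)} \Omega)\} \cup \rbisim$) and check that the evaluation of each side is matched by the other in both directions, which is immediate since neither side reduces.
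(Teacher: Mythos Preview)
Your proposal is correct and follows essentially the same route as the paper: show that $\bisim$ is itself a refined bisimulation by using the congruence of $\bisim$ (the paper cites Corollary~\ref{c:congruence}, you unpack it via Lemma~\ref{l:main-lemma-bisim} and the rules of Figure~\ref{f:closc}), and witness strictness with the pair of Proposition~\ref{p:cex-stuck}. One small wrinkle in your write-up: you cannot ``obtain'' $\tmzero \closc\bisim \tmone$ from $\reset{\tmzero} \closc\bisim \reset{\tmone}$ (there is no inversion rule for $\closc\rel$), but you do not need to---just apply the substitution rule directly to the pair $\reset{\tmzero} \closc\bisim \reset{\tmone}$, which already yields $\reset{\subst{\tmzero}{\vark}{\cdots}} \closc\bisim \reset{\subst{\tmone}{\vark}{\cdots}}$ and makes the final $\reset\mtctx$-context step superfluous.
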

Indeed, for control stuck terms, we have $\tmzero \evalcbv \inctx
\ctxzero {\shift \vark {\tmzero'}}$, $\tmone \evalcbv \inctx \ctxone
         {\shift \vark {\tmone'}}$, $\ctxzero \bisim \ctxone$, and
         $\reset{\tmzero'} \bisim \reset{\tmone'}$. Because $\bisim$
         is a congruence (Corollary \ref{c:congruence}), it is easy to
         see that $\reset {\subst {\tmzero'} \vark {\lam \varx {\reset
               {\app {\vark'}{\inctx \ctxzero \varx}}}}} \bisim \reset
         {\subst{\tmone'} \vark {\lam \varx {\reset{\app
                 {\vark'}{\inctx \ctxone \varx}}}}}$ holds for fresh
         $\vark'$ and $x$. Therefore, $\bisim$ is a refined
         bisimulation, and is included in $\rbisim$. The inclusion is
         strict, because $\rbisim$ relates the terms of Proposition
         \ref{p:cex-stuck}, while $\bisim$ does not.

Proving that $\rbisim$ is sound requires some adjustments to the
congruence proof of $\bisim$. First, given a relation $\rel$ on terms,
we define its substitutive, bisimilar, and context closure $\closbc
\rel$ by extending the rules of Fig. \ref{f:closc} with the following
one.
\begin{mathpar}
  \inferrule{\tmzero \rbisim \tmzero' \\ \tmzero' \closbc\rel \tmone' \\ \tmone' \rbisim \tmone}
  {\tmzero \closbc\rel \tmone}
\end{mathpar}
Henceforth, we simply write $\rbisim \closbc\rel \rbisim$ for the
composition of the three relations. Our goal is to prove that $\closbc
\rbisim$ is a refined bisimilarity. To this end, we need a few lemmas.
\begin{lemma}
  \label{l:beta-omega-ax}
  If $\varx \notin \fv \ctx$, then $\app {\lamp \varx {\inctx \ctx \varx}} \tm
  \bisim \inctx \ctx \tm$.
\end{lemma}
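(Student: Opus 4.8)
The plan is to exhibit an explicit normal form bisimulation containing the pair $(\app {\lamp \varx {\inctx \ctx \varx}} \tm, \inctx \ctx \tm)$ and close it under the operations needed by Definition~\ref{d:nf-bisim}. The natural candidate is
\[
  \rel = \{\, (\app {\lamp \varx {\inctx \ctx \varx}} \tm,\ \inctx \ctx \tm)
           : \varx \notin \fv \ctx,\ \ctx \text{ a pure context},\ \tm \text{ a term} \,\} \cup \bisim .
\]
The key observation is that, since $\varx \notin \fv\ctx$, the term $\app {\lamp \varx {\inctx \ctx \varx}} \tm$ is \emph{not} itself a redex until $\tm$ becomes a value: the evaluation strategy first works inside $\tm$ (the context $\apctx{(\vctx{\lamp\varx{\inctx\ctx\varx}}\mtctx)}{}$ is an evaluation context, so evaluation descends into the argument position), and only once $\tm$ has reduced to a value $\val$ does the $\beta_v$-redex $\app{\lamp\varx{\inctx\ctx\varx}}\val$ fire, yielding $\inctx\ctx\val$ — which is exactly what $\inctx\ctx\tm$ has reduced to by then. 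So the two sides stay in lock-step. The delicate point is that $\tm$ need not terminate nor reach a value: it may diverge (both sides then diverge, and there is nothing to check), or its evaluation inside the argument position may get \emph{stuck} — either as an open stuck term $\inctx{\rctx'}{\app y \val'}$ or, crucially, as a control stuck term $\inctx{\ctx'}{\shift{\vark}{\tm'}}$, since the argument position of the outer application is a pure context and a shift occurring there is not yet captured.

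Concretely, I would reason by cases on the evaluation of $\tm$. If $\tm\divcbv$, both sides diverge and the simulation requirement is vacuous. If $\tm\evalcbv\val$, then $\app{\lamp\varx{\inctx\ctx\varx}}\tm \clocbv \app{\lamp\varx{\inctx\ctx\varx}}\val \redcbv \inctx\ctx\val$ and $\inctx\ctx\tm\clocbv\inctx\ctx\val$, so the two sides reduce to the \emph{same} term and we conclude since $\bisim\subseteq\rel$ and $\nf\bisim$ is reflexive-compatible there. If $\tm\evalcbv\inctx{\rctx'}{\app y{\val'}}$, then $\app{\lamp\varx{\inctx\ctx\varx}}\tm\evalcbv\inctx{\apctx{(\vctx{\lamp\varx{\inctx\ctx\varx}}\mtctx)}{\mtctx}\,\circ\,\rctx'}{\app y{\val'}}$ while $\inctx\ctx\tm\evalcbv\inctx{\ctx\,\circ\,\rctx'}{\app y{\val'}}$, and here I need the context-extension relation of Definition~\ref{d:nf-bisim} to hold between $\apctx{(\vctx{\lamp\varx{\inctx\ctx\varx}}\mtctx)}{}\,\circ\,\rctx'$ and $\ctx\,\circ\,\rctx'$: after factoring through the first enclosing reset (if any lies in $\rctx'$) the residual contexts differ only by whether the hole sits under the extra frame $\vctx{\lamp\varx{\inctx\ctx\varx}}\mtctx$ followed by $\ctx$ versus just $\ctx$, and plugging a fresh $z$ into them yields $\app{\lamp\varx{\inctx\ctx\varx}}{(\inctx{\rctx''}z)}$ versus $\inctx\ctx{(\inctx{\rctx''}z)}$ — a pair that is again of the form in $\rel$ (or reduces to one). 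The control stuck case $\tm\evalcbv\inctx{\ctx'}{\shift\vark{\tm'}}$ is analogous: both sides reach a control stuck term, with pure contexts $\apctx{(\vctx{\lamp\varx{\inctx\ctx\varx}}\mtctx)}{}\,\circ\,\ctx'$ and $\ctx\,\circ\,\ctx'$ respectively and identical shift body $\tm'$; comparing these contexts via $\rel$ reduces, after plugging a fresh $z$, to the same family of pairs, and the bodies $\reset{\tm'}\bisim\reset{\tm'}$ hold trivially.

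The main obstacle is bookkeeping around the context-extension clause: I must verify that whenever the argument-position evaluation produces a context containing a reset around the hole, the same is true on both sides (it is, since the extra frame $\vctx{\lamp\varx{\inctx\ctx\varx}}\mtctx$ and the pure context $\ctx$ add no resets), and that after peeling off the outermost reset the leftover pieces land back in $\rel$ rather than outside it — which requires closing $\rel$ under the obvious variants (allowing an arbitrary evaluation context $\rctx'$ below, i.e. pairs $(\inctx{\rctx'}{\app{\lamp\varx{\inctx\ctx\varx}}\tm},\inctx{\rctx'}{\inctx\ctx\tm})$, and allowing the hole-plugging variable) so that it is genuinely closed. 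Once $\rel$ is taken large enough in this way, each case above discharges routinely, and symmetry of the argument gives that $\inv\rel$ is a simulation too, so $\rel$ is a bisimulation and $\app{\lamp\varx{\inctx\ctx\varx}}\tm\bisim\inctx\ctx\tm$.
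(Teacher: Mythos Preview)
Your approach is essentially the paper's: the paper proposes exactly the candidate $\{(\app{\lamp\varx{\inctx\ctx\varx}}\tm,\inctx\ctx\tm):\varx\notin\fv\ctx\}\cup\{(\tm,\tm)\}$ and performs the same case analysis on the evaluation of $\tm$. Your use of $\bisim$ in place of the identity is harmless. The one correction is that your final worry about enlarging $\rel$ is unfounded: in the open stuck case with $\rctx'=\inctx{\rctx''}{\reset{\ctx'}}$, the context-extension clause asks only for $\reset{\inctx{\ctx'}z}\rel\reset{\inctx{\ctx'}z}$ (handled by identity) and for $\app{\lamp\varx{\inctx\ctx\varx}}{(\inctx{\rctx''}z)}\rel\inctx\ctx{(\inctx{\rctx''}z)}$, which is already an instance of your family with $\tm=\inctx{\rctx''}z$; the control stuck case similarly lands back in the family. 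No pairs of the form $(\inctx{\rctx'}{\app{\lamp\varx{\inctx\ctx\varx}}\tm},\inctx{\rctx'}{\inctx\ctx\tm})$ are ever needed.
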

One can prove that $\{ (\app {\lamp \varx {\inctx \ctx \varx}} \tm,
\inctx \ctx \tm), \varx \notin \fv \ctx \} \cup \{(\tm,\tm)\}$ is a
bisimulation, by a straightforward case analysis on the result of the
evaluation of $\tm$ (if it exists). Note that Lemma
\ref{l:beta-omega-ax}, known as the $\beta_\Omega$ axiom in
\cite{Kameyama-Hasegawa:ICFP03}, has also been proved in
\cite{Biernacki-Lenglet:FOSSACS12} using applicative bisimulation. We
can see that the proof is much simpler using normal form
bisimulation. With Lemma \ref{l:beta-omega-ax}, congruence of
$\bisim$, and Proposition \ref{p:bisim-subset-rbisim}, we then prove
the following result.
\begin{lemma}
  \label{l:rbisim-trick}
  If $\varx \notin \fv \ctxzero \cup \fv \ctxone$ and $y \notin \fv
  \ctxone$ then $\reset{\subst \tm \vark {\lam \varx {\reset {\inctx
          \ctxone {\inctx \ctxzero \varx}}}}} \rbisim \reset{\subst
    \tm \vark {\lam \varx {\reset {\app {\lamp y {\inctx \ctxone
              y}}{\inctx \ctxzero \varx}}}}}$.
\end{lemma}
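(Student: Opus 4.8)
The plan is to reduce Lemma~\ref{l:rbisim-trick} to Lemma~\ref{l:beta-omega-ax} via congruence of $\bisim$ and the inclusion $\bisim \subseteq \rbisim$ from Proposition~\ref{p:bisim-subset-rbisim}. The two terms differ only inside the body of the substituted $\lambda$-abstraction: on the left we have $\reset{\inctx \ctxone {\inctx \ctxzero \varx}}$, on the right $\reset{\app {\lamp y {\inctx \ctxone y}}{\inctx \ctxzero \varx}}$. Since $y \notin \fv \ctxone$, Lemma~\ref{l:beta-omega-ax} (instantiated with the context $\ctxone$, the bound variable $y$, and the argument term $\inctx \ctxzero \varx$) gives $\app {\lamp y {\inctx \ctxone y}}{\inctx \ctxzero \varx} \bisim \inctx \ctxone {\inctx \ctxzero \varx}$. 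The side condition $\varx \notin \fv \ctxzero \cup \fv \ctxone$ merely ensures that $\varx$ is genuinely fresh for the enclosing $\lambda$, so that the substitution $\subst \tm \vark {\lam \varx {\cdots}}$ is well formed and capture-avoiding in both terms.

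Given this core equivalence, the first step is to lift it through the surrounding term structure. Using Corollary~\ref{c:congruence}, which states that $\bisim$ is a congruence, I close $\app {\lamp y {\inctx \ctxone y}}{\inctx \ctxzero \varx} \bisim \inctx \ctxone {\inctx \ctxzero \varx}$ under the context $\reset{\lam \varx {\mtctx}}$ — more precisely, under the frames $\lam \varx {[\,]}$ and then $\reset{[\,]}$ — to obtain $\lam \varx {\reset {\inctx \ctxone {\inctx \ctxzero \varx}}} \bisim \lam \varx {\reset {\app {\lamp y {\inctx \ctxone y}}{\inctx \ctxzero \varx}}}$. Congruence also lets me substitute this bisimilar value for $\vark$ in $\tm$ and then wrap the result in a reset, yielding $\reset{\subst \tm \vark {\lam \varx {\reset {\inctx \ctxone {\inctx \ctxzero \varx}}}}} \bisim \reset{\subst \tm \vark {\lam \varx {\reset {\app {\lamp y {\inctx \ctxone y}}{\inctx \ctxzero \varx}}}}}$. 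Finally, Proposition~\ref{p:bisim-subset-rbisim} gives $\bisim \subseteq \rbisim$, so the same pair is in $\rbisim$, which is exactly the statement.

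The one point that needs a little care — and which I expect to be the main (albeit minor) obstacle — is the appeal to congruence for substitution into $\tm$: the compatibility rule for substitution in Fig.~\ref{f:closc} requires the substituted values to be related by $\nf{\closc\rel}$, not merely by $\closc\rel$. Here the two $\lambda$-abstractions are bisimilar as \emph{terms}, so I need to observe that bisimilar closed values of function shape are in $\nf\bisim$ (equivalently, that $\bisim$ being a congruence is already strong enough to substitute a bisimilar value for a variable). This follows because congruence of $\bisim$ lets me replace one value by the other in the context $\subst \tm \vark {[\,]}$ directly, so there is no real gap; it is simply a matter of phrasing the congruence invocation as ``replace a subterm by a bisimilar one'' rather than as an explicit use of the substitution rule of $\closc{\cdot}$. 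Everything else is routine: the freshness side conditions line up exactly with what Lemma~\ref{l:beta-omega-ax} demands, and no reduction or case analysis beyond what is already packaged in that lemma is needed.
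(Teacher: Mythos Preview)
Your proposal is correct and follows essentially the same route as the paper, which explicitly states that Lemma~\ref{l:rbisim-trick} is obtained from Lemma~\ref{l:beta-omega-ax}, congruence of $\bisim$ (Corollary~\ref{c:congruence}), and Proposition~\ref{p:bisim-subset-rbisim}. Your extra care about the substitution rule requiring $\nf{\closc\rel}$-related values is a valid point and correctly resolved; the paper simply does not spell out this detail.
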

The main lemma of the congruence proof of $\rbisim$ is as follows.
\begin{lemma}
  \label{l:main-lemma-rbisim}
  If $\rel$ is a refined bisimulation, then so is $\closbc\rel$.
\end{lemma}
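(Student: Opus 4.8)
The plan is to adapt, nearly verbatim, the proof of Lemma~\ref{l:main-lemma-bisim}. First I would prove the quantitative statement: \emph{if $\tmzero \closbc\rel \tmone$ and $\tmzero$ reduces to a normal form $\tmzero'$ in $m$ steps, then $\tmone \evalcbv \tmone'$ for some normal form $\tmone'$ with $\tmzero' \rnf{\closbc\rel} \tmone'$}, by a nested induction on $m$ and on the derivation of $\tmzero \closbc\rel \tmone$. This only gives that $\closbc\rel$ is a refined \emph{simulation}; to recover the full bisimulation property I would first check, by an easy induction on the closure rules, that the construction commutes with inversion, $\inv{(\closbc\rel)} = \closbc{(\inv\rel)}$ (using $\inv\rbisim = \rbisim$ for the composition rule, and $\inv{(\nf\rel)} = \nf{(\inv\rel)}$, $\inv{(\rnf\rel)} = \rnf{(\inv\rel)}$ for the others), and then apply the simulation result to $\inv\rel$, which is a refined bisimulation as well. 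Three routine facts about $\rnf\cdot$ are used throughout, all immediate from its three defining clauses: it is monotone (so $\rel \subseteq \closbc\rel$ yields $\rnf\rel \subseteq \rnf{\closbc\rel}$), it preserves reflexivity, and it satisfies $\rnf{\rel_1} \circ \rnf{\rel_2} \subseteq \rnf{(\rel_1 \circ \rel_2)}$.

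The case analysis is on the last rule deriving $\tmzero \closbc\rel \tmone$. The reflexivity rule is immediate from reflexivity of $\rnf{\closbc\rel}$, and the rule $\tmzero \rel \tmone$ follows from $\rel$ being a refined bisimulation plus monotonicity of $\rnf\cdot$. For the $\lambda$-abstraction and shift rules, $\tmzero$ is already a normal form, so $m = 0$, $\tmzero' = \tmzero$, and the required $\rnf{\closbc\rel}$-fact unfolds into an instance of the substitution and context rules applied to the premise and to reflexive $\nf{\closbc\rel}$-facts. For the new $\rbisim$-composition rule, with $\tmzero \rbisim \tmzero_1$, $\tmzero_1 \closbc\rel \tmone_1$, $\tmone_1 \rbisim \tmone$, I push the first and third hypotheses through the inner induction hypothesis (applied to $\rbisim$, which is a refined bisimulation) and the middle one through the induction hypothesis on the sub-derivation, then compose the three resulting $\rnf\cdot$-steps: by the composition property of $\rnf\cdot$ and the composition rule itself, $\rnf\rbisim \circ \rnf{\closbc\rel} \circ \rnf\rbisim \subseteq \rnf{\closbc\rel}$.

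The substantive case is the context-composition rule, $\tmzero = \inctx\rctxzero{s}$ and $\tmone = \inctx\rctxone{u}$ with $s \closbc\rel u$ and $\rctxzero \closbc\rel \rctxone$ (the extension to contexts). Since $\tmzero$ reaches a normal form in $m$ steps and $s$ sits in evaluation position, $s$ reaches a normal form $s'$ in some $j \le m$ steps, which steps are also available in $\tmzero$; by the induction hypothesis on the sub-derivation $s \closbc\rel u$ (with $j$ steps), $u \evalcbv u'$ with $s' \rnf{\closbc\rel} u'$. I then split on the shape of $s'$. If $s'$ is a value or an open stuck term, $\inctx\rctxzero{s'}$ either keeps reducing under the action of $\rctxzero$---handled by the extension of $\closbc\rel$ to contexts together with the inner induction hypothesis on the remaining ${<}\,m$ steps---or is already an open stuck term matched directly by the corresponding $\rnf\cdot$-clause. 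The delicate case is $s'$ a control stuck term $\inctx\ctx{\shift\vark{s''}}$, so that $u' = \inctx{\ctx'}{\shift\vark{u''}}$ with, for fresh $\vark', \varx$,
\[
\reset{\subst{s''}\vark{\lam\varx{\reset{\app{\vark'}{\inctx\ctx\varx}}}}} \closbc\rel
\reset{\subst{u''}\vark{\lam\varx{\reset{\app{\vark'}{\inctx{\ctx'}\varx}}}}}, \tag{$\ast$}
\]
which we call $(\ast)$. Filling $\rctxzero$ contributes, below its innermost reset if any, a pure context $\ctxzero$ that merges around $\inctx\ctx{\shift\vark{s''}}$, so the capture---when it eventually fires---concerns the \emph{larger} context $\inctx\ctxzero\ctx$, whereas $(\ast)$ only relates the shift bodies through a single fresh continuation \emph{variable} $\vark'$ standing for what lies beyond $\ctx$. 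If $\rctxzero$ provides no enclosing reset, $\tmzero$ stops at the control stuck term $\inctx{(\inctx\ctxzero\ctx)}{\shift\vark{s''}}$ and I must establish $\inctx{(\inctx\ctxzero\ctx)}{\shift\vark{s''}} \rnf{\closbc\rel} \inctx{(\inctx\ctxone{\ctx'})}{\shift\vark{u''}}$; if it does, the capture fires via rule $\RRshift$, yielding $\reset{\subst{s''}\vark{\lam\varx{\reset{\inctx\ctxzero{\inctx\ctx\varx}}}}}$ within the rest of $\rctxzero$, evaluation continues for ${<}\,m$ steps, and I must relate this term to its $\rctxone$-counterpart before invoking the inner induction hypothesis.

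In both subcases the move is the same: substitute into $(\ast)$ an appropriate continuation for the free variable $\vark'$---namely $\lam y{\app{\vark''}{\inctx\ctxzero y}}$ in the first subcase (with $\vark''$ the fresh continuation variable of the target $\rnf\cdot$-clause) and $\lam y{\reset{\inctx\ctxzero y}}$ in the second---and observe that the resulting continuation agrees, \emph{up to $\rbisim$}, with the one actually produced by the capture: this is exactly what Lemma~\ref{l:rbisim-trick} supplies (being itself the $\beta_\Omega$-axiom of Lemma~\ref{l:beta-omega-ax} together with congruence of $\bisim$ and $\bisim \subseteq \rbisim$, repackaged), since $\lam\varx{\reset{\app{(\lam y{\inctx\ctxzero y})}{\inctx\ctx\varx}}}$ is precisely the continuation of $(\ast)$ with $\vark'$ replaced by $\lam y{\inctx\ctxzero y}$, and similarly with an extra $\reset$, resp.\ an extra $\app{\vark''}{[\,]}$ layer. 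The substitution rule of $\closbc\rel$ then applies with premise $(\ast)$, its side condition $\lam y{\app{\vark''}{\inctx\ctxzero y}} \nf{\closbc\rel} \lam y{\app{\vark''}{\inctx\ctxone y}}$, resp.\ $\lam y{\reset{\inctx\ctxzero y}} \nf{\closbc\rel} \lam y{\reset{\inctx\ctxone y}}$, reducing (via the context rule) to $\inctx\ctxzero z \closbc\rel \inctx\ctxone z$, resp.\ $\reset{\inctx\ctxzero z} \closbc\rel \reset{\inctx\ctxone z}$---which is precisely what the pure-context clause, resp.\ the reset clause, of the extension of $\closbc\rel$ to contexts yields from $\rctxzero \closbc\rel \rctxone$. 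Finally, the $\rbisim$-composition rule of $\closbc\rel$ absorbs the $\rbisim$-slack introduced by Lemma~\ref{l:rbisim-trick}, and the inner induction hypothesis closes the second subcase. Reconciling this context composition with the single-variable-continuation shape of $\rnf\cdot$---and keeping track of whether the substituted continuation must carry a $\reset$, in step with the two clauses of the context extension---is the main obstacle; once it is settled, the remaining bookkeeping (peeling frames of $\rctxzero$, plugging terms into related contexts) is identical in spirit to that of Lemma~\ref{l:main-lemma-bisim}, and the full argument belongs in Appendix~\ref{a:soundness}.
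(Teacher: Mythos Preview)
Your plan coincides with the paper's: the same nested induction on $m$ and on the derivation, the same handling of the $\rbisim$-composition rule (chain three $\rnf\cdot$-facts and absorb via $\rbisim\closbc\rel\rbisim \subseteq \closbc\rel$), and, for the control-stuck case under a pure outer context, exactly the paper's move---substitute $\lam y{\app{\vark''}{\inctx\ctxzero y}}$ for $\vark'$ in $(\ast)$ and discharge the $\rbisim$-slack by Lemma~\ref{l:rbisim-trick} with the pure outer context $\vctx{\vark''}{\ctxzero}$.

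The one step that does not go through as written is your second subcase, where $\rctxzero = \inctx{\rctxzero'}{\reset{\ctxzero}}$ and you substitute $\lam y{\reset{\inctx\ctxzero y}}$ for~$\vark'$. The $\rbisim$-slack you then need is
\[
\reset{\subst{s''}\vark{\lam\varx{\reset{\app{\lamp y{\reset{\inctx\ctxzero y}}}{\inctx\ctx\varx}}}}}\ \rbisim\ \reset{\subst{s''}\vark{\lam\varx{\reset{\inctx\ctxzero{\inctx\ctx\varx}}}}},
\]
but Lemma~\ref{l:rbisim-trick} only applies when the body of the substituted $\lambda$-abstraction is $\inctx{\ctx'}y$ for a \emph{pure} context $\ctx'$; here the body is $\reset{\inctx\ctxzero y}$, which is not of that form. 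The underlying $\beta_\Omega$ axiom (Lemma~\ref{l:beta-omega-ax}) is likewise restricted to pure contexts, and indeed $\app{\lamp y{\reset{\inctx\ctxzero y}}}\tm$ is \emph{not} in general bisimilar to $\reset{\inctx\ctxzero\tm}$ (take $\ctxzero = \mtctx$ and $\tm = \shift\vark\Omega$). What you actually need is the reset-wrapped variant $\reset{\app{\lamp y{\reset{\inctx\ctxzero y}}}\tm} \bisim \reset{\inctx\ctxzero\tm}$, that is, Proposition~\ref{p:small-step}; it is provable from Lemma~\ref{l:main-lemma-bisim} alone and hence not circular here. The paper avoids the issue by organizing this subcase differently: it arranges, via a separate call to the induction hypothesis, for the analogue of $(\ast)$ to already carry the \emph{full} captured context, and then substitutes the identity $\lam z z$ for $\vark'$, so that Lemma~\ref{l:rbisim-trick} is invoked with the trivial (hence pure) outer context~$\mtctx$.
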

The proof is an adaptation of the proof of Lemma
\ref{l:main-lemma-bisim}. We sketch one sub-case of the proof, to
illustrate why we need $\closbc\rel$ (instead of $\closc\rel$) and
Lemma \ref{l:rbisim-trick}.
\begin{proof}[Sketch]
  Assume we are in the case where $\inctx \ctxzero \tmzero \closbc\rel
  \inctx \ctxone \tmone$ with $\inctx \ctxzero y \closbc\rel \inctx
  \ctxone y$ for a fresh $y$, and $\tmzero \closbc\rel
  \tmone$. Moreover, suppose $\tmzero \evalcbv
  \inctx{\ctxzero'}{\shift \vark {\tmzero'}}$. Then by the induction
  hypothesis, we know that there exist $\ctxone'$, $\tmone'$ such that
  $\tmone \evalcbv \inctx {\ctxone'}{\shift \vark {\tmone'}}$, and
  $\reset {\subst {\tmzero'} \vark {\lam \varx {\reset {\app
          {\vark'}{\inctx {\ctxzero'} \varx}}}}} \closbc\rel \reset
  {\subst{\tmone'} \vark {\lam \varx {\reset{\app {\vark'}{\inctx
            {\ctxone'} \varx}}}}}$ (*) for a fresh $\vark'$. Hence, we
  have $\inctx \ctxzero \tmzero \evalcbv \inctx
  \ctxzero{\inctx{\ctxzero'}{\shift \vark {\tmzero'}}}$ and $\tmone
  \evalcbv \inctx \ctxone {\inctx {\ctxone'}{\shift \vark
      {\tmone'}}}$, and we want to prove that $\reset {\subst
    {\tmzero'} \vark {\lam \varx {\reset {\app {\vark'}{\inctx
            \ctxzero {\inctx {\ctxzero'} \varx}}}}}} \closbc\rel
  \reset {\subst{\tmone'} \vark {\lam \varx {\reset{\app
          {\vark'}{\inctx \ctxone{\inctx {\ctxone'} \varx}}}}}}$
  holds. Because $\inctx \ctxzero y \closbc\rel \inctx \ctxone y$, we
  have $\lam y {\app {\vark'}{\inctx \ctxzero y}} \rnf{\closbc\rel}
  \lam y {\app {\vark'}{\inctx \ctxone y}}$ (**). Using (*) and (**),
  we obtain \[\reset {\subst {\tmzero'} \vark {\lam \varx {\reset {\app
          {\lamp y {\app {\vark'}{\inctx \ctxzero y}}}{\inctx
            {\ctxzero'} \varx}}}}} \closbc\rel \reset {\subst{\tmone'}
    \vark {\lam \varx {\reset{\app {\lamp y {\app {\vark'}{\inctx
                \ctxone y}}}{\inctx {\ctxone'} \varx}}}}},\] because
  $\closbc\rel$ is substitutive. By Lemma \ref{l:rbisim-trick}, we
  know that
  \begin{align*}
    \reset {\subst {\tmzero'} \vark {\lam \varx {\reset {\app {\vark'}{\inctx
              \ctxzero {\inctx {\ctxzero'} \varx}}}}}} & \rbisim \reset {\subst
      {\tmzero'} \vark {\lam \varx {\reset {\app {\lamp y {\app {\vark'}{\inctx
                  \ctxzero
                  y}}}{\inctx {\ctxzero'} \varx}}}}} \\
    \reset {\subst{\tmone'} \vark {\lam \varx {\reset{\app {\vark'}{\inctx
              \ctxone{\inctx {\ctxone'} \varx}}}}}} & \rbisim \reset
    {\subst{\tmone'} \vark {\lam \varx {\reset{\app {\lamp y {\app
                {\vark'}{\inctx \ctxone y}}}{\inctx {\ctxone'} \varx}}}}},
  \end{align*}
  which means that $\reset {\subst {\tmzero'} \vark {\lam \varx
      {\reset {\app {\vark'}{\inctx \ctxzero {\inctx {\ctxzero'}
              \varx}}}}}} \rbisim \closbc\rel \rbisim \reset
  {\subst{\tmone'} \vark {\lam \varx {\reset{\app {\vark'}{\inctx
            \ctxone{\inctx {\ctxone'} \varx}}}}}}$ holds. The required
  result then holds because $\rbisim \closbc\rel \rbisim \, \subseteq \, 
  \closbc\rel$.  \qed
\end{proof}
We can then conclude that $\rbisim$ is a congruence, and is sound
w.r.t. $\ctxequiv$.
\begin{corollary}
  The relation $\rbisim$ is a congruence.
\end{corollary}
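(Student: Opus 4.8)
The plan is to derive the statement from Lemma~\ref{l:main-lemma-rbisim} exactly the way Corollary~\ref{c:congruence} is obtained from Lemma~\ref{l:main-lemma-bisim}. First I would instantiate Lemma~\ref{l:main-lemma-rbisim} with $\rel = \rbisim$: since $\rbisim$ is itself a refined bisimulation, so is $\closbc\rbisim$, whence $\closbc\rbisim \subseteq \rbisim$ because $\rbisim$ is the largest refined bisimulation. The reverse inclusion $\rbisim \subseteq \closbc\rbisim$ is immediate from the second rule of Fig.~\ref{f:closc}, so $\closbc\rbisim = \rbisim$.

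It then suffices to read the congruence properties off the closure rules, now that $\closbc\rbisim = \rbisim$. Reflexivity of $\rbisim$ is the first rule of Fig.~\ref{f:closc}. Symmetry is immediate from Definition~\ref{d:refined-bisim}, since $\inv\rbisim$ is again a refined bisimulation, hence $\inv\rbisim \subseteq \rbisim$, and taking inverses gives $\rbisim = \inv\rbisim$. Transitivity follows from the extra composition rule defining $\closbc\rel$, taken with $\rel = \rbisim$: given $\tmzero \rbisim \tm$ and $\tm \rbisim \tmone$, that rule applied with first step $\tmzero \rbisim \tm$, middle step $\tm \closbc\rbisim \tmone$ (from $\tm \rbisim \tmone$ by the second rule of Fig.~\ref{f:closc}), and last step $\tmone \rbisim \tmone$ yields $\tmzero \closbc\rbisim \tmone$, i.e.\ $\tmzero \rbisim \tmone$. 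For compatibility I would traverse the term constructors: $\lambda$-abstraction and shift are covered by the matching rules of Fig.~\ref{f:closc}, while reset and application are obtained from the evaluation-context-closure rule by plugging the changing term into a single-constructor evaluation context ($\resetctx\mtctx$, $\apctx\mtctx\tm$, or $\vctx\varx\mtctx$ for a fresh value variable $\varx$), each of which is related to itself by the extension of $\closbc\rbisim$ to contexts defined before Definition~\ref{d:nf-bisim}, because $\closbc\rbisim$ is reflexive. Composing these facts along the structure of an arbitrary context $\cctx$ gives $\inctx{\cctx}{\tmzero} \rbisim \inctx{\cctx}{\tmone}$ whenever $\tmzero \rbisim \tmone$, which is congruence; soundness of $\rbisim$ w.r.t.\ $\ctxequiv$ then follows as in the case of $\bisim$.

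The substantive work is already done in Lemma~\ref{l:main-lemma-rbisim} and the auxiliary Lemmas~\ref{l:beta-omega-ax} and~\ref{l:rbisim-trick}, so the main thing to be careful about here is the bookkeeping of the last paragraph: one must verify that closure under the restricted rule set of Fig.~\ref{f:closc}, augmented only with the composition rule, really does entail compatibility with the whole grammar of contexts --- the least obvious case being the operand of an application whose operator is not a value, which has to be routed through the pure context $\vctx\varx\mtctx$ for a fresh $\varx$ --- and that the context-extension relation of Definition~\ref{d:nf-bisim} is itself reflexive. I would also check that no circularity is introduced: Lemma~\ref{l:main-lemma-rbisim} depends on congruence of $\bisim$ (Corollary~\ref{c:congruence}) and on Proposition~\ref{p:bisim-subset-rbisim}, never on congruence of $\rbisim$, so the chain of dependencies is well-founded.
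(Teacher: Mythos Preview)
The proposal is correct and follows the same route the paper takes: the paper states the result as an immediate corollary of Lemma~\ref{l:main-lemma-rbisim} without further detail, and your argument spells out exactly that derivation---instantiate the lemma at $\rel = \rbisim$, conclude $\closbc\rbisim = \rbisim$ by maximality, and read congruence off the closure rules of Fig.~\ref{f:closc} together with the extra $\rbisim\closbc\rel\rbisim$ rule. Your bookkeeping remarks (routing the non-value-operator case through $\vctx\varx\mtctx$ so that the context extension $\apctx\mtctx{\tmzero} \closbc\rbisim \apctx\mtctx{\tmone}$ holds, and the well-foundedness of the dependency on Corollary~\ref{c:congruence}) are apt and fill in details the paper leaves implicit.
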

\begin{theorem}
  We have $\rbisim \, \subset \, \ctxequiv$.
\end{theorem}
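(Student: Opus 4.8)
The plan is to prove $\rbisim \subseteq \ctxequiv$ by the same two-step pattern used for $\bisim$: first establish that $\rbisim$ is a congruence (which is the immediately preceding corollary, itself a consequence of Lemma~\ref{l:main-lemma-rbisim}), and then derive soundness w.r.t.\ contextual equivalence from congruence plus the fact that $\rbisim$ is adequate, i.e.\ it distinguishes values from control-stuck terms from open-stuck terms and respects termination. The strictness of the inclusion ($\subset$ rather than $\subseteq$) is then witnessed by an explicit counter-example.

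For the soundness part, I would argue as follows. Suppose $\tmzero \rbisim \tmone$. Let $\cctx$ be an arbitrary context such that $\inctx \cctx \tmzero$ and $\inctx \cctx \tmone$ are closed. By congruence, $\inctx \cctx \tmzero \rbisim \inctx \cctx \tmone$. It then suffices to check that $\rbisim$ restricted to closed terms refines the observations of $\ctxequiv$: if $\inctx \cctx \tmzero \evalcbv \valzero$, then since $\rbisim$ is a refined normal form bisimulation, $\inctx \cctx \tmone$ must also evaluate to a normal form $\tmone'$ with $\valzero \rnf{\rbisim} \tmone'$; inspecting the three clauses defining $\rnf{\cdot}$ (the value clause, the control-stuck clause, the open-stuck clause), a value can only be $\rnf{}$-related to a value, so $\tmone' = \valone$ for some value $\valone$. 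Symmetrically for the converse direction. Similarly, if $\inctx \cctx \tmzero \evalcbv \tmzero'$ with $\tmzero'$ control stuck, the control-stuck clause of $\rnf{\cdot}$ is the only one that can apply, so $\inctx \cctx \tmone$ evaluates to a control-stuck term as well. This is exactly the pair of conditions in the definition of $\ctxequiv$, so $\tmzero \ctxequiv \tmone$.

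For strictness, I would reuse the counter-example of Proposition~\ref{p:cex-dupl}: the terms $\tmzero_0 = \reset{\app {\reset {\app \varx i}}{\shift \vark i}}$ and $\tmone_0 = \reset{\app {\reset {\app \varx i}}{\appp{\reset {\app \varx i}}{\shift \vark i}}}$ are contextually equivalent, and one should check they are not \emph{refined} bisimilar either. The point is that the refinement introduced by $\rnf{\cdot}$ only changes the treatment of control-stuck terms, not of open-stuck terms: evaluating $\tmzero_0$ and $\tmone_0$ under a fresh $\varx$ still leads (after removing the inner resets on termination) to $\reset{\app y {\shift \vark i}}$ versus $\reset{\app y {\appp{\reset {\app \varx i}}{\shift \vark i}}}$, where the first reduces to $i$ while the second is an open stuck term; since open-stuck terms are handled identically in $\nf{\cdot}$ and $\rnf{\cdot}$, the two are not $\rbisim$-related for exactly the reason given in the proof of Proposition~\ref{p:cex-dupl}.

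The main obstacle is not in the soundness argument itself, which is essentially bookkeeping once congruence is in hand, but in the congruence result it rests on—Lemma~\ref{l:main-lemma-rbisim}—whose delicate sub-case (sketched above) requires both the enriched closure $\closbc\rel$ and the auxiliary Lemma~\ref{l:rbisim-trick} to reconcile the context $\inctx \ctxzero {\inctx {\ctxzero'} \varx}$ captured when a stuck sub-term is embedded in a larger pure context with the "staged" form $\app {\lamp y {\inctx \ctxone y}}{\inctx {\ctxzero'} \varx}$ that the closure naturally produces. Since the present statement is permitted to assume everything proved earlier in the excerpt, for the theorem proper the only care needed is the routine verification that the clauses of $\rnf{\cdot}$ keep values, control-stuck terms, and open-stuck terms in separate classes, so that the bisimulation game transfers exactly to the two observation clauses of $\ctxequiv$.
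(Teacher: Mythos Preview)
Your proposal is correct and follows exactly the paper's route: soundness is inherited from the congruence corollary (Lemma~\ref{l:main-lemma-rbisim}), and strictness is witnessed by the pair of terms from Proposition~\ref{p:cex-dupl}, which remain unrelated by $\rbisim$ because the open-stuck clause of $\rnf{\cdot}$ coincides with that of $\nf{\cdot}$. The paper is in fact terser than you are---it states the theorem and immediately remarks that the inclusion is strict by Proposition~\ref{p:cex-dupl}, leaving the soundness half implicit from the preceding corollary---so your more explicit unpacking of adequacy and of why the counter-example survives the refinement is a faithful expansion rather than a different argument.
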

The inclusion is strict, because the terms of Proposition
\ref{p:cex-dupl} are still not related by $\rbisim$.

We would like to stress that even though $\rbisim$ equates more
contextually equivalent terms than $\bisim$,
the latter is still useful, since it leads to very simple proofs of
equivalence, as we can see with Lemma \ref{l:beta-omega-ax} (and with
the examples of Section~\ref{s:examples}). Therefore, $\rbisim$ does
not disqualify $\bisim$ as a proof technique.

\subsection{Up-to Techniques}

The idea behind up-to techniques
\cite{Sangiorgi-Walker:01,Lassen:98,Sangiorgi-al:LICS07} is to define
relations that are not exactly bisimulations but are included in
bisimulations. It usually leads to definitions of simpler candidate
relations and to simpler bisimulation proofs. As pointed out in
\cite{Lassen:98}, using a direct approach to prove congruence of the
normal form bisimilarity (as in Sections \ref{ss:soundness} and
\ref{ss:refined}) makes up-to techniques based on the context closure
easy to define and to prove valid. For example, we define bisimulation
up to substitutive, reflexive, and context closure (in short, up to
context) as follows.
\begin{definition}
  A relation $\rel$ on terms is a simulation up to context if $\tmzero
  \rel \tmone$ and $\tmzero \evalcbv \tmzero'$ implies $\tmone
  \evalcbv \tmone'$ and $\tmzero' \nf{\closc\rel} \tmone'$. A relation
  $\rel$ is a bisimulation up to context if both $\rel$ and $\inv\rel$
  are simulations up to context.
\end{definition}
Similarly, we can define a notion of refined bisimulation up to
context by replacing $\nf{\closc\rel}$ by $\rnf{\closbc\rel}$ in the
above definition. The proofs of Lemmas \ref{l:main-lemma-bisim} and
\ref{l:main-lemma-rbisim} can easily be adapted to bisimulations up to
context; a trivial change is needed only in the inductive case where
$\tmzero \closc\rel \tmone$ (respectively $\tmzero \closbc\rel
\tmone$) comes from $\tmzero \rel \tmone$.
\begin{lemma}
  If $\rel$ is a bisimulation up to context, then $\closc\rel$ is a
  bisimulation. If $\rel$ is a refined bisimulation up to context,
  then $\closbc\rel$ is a refined bisimulation.
\end{lemma}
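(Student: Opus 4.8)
The plan is to show that the two implications follow from the corresponding congruence lemmas (Lemmas~\ref{l:main-lemma-bisim} and~\ref{l:main-lemma-rbisim}) by essentially the same bootstrapping argument, so I would handle both at once and only point out where the refined case needs the extra closure rule. For the first implication, suppose $\rel$ is a bisimulation up to context. I want to prove that $\closc\rel$ is a (plain) bisimulation, i.e.\ that whenever $\tmzero \closc\rel \tmone$ and $\tmzero \evalcbv \tmzero'$, then $\tmone \evalcbv \tmone'$ with $\tmzero' \nf{\closc\rel} \tmone'$ (and symmetrically). The proof would be the proof of Lemma~\ref{l:main-lemma-bisim} verbatim — nested induction on the number $m$ of reduction steps and on the derivation of $\closc\rel$ — except in the base case of the inner induction where $\tmzero \closc\rel \tmone$ is derived from the rule $\inferrule{\tmzero \rel \tmone}{\tmzero \closc\rel \tmone}$. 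In the original proof of Lemma~\ref{l:main-lemma-bisim} that case used the hypothesis that $\rel$ itself is a bisimulation, yielding $\tmzero' \nf\rel \tmone'$, from which $\tmzero' \nf{\closc\rel} \tmone'$ follows because $\rel \subseteq \closc\rel$ and $\nf\cdot$ is monotone. Here instead we use the hypothesis that $\rel$ is a bisimulation \emph{up to context}, which directly gives the stronger conclusion $\tmzero' \nf{\closc\rel} \tmone'$ — exactly what is needed. No other case changes, since every other case of $\closc\rel$ is built from the structural/substitutive rules and only invokes the induction hypotheses, never the nature of $\rel$ on the related pair. This is the ``trivial change'' alluded to in the text just before the statement.

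For the second implication the argument is identical, but with $\closc\rel$ replaced by $\closbc\rel$, $\nf\cdot$ by $\rnf\cdot$, $\bisim$ by $\rbisim$, and Lemma~\ref{l:main-lemma-bisim} by Lemma~\ref{l:main-lemma-rbisim}. One verifies that $\closbc\rel$ is a refined bisimulation by replaying the proof of Lemma~\ref{l:main-lemma-rbisim} (including the control-stuck subcase sketched in the excerpt, which goes through unchanged since it manipulates $\closbc\rel$ and $\rbisim$ and never depends on $\rel$ being a refined bisimulation on a specific pair). The only modified case is again the one where $\tmzero \closbc\rel \tmone$ comes from $\tmzero \rel \tmone$: there we invoke the hypothesis that $\rel$ is a refined bisimulation up to context to get $\tmzero' \rnf{\closbc\rel} \tmone'$ instead of $\tmzero' \rnf\rel \tmone'$. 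Note that the extra closure rule of $\closbc\rel$ (composition with $\rbisim$ on both sides) is needed in the other cases of the proof of Lemma~\ref{l:main-lemma-rbisim} — this is why we close up to context \emph{and} refined bisimilarity — but it plays no special role in the modified base case.

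The only mildly delicate point — and the one I would spell out with a sentence — is that the induction in Lemmas~\ref{l:main-lemma-bisim} and~\ref{l:main-lemma-rbisim} really is structured so that the hypothesis ``$\rel$ is a bisimulation'' is used only in that one leaf case, and everywhere else one appeals to the outer induction hypothesis on fewer steps or the inner one on a smaller derivation. Once that is observed, both statements are immediate: in each case the relation $\closc\rel$ (resp.\ $\closbc\rel$) satisfies the defining clause of a (refined) bisimulation, so it is a (refined) bisimulation, and in particular $\rel \subseteq \closc\rel$ (resp.\ $\rel \subseteq \closbc\rel$) is contained in $\bisim$ (resp.\ $\rbisim$). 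I do not anticipate a genuine obstacle here; the ``hard part'' is purely bookkeeping, namely being careful that the change is confined to the single leaf case and that $\rnf\cdot$ and $\nf\cdot$ are monotone in their argument relation, which is clear from their inductive definitions.
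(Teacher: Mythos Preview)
Your proposal is correct and matches the paper's own argument essentially verbatim: the paper also proves the lemma by replaying the proofs of Lemmas~\ref{l:main-lemma-bisim} and~\ref{l:main-lemma-rbisim}, noting that the only change required is in the inductive case where $\tmzero \closc\rel \tmone$ (resp.\ $\tmzero \closbc\rel \tmone$) is derived from $\tmzero \rel \tmone$. Your observation that this is the sole place the bisimulation hypothesis on $\rel$ is used, and that the up-to-context hypothesis already yields the needed $\nf{\closc\rel}$ (resp.\ $\rnf{\closbc\rel}$) conclusion, is exactly the point.
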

Consequently, if $\rel$ is a bisimulation up to context, and if
$\tmzero \rel \tmone$, then $\tmzero \bisim \tmone$, because $\rel
\, \subseteq \, \closc\rel \, \subseteq \, \bisim$.

\begin{example}
  We can simplify the proof of bisimilarity between Turing's fixed
  point combinator and its delimited-control variant (cf. Example
  \ref{e:fixed-point}); indeed, it is enough to prove that $\rel = \{
  (\app \theta \theta, \reset {\app \theta {\shift \vark {\app \vark
        \vark}}}), (\app \theta \theta, \app{\lamp \varx {\reset{\app
        \theta \varx}}}{\lamp \varx {\reset{\app \theta \varx}}}) \}$
  is a bisimulation up to context.
\end{example}

When proving equivalence of terms, it is sometimes easier to reason in a
small-step fashion instead of trying to evaluate terms completely. To allow this
kind of reasoning, we define the following small-step notion.
\begin{definition}
  A relation $\rel$ on terms is a small-step simulation up to context
  if $\tmzero \rel \tmone$ implies:
  \begin{itemize}
  \item if $\tmzero \redcbv \tmzero'$, then there exists $\tmone'$ such that
    $\tmone \clocbv \tmone'$ and $\tmzero' \closc\rel \tmone'$;
  \item if $\tmzero$ is a normal form, then there exists $\tmone'$
    such that $\tmone \evalcbv \tmone'$ and $\tmzero \nf{\closc\rel}
    \tmone'$.
  \end{itemize}
  A relation $\rel$ is a small-step bisimulation up to context if both
  $\rel$ and $\inv\rel$ are small-step simulations up to context.
\end{definition}
Similarly, we can define the refined variant. Again, it is easy to
check the validity of these two proof techniques.
\begin{lemma}
  If $\rel$ is a small-step bisimulation up to context, then
  $\closc\rel$ is a bisimulation. If $\rel$ is a refined small-step
  bisimulation up to context, then $\closbc\rel$ is a refined
  bisimulation.
\end{lemma}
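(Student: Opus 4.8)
The plan is to mimic, almost line for line, the proof of Lemma~\ref{l:main-lemma-bisim}. Concretely, I would prove the strengthened statement: if $\rel$ is a small-step bisimulation up to context, then for every $m$, whenever $\tmzero \closc\rel \tmone$ and $\tmzero$ evaluates to a normal form $\tmzero'$ in $m$ reduction steps, there is a normal form $\tmone'$ with $\tmone \evalcbv \tmone'$ and $\tmzero' \nf{\closc\rel} \tmone'$. As for Lemma~\ref{l:main-lemma-bisim}, the argument proceeds by nested induction on $m$ and on the derivation of $\tmzero \closc\rel \tmone$. The reflexivity rule and all the structural cases of $\closc\rel$ (the substitution, context-plugging, and two abstraction rules of Fig.~\ref{f:closc}) are handled exactly as there: each decomposes the reduction sequence of $\tmzero$ and invokes the induction hypothesis, either at a strictly smaller step count or at a strict sub-derivation of $\closc\rel$. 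The only genuinely new case is the base case in which $\tmzero \closc\rel \tmone$ is derived directly from $\tmzero \rel \tmone$.

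For that case I would run a further, inner induction on $m$. If $m = 0$, then $\tmzero = \tmzero'$ is already a normal form, and the second clause of the definition of small-step simulation up to context immediately yields $\tmone \evalcbv \tmone'$ with $\tmzero \nf{\closc\rel} \tmone'$. If $m > 0$, write $\tmzero \redcbv \tm'$ with $\tm' \evalcbv \tmzero'$; since $\redcbv$ is deterministic, $\tm'$ reaches $\tmzero'$ in exactly $m-1$ steps. The first clause of the definition supplies $\tm''$ with $\tmone \clocbv \tm''$ and $\tm' \closc\rel \tm''$. I then apply the outer induction hypothesis of the strengthened statement to $\tm' \closc\rel \tm''$ at $m-1 < m$, obtaining a normal form $\tmone'$ with $\tm'' \evalcbv \tmone'$ and $\tmzero' \nf{\closc\rel} \tmone'$; since $\tmone \clocbv \tm'' \clocbv \tmone'$ and $\tmone' \not\redcbv$, we get $\tmone \evalcbv \tmone'$. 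This is precisely where it matters that the small-step definition threads $\closc\rel$ --- not merely $\rel$ --- through both of its clauses, so that the induction hypothesis, which is stated for $\closc\rel$, applies. A symmetric argument handles $\inv\rel$, so $\closc\rel$ is a bisimulation, and consequently $\tmzero \rel \tmone$ implies $\tmzero \bisim \tmone$ via $\rel \subseteq \closc\rel \subseteq \bisim$.

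For the refined statement I would repeat the same scheme, replacing $\closc\rel$ by $\closbc\rel$ and $\nf{\closc\rel}$ by $\rnf{\closbc\rel}$ throughout, and reusing in the structural cases the extra ingredients already present in the proof of Lemma~\ref{l:main-lemma-rbisim}: the additional $\rbisim$-closure rule defining $\closbc\rel$ (harmless since $\rbisim \closbc\rel \rbisim \subseteq \closbc\rel$), the congruence of $\bisim$, and Lemmas~\ref{l:beta-omega-ax} and~\ref{l:rbisim-trick} for the control-stuck subcase. Only the base case $\tmzero \rel \tmone$ changes, and it changes exactly as in the previous paragraph, now with the refined small-step clauses (which involve $\rnf{\closbc\rel}$ and $\closbc\rel$).

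The main obstacle is organizational rather than mathematical: one has to phrase the strengthened statement over $\closc\rel$ (respectively $\closbc\rel$) and arrange the lexicographic induction so that the new base case may legitimately call the induction hypothesis at a strictly smaller step count, and one must observe that a single small-step of the candidate relation lowers the step count by exactly one --- which is exactly where determinism of $\redcbv$ is used. Apart from that, the whole argument is a re-run of the proofs of Lemmas~\ref{l:main-lemma-bisim} and~\ref{l:main-lemma-rbisim}, as the preceding discussion suggests.
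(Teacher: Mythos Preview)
Your proposal is correct and matches the paper's intended approach: the paper does not spell out a proof but indicates (as for the big-step up-to-context lemma just above) that one re-runs the proof of Lemma~\ref{l:main-lemma-bisim} (resp.\ Lemma~\ref{l:main-lemma-rbisim}) and only adjusts the base case $\tmzero \rel \tmone$, exactly as you describe. The one cosmetic remark is that your ``further, inner induction on $m$'' is not a separate induction---it is just the $m>0$ step of the main lexicographic induction, appealing to the outer hypothesis at $m-1$.
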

In the next section we show how these relations can be used
(Proposition~\ref{p:small-step}).

\section{Examples}
\label{s:examples}

We now illustrate the usefulness of the relations and techniques
defined in this paper, by proving some terms equivalences derived from
the axiomatization of $\lamshift$ \cite{Kameyama-Hasegawa:ICFP03}. The
relationship between contextual equivalence and Kameyama and
Hasegawa's axioms has been studied in
\cite{Biernacki-Lenglet:FOSSACS12}, using applicative bisimilarity. In
particular, we show that terms equated by all the axioms except for
$\AXshiftelim$ ($\shift \vark {\app \vark \tm}=\tm$ if $\vark \notin
\fv \tm$) are applicative bisimilar. The same result can be obtained
for normal form bisimilarity, using the same candidate relations as
for applicative bisimilarity (see Propositions 1 to 4 in
\cite{Biernacki-Lenglet:FOSSACS12}), except for the $\AXbetaomega$
axiom, where the equivalence proof becomes much simpler (see Lemma
\ref{l:beta-omega-ax}). The terms $\shift \vark {\app \vark \val}$ and
$\val$ (equated by $\AXshiftelim$) are not (applicative or normal
form) bisimilar, because the former is control stuck while the latter
is not. Conversely, there exist bisimilar terms that are not related
by the axiomatization, such as $\app \Omega \Omega$ and $\Omega$, or
Curry's and Turing's combinators (Example \ref{e:fixed-point}).

In this section, we propose several terms equivalences, the proofs of
which are quite simple using normal form bisimulation, especially
compared to applicative bisimulation. In the following, we write $\Id$
for the identity bisimulation $\{(\tm, \tm)\}$.
\begin{proposition}
  If $\varx \notin \fv \ctx$, then $\inctx \ctx {\app {\lamp \varx \tmzero}
    \tmone} \bisim \app {\lamp \varx {\inctx \ctx \tmzero}} \tmone$.
\end{proposition}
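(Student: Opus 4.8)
The plan is to prove this by exhibiting a \emph{small-step bisimulation up to context}. Since the proposition is implicitly universally quantified over $\ctx$, $\tmzero$ and $\tmone$ (subject to $\varx \notin \fv \ctx$), it suffices to show that
\[
  \rel = \{\, (\inctx \ctx {\app {\lamp \varx \tmzero} \tmone},\ \app {\lamp \varx {\inctx \ctx \tmzero}} \tmone) \mid \varx \notin \fv \ctx \,\}
\]
is a small-step bisimulation up to context: then $\closc\rel$ is a bisimulation, so $\rel \, \subseteq \, \closc\rel \, \subseteq \, \bisim$, and the pair in the statement, which lies in $\rel$, lies in $\bisim$. I will use two facts that are immediate from the grammars of contexts: every pure context is an evaluation context, and pure (respectively evaluation) contexts are closed under composition. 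In particular, putting the frame $\app {\lamp \varx \tmzero} \mtctx$ (or $\app {\lamp \varx {\inctx \ctx \tmzero}} \mtctx$) on top of any evaluation context, and for the first one also on top of $\ctx$, again yields an evaluation context, which is moreover pure whenever the context we started from is. Note that $\varx$ is bound only in $\tmzero$, never in the argument position, so rearranging $\tmone$ never causes a variable capture.

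The core of the proof is a case analysis on the argument $\tmone$; the two components of each pair play symmetric roles, so the same analysis shows that both $\rel$ and $\inv\rel$ are small-step simulations up to context. If $\tmone$ is a value $\val$, both components reduce in exactly one step to $\inctx \ctx {\subst \tmzero \varx \val}$---the left one by rule $\RRbeta$, since $\ctx$ is an evaluation context, the right one because $\varx \notin \fv \ctx$---and we conclude by reflexivity of $\closc\rel$. If $\tmone$ is reducible, write $\tmone = \inctx \rctx \redex$ by unique decomposition and let $\tmone'$ be its one-step reduct; since contracting $\redex$---including the capture performed by rule $\RRshift$, which reaches only up to the reset belonging to $\redex$---does not depend on the surrounding evaluation context, both components reduce in exactly one step, to $\inctx \ctx {\app {\lamp \varx \tmzero} {\tmone'}}$ and $\app {\lamp \varx {\inctx \ctx \tmzero}} {\tmone'}$ respectively, and this new pair is again in $\rel$.

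If $\tmone$ is stuck, both components are stuck normal forms of the same kind (both control stuck, or both open stuck), hence do not reduce, so the normal-form clause of the definition applies with zero steps on the other side and we must check $\inctx \ctx {\app {\lamp \varx \tmzero} \tmone} \nf{\closc\rel} \app {\lamp \varx {\inctx \ctx \tmzero}} \tmone$. When $\tmone = \inctx {\ctx'} {\shift \vark \tm}$ is control stuck, both components have the form $\inctx D {\shift \vark \tm}$ for a \emph{pure} context $D$---namely $\inctx \ctx {\app {\lamp \varx \tmzero} {\ctx'}}$ on the left and $\app {\lamp \varx {\inctx \ctx \tmzero}} {\ctx'}$ on the right; by the inference rule for control stuck terms it remains to relate $\reset \tm$ to itself (reflexivity) and, for a fresh $z$, the two contexts $D$ plugged with $z$, which is exactly the instance of $\rel$ with $\inctx {\ctx'} z$ as argument. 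When $\tmone = \inctx {\rctx'} {\app y \val}$ is open stuck, both components have the form $\inctx D {\app y \val}$ for an \emph{evaluation} context $D$ (the same two expressions, with $\rctx'$ in place of $\ctx'$); by the inference rule for open stuck terms we need $\val \nf{\closc\rel} \val$ (reflexivity) and the two $D$'s to be related by the extension of $\closc\rel$ to contexts. If $\rctx'$ is pure, so are both $D$'s and we are again reduced to the instance of $\rel$ with argument $\inctx {\ctx'} z$; otherwise, writing $\rctx' = \inctx {\rctx''} {\reset {\ctx''}}$ by splitting it at its \emph{innermost} enclosing reset, the two $D$'s split the same way, and the resulting obligations are reflexivity of $\closc\rel$ on $\reset {\inctx {\ctx''} z}$ and, once more, an instance of $\rel$ (this time with argument $\inctx {\rctx''} z$).

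The bookkeeping above is routine; I expect the two delicate points to be: (i) the observation that, when $\tmone$ itself reduces, the contracted redex---together with any shift/reset interaction it triggers---lies entirely inside $\tmone$, hence is contracted identically whether the ambient context is $\inctx \ctx {\app {\lamp \varx \tmzero} \mtctx}$ or $\app {\lamp \varx {\inctx \ctx \tmzero}} \mtctx$ (which is exactly why it matters that both of these are evaluation contexts carrying no stray reset); and (ii) in the open stuck case, cutting $\rctx'$ at its innermost enclosing reset, so that the condition in the extension of $\closc\rel$ to contexts---a reset surrounding the hole on one side matched by one on the other---is met by both decompositions simultaneously.
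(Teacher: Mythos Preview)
Your argument is correct. The route, however, differs from the paper's: the paper shows directly that
$\{ (\inctx \ctx {\app {\lamp \varx \tmzero} \tmone},\ \app {\lamp \varx {\inctx \ctx \tmzero}} \tmone) \mid \varx \notin \fv \ctx \} \cup \Id$
is a (big-step) normal form bisimulation, by case analysis on the \emph{result of the evaluation} of $\tmone$ (value, control stuck, open stuck, or divergence). You instead use a \emph{small-step} bisimulation \emph{up to context}, doing case analysis on $\tmone$ itself and handling the ``$\tmone$ reducible'' case by an explicit one-step transition that stays in $\rel$. The trade-off is minor: the paper's big-step formulation absorbs the reducible case into the evaluation and appends $\Id$ to obtain the needed reflexivity, while your up-to-context packaging gets reflexivity for free from $\closc\rel$ at the cost of the extra small-step clause. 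Both arguments produce the same obligations in the stuck cases (an instance of $\rel$ for the surrounding contexts, reflexivity for the shared sub-term), so neither is materially simpler; they are two standard ways to phrase the same case analysis.
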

\begin{proof}
  By showing that $\{ (\inctx \ctx {\app {\lamp \varx \tmzero}
    \tmone}, \app {\lamp \varx {\inctx \ctx \tmzero}} \tmone), \varx
  \notin \fv\ctx \} \cup \Id$ is a normal form bisimulation. The proof
  is straightforward by case analysis on the result of the evaluation
  of $\tmone$ (if it exists).  \qed
\end{proof}
The next example demonstrates how useful small-step relations can be.
\begin{proposition}
  \label{p:small-step}
  If $\varx \notin \fv \ctx$, then $\reset{\app {\lamp \varx {\reset {\inctx
          \ctx \varx}}} \tm} \bisim \reset {\inctx \ctx \tm}$.
\end{proposition}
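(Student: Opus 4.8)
The plan is to build a small-step bisimulation up to context relating the two terms. Writing $\val_0 = \lamp\varx{\reset{\inctx\ctx\varx}}$, so that the terms at hand are $\reset{\app{\val_0}{\tm}}$ and $\reset{\inctx\ctx\tm}$, I would take
\[
  \rel \;=\; \bigl\{\,\bigl(\reset{\app{\lamp\varx{\reset{\inctx\ctx\varx}}}{\tm}},\; \reset{\inctx\ctx\tm}\bigr) \;:\; \varx \notin \fv\ctx\,\bigr\} \;\cup\; {\bisim},
\]
with $\ctx$ ranging over pure contexts and $\tm$ over arbitrary terms, and show that $\rel$ is a small-step bisimulation up to context. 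Since $\rel \subseteq \closc\rel$ and, $\closc\rel$ being then a bisimulation, $\closc\rel \subseteq {\bisim}$, this gives $\reset{\app{\val_0}{\tm}} \bisim \reset{\inctx\ctx\tm}$. The pairs coming from $\bisim$ require no real work: $\bisim$ is a bisimulation, a single reduction step preserves $\bisim$, and $\bisim \subseteq \closc\rel$, which together discharge the small-step clauses for both $\bisim$ and $\inv\bisim = {\bisim}$.

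For a pair $(\reset{\app{\val_0}{\tm}},\, \reset{\inctx\ctx\tm})$ of the first family I would proceed by a case analysis on the shape of $\tm$, which fixes the redex fired on the left. If $\tm \redcbv \tm'$, both terms take a single step inside $\tm$ and reach $\reset{\app{\val_0}{\tm'}}$ and $\reset{\inctx\ctx{\tm'}}$, again of the first family. If $\tm$ is a value $\val$, the left term $\RRbeta$-reduces to $\reset{\reset{\inctx\ctx\val}}$ (this is the only place $\varx \notin \fv\ctx$ is used) while the right term stays put, and the two match because $\reset{\reset{\inctx\ctx\val}} \bisim \reset{\inctx\ctx\val}$ by Example~\ref{e:reset-reset} and $\bisim \subseteq \closc\rel$. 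If $\tm = \inctx{\ctx'}{\shift\vark{\tm'}}$ is control stuck, both terms $\RRshift$-reduce in one step, capturing $\vctx{\val_0}{\ctx'}$ on the left and $\inctx\ctx{\ctx'}$ on the right; since feeding a fresh variable $y$ to these captured contexts yields exactly an instance of the first family, $\reset{\app{\val_0}{\inctx{\ctx'}y}} \mathrel\rel \reset{\inctx\ctx{\inctx{\ctx'}y}}$, the two captured continuations are related by $\nf{\closc\rel}$, and the substitutivity and context-closure rules of Figure~\ref{f:closc} then relate the two reducts. Finally, if $\tm = \inctx{\rctx'}{\app z\val}$ is open stuck, both terms are themselves open stuck normal forms; matching them through the open-stuck clause reduces to relating the two evaluation contexts they carry, both of which enclose the hole by a reset, so one uses the reset-disjunct of the extension of $\closc\rel$ to contexts, and, after locating the innermost such reset (which depends on whether $\rctx'$ already contains one), every obligation it produces is either trivial by reflexivity or, once more, of the first family.

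The inverse direction is handled symmetrically, the one subtlety being that on that side the left term has the form $\reset u$, and such a term is never control stuck and is an (open stuck) normal form exactly when $u$ is, so each of its reductions is either internal to $u$ or an instance of $\RRreset$; in the value sub-case this, together with $\redcbv\ \subseteq\ {\bisim}$ and Example~\ref{e:reset-reset}, is what lets one match the single $\RRbeta$-step available on the right. I expect the open stuck case to be the main obstacle: it requires unfolding the definition of $\closc\rel$ on (possibly non-pure) evaluation contexts, pinpointing the innermost reset around the hole, and checking that each resulting proof obligation is either a reflexivity instance or again a member of the first family of $\rel$. Everything else reduces to a routine use of unique decomposition, the closure rules of Figure~\ref{f:closc}, and the equivalence $\reset{\reset\tm} \bisim \reset\tm$.
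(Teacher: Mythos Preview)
Your proposal is correct and follows essentially the same approach as the paper: the same candidate relation $\rel$ (the first family together with $\bisim$), the same small-step bisimulation up to context strategy, and the same four-way case analysis on $\tm$, with each case handled identically (including the sub-case split on whether $\rctx'$ contains a reset in the open-stuck case). You spell out the inverse direction and the use of substitutivity/context closure a bit more explicitly than the paper does, but there is no substantive difference.
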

\begin{proof}
  Let $\rel = \{(\reset{\app {\lamp \varx {\reset {\inctx \ctx \varx}}} \tm},
  \reset {\inctx \ctx \tm}), \varx \notin \fv \ctx\}$. We prove that $\rel \cup
  \bisim$ is a small-step bisimulation up to context, by case analysis on $\tm$.
  \begin{itemize}
  \item If $\tm \redcbv \tm'$, then $\reset{\app {\lamp \varx {\reset {\inctx
            \ctx \varx}}} \tm} \redcbv \reset{\app {\lamp \varx {\reset {\inctx
            \ctx \varx}}}{\tm'}}$, $\reset {\inctx \ctx \tm} \redcbv \reset
    {\inctx \ctx {\tm'}}$, and we have $\reset{\app {\lamp \varx {\reset {\inctx
            \ctx \varx}}}{\tm'}} \rel \reset {\inctx \ctx {\tm'}}$, as required.
  \item If $\tm = \val$, then $\reset{\app {\lamp \varx {\reset {\inctx \ctx
            \varx}}} \val} \redcbv \reset{\reset {\inctx \ctx \val}}$. We have
    proved in Example \ref{e:reset-reset} that $\reset{\reset {\inctx \ctx
        \val}} \bisim \reset{\inctx \ctx \val}$.
  \item If $\tm = \inctx \rctx {\app y \val}$, then we have to relate
    $\reset{\app {\lamp \varx {\reset {\inctx \ctx \varx}}} \rctx }$
    and $\reset{\inctx \ctx \rctx }$ (we clearly have $\val \nf\bisim
    \val$). If $\rctx = \inctx {\rctx'}{\reset{\ctx'}}$, then we have
    $\reset{\app {\lamp \varx {\reset {\inctx \ctx \varx}}}{\inctx {\rctx'} z}}
    \rel \reset{\inctx \ctx {\inctx {\rctx'} z}}$ and $\reset{\inctx {\ctx'} z}
    \bisim \reset{\inctx{\ctx'} z}$ for a fresh $z$. If $\rctx = \ctx'$, then 
    $\reset{\app {\lamp \varx {\reset {\inctx \ctx \varx}}}{\inctx {\ctx'}
        z}} \rel \reset{\inctx \ctx {\inctx {\ctx'} z}}$ holds for a fresh $z$.
  \item If $\tm = \inctx {\ctx'}{\shift \vark {\tm'}}$, then $\reset{\app {\lamp
        \varx {\reset {\inctx \ctx \varx}}} \tm} \redcbv \reset{\subst{\tm'}
      \vark {\lam y {\reset {\app {\lamp \varx {\reset {\inctx \ctx
                  \varx}}}{\inctx {\ctx'} y}}}}}$, and $\reset {\inctx \ctx \tm}
    \redcbv \reset {\subst{\tm'} \vark {\lam y {\reset{\inctx \ctx
            {\inctx{\ctx'} y}}}}}$. We have $\reset {\app {\lamp \varx {\reset
          {\inctx \ctx \varx}}}{\inctx {\ctx'} y}} \rel \reset{\inctx \ctx
      {\inctx{\ctx'} y}}$, therefore $\reset{\subst{\tm'} \vark {\lam y
        {\reset {\app {\lamp \varx {\reset {\inctx \ctx \varx}}}{\inctx {\ctx'}
              y}}}}} \closc\rel \reset {\subst{\tm'} \vark {\lam y
        {\reset{\inctx \ctx {\inctx{\ctx'} y}}}}}$ holds, as wished. \qed
  \end{itemize}
\end{proof}
Without using small-step bisimulation, the definition of $\rel$ as
well as the bisimulation proof would be much more complex, since we
would have to compute the results of the evaluations of $\reset{\app
  {\lamp \varx {\reset {\inctx \ctx \varx}}} \tm}$ and of $\reset {\inctx
  \ctx \tm}$, which is particularly difficult if $\tm$ is a control
stuck term.

For the next example, we have to use refined bisimilarity.
\begin{proposition}
  \label{p:ex-refined}
  If $\vark' \notin \fv \ctx \cup \fv \tm$ and $\varx \notin \fv \ctx$, then we
  have 
  $\inctx \ctx {\shift \vark \tm} \rbisim \shift {\vark'}{\subst \tm \vark {\lam
      \varx {\reset {\app {\vark'}{\inctx \ctx \varx}}}}}$.
\end{proposition}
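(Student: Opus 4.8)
The plan is to observe that each side is a control stuck normal form --- the left side $\inctx \ctx {\shift \vark \tm}$ and the right side $\shift {\vark'}{\subst \tm \vark {\lam \varx {\reset {\app {\vark'}{\inctx \ctx \varx}}}}}$, a control stuck term whose captured pure context is empty --- and that neither reduces. Hence it suffices to prove $\inctx \ctx {\shift \vark \tm} \rnf\rbisim \shift {\vark'}{\subst \tm \vark {\lam \varx {\reset {\app {\vark'}{\inctx \ctx \varx}}}}}$: adjoining this pair and its inverse to $\rbisim$ then yields a refined bisimulation (the relation $\rnf\rbisim$ being symmetric because $\rbisim$ is), and the result follows by maximality of $\rbisim$. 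To obtain the $\rnf\rbisim$ statement, I would first $\alpha$-rename the inner binder $\vark$ of the left side to $\vark'$, which is licit since $\vark' \notin \fv \tm$. Now both control stuck terms bind $\vark'$, with captured pure contexts $\ctx$ and $\mtctx$, so the inference rule defining $\rnf\rel$ on control stuck terms (stated just before Definition~\ref{d:refined-bisim}) applies; taking $\vark''$ and $z$ fresh, it reduces the goal to
\[
\reset {\subst {\bigl(\subst \tm \vark {\vark'}\bigr)}{\vark'}{\lam z {\reset {\app {\vark''}{\inctx \ctx z}}}}}
\;\rbisim\;
\reset {\subst {\bigl(\subst \tm \vark {\lam \varx {\reset {\app {\vark'}{\inctx \ctx \varx}}}}\bigr)}{\vark'}{\lam z {\reset {\app {\vark''}z}}}}.
\]

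The next step is to collapse both iterated substitutions. On the left, since $\vark' \notin \fv \tm$ the net effect is just to substitute $\lam z {\reset {\app {\vark''}{\inctx \ctx z}}}$ for $\vark$ in $\tm$, giving $\reset {\subst \tm \vark {\lam z {\reset {\app {\vark''}{\inctx \ctx z}}}}}$. On the right, $\vark' \notin \fv \tm$ again lets the substitution for $\vark'$ pass through the one for $\vark$ and act only inside the copies of $\lam \varx {\reset {\app {\vark'}{\inctx \ctx \varx}}}$; since moreover $\vark' \notin \fv \ctx$ and $\varx \ne \vark''$ (by freshness of $\vark''$), that substitution merely replaces the head occurrence $\vark'$ by $\lam z {\reset {\app {\vark''}z}}$, giving $\reset {\subst \tm \vark {\lam \varx {\reset {\app {\lamp z {\reset {\app {\vark''}z}}}{\inctx \ctx \varx}}}}}$. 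So it remains to prove
\[
\reset {\subst \tm \vark {\lam \varx {\reset {\app {\vark''}{\inctx \ctx \varx}}}}}
\;\rbisim\;
\reset {\subst \tm \vark {\lam \varx {\reset {\app {\lamp z {\reset {\app {\vark''}z}}}{\inctx \ctx \varx}}}}}.
\]

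This I would get from Proposition~\ref{p:small-step} and the congruence of $\bisim$. Instantiating Proposition~\ref{p:small-step} with the pure context $\vctx{\vark''}{\mtctx}$ and the term $\inctx \ctx \varx$ yields $\reset {\app {\lamp z {\reset {\app {\vark''}z}}}{\inctx \ctx \varx}} \bisim \reset {\app {\vark''}{\inctx \ctx \varx}}$; by Corollary~\ref{c:congruence}, forming the $\lambda$-abstraction over $\varx$, substituting the resulting bisimilar values for $\vark$ in $\tm$, and placing a reset around the result all preserve $\bisim$, so the two displayed terms are $\bisim$-related, hence $\rbisim$-related by Proposition~\ref{p:bisim-subset-rbisim}. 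This closes the proof.

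I expect the only delicate part to be the substitution bookkeeping of the second step: one must compose $\subst{\cdot}{\vark}{\vark'}$ with $\subst{\cdot}{\vark'}{\cdot}$ so that the residual discrepancy is exactly an application of $\vark''$ versus of $\lam z {\reset {\app {\vark''}z}}$ beneath a reset --- the ``$\beta_\Omega$-under-reset'' shape that Proposition~\ref{p:small-step} settles --- and it is there that all three hypotheses $\vark' \notin \fv \ctx$, $\vark' \notin \fv \tm$, $\varx \notin \fv \ctx$ (together with freshness of $\vark''$ and $z$) are consumed. It is also worth stressing that refined bisimilarity is genuinely needed here: relating the two control stuck terms under the ordinary $\nf\rel$ would force the captured contexts $\ctx$ and $\mtctx$ to match, i.e. $\inctx \ctx \varx \bisim \varx$ for fresh $\varx$, which fails for any non-trivial $\ctx$ --- precisely the shortcoming $\rbisim$ was introduced to repair.
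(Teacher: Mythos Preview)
Your proof is correct and follows essentially the same route as the paper: both sides are control stuck, so one unfolds the $\rnf\rbisim$ clause, simplifies the resulting iterated substitutions to reach the goal $\reset{\subst \tm \vark {\lam \varx {\reset {\app {\vark''}{\inctx \ctx \varx}}}}} \rbisim \reset{\subst \tm \vark {\lam \varx {\reset {\app {\lamp z {\reset {\app {\vark''} z}}}{\inctx \ctx \varx}}}}}$, and then discharges it via Proposition~\ref{p:small-step} together with congruence and the inclusion $\bisim \subset \rbisim$. The paper's write-up is terser---it jumps directly to the simplified goal and invokes congruence of $\rbisim$ rather than of $\bisim$---but the argument is the same; your more explicit treatment of the $\alpha$-renaming and substitution composition is a fair expansion of what the paper leaves implicit.
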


\begin{proof}
  The two terms are control stuck terms, therefore, we have to prove
  $\reset{\subst \tm \vark {\lam \varx {\reset {\app {\vark''}{\inctx \ctx
            \varx}}}}} \rbisim \reset{\subst \tm \vark {\lam \varx {\reset {\app
          {\lamp y {\reset {\app {\vark''} y}} }{\inctx \ctx \varx}}}}}$ for a
  fresh $\vark''$. We know that $\reset {\app {\vark''}{\inctx \ctx \varx}}
  \bisim \reset{\app {\lamp y {\reset {\app {\vark''} y}} }{\inctx \ctx \varx}}$
  holds by Proposition \ref{p:small-step}. Consequently, we have $\reset {\app
    {\vark''}{\inctx \ctx \varx}} \rbisim \reset{\app {\lamp y {\reset {\app
          {\vark''} y}} }{\inctx \ctx \varx}}$ by Proposition
  \ref{p:bisim-subset-rbisim}. We can then conclude by congruence of
  $\rbisim$. \qed
\end{proof}
Without Proposition \ref{p:small-step}, we would have to prove $\reset
{\app {\vark''}{\inctx \ctx \varx}} \rbisim \reset{\app {\lamp y
    {\reset {\app {\vark''} y}} }{\inctx \ctx \varx}}$ directly, using
a small-step refined bisimulation up to context. Proving Proposition
\ref{p:ex-refined} with the regular normal form bisimilarity would
require us to equate $\inctx \ctx y$ and $y$ (where $y$ is fresh),
which is not possible if $\ctx= \vctx {\lamp z \Omega} \mtctx$.

\section{Conclusion}
\label{s:conclusion}

In this paper, we propose several normal formal bisimilarities and
up-to techniques for a $\lambda$-calculus with shift and reset, and we
demonstrate their usefulness on a number of examples. Proving
equivalences of terms with the regular normal form bisimilarity
generates minimal proof obligations, especially when used in
conjunction with (small-step) up-to context techniques. If the regular
bisimilarity fails to relate the tested terms, then the refined
bisimilarity can be of help. If they both fail, then we may have to
use the applicative bisimilarity~\cite{Biernacki-Lenglet:FOSSACS12},
which, unlike the bisimilarities of this paper, is complete.

We believe this work can easily be adapted to other delimited-control
operators as well as the CPS hierarchy \cite{Danvy-Filinski:LFP90}. It
might also be interesting to extend this work to the typed
setting. Another possible future work would be to define
\emph{environmental bisimulations} \cite{Sangiorgi-al:LICS07} for
$\lamshift$. When comparing two terms, environmental relations use an
additional component, the environment, which represents the current
knowledge of the observer. For example, in the pure
$\lambda$-calculus, when two tested terms reduce to values, they
become known to the observer and are added to the environment. The
observer can then challenge two $\lambda$-abstractions by applying
them to two related arguments built from the
environment. Environmental bisimilarities are usually sound and
complete, and also allow for up-to techniques.

Another issue is to find a characterization of contextual equivalence
for $\lambda$-calculi with abortive control operators. Normal form
bisimilarities have been defined for extensions of the
$\lambda\mu$-calculus \cite{Lassen:LICS06}, but they are usually not
complete, except in the presence of a store construct
\cite{Stoevring-Lassen:POPL07}. It might be possible to reach
completeness with applicative or environmental bisimilarities.
\vspace{-3mm}
\paragraph*{Acknowledgments:}
We thank Ma{\l}gorzata Biernacka and the anonymous referees for
insightful comments on the presentation of this work.

\bibliographystyle{plain}
\bibliography{../../mybib}

\newpage

\appendix

\section{Soundness Proof}
\label{a:soundness}

\begin{lemma}
  \label{l:redcbv-subst}
  If $\tm \redcbv \tm'$ then $\subst \tm \varx \val \redcbv \subst {\tm'} \varx \val$.
\end{lemma}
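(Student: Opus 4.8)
The plan is to proceed by case analysis on the reduction rule applied in $\tm \redcbv \tm'$. By the unique-decomposition lemma, $\tm = \inctx \rctx \redex$ and $\tm' = \inctx \rctx {\redex'}$ for a unique evaluation context $\rctx$ and a unique redex $\redex$, where $\redex \redcbv \redex'$ is a top-level instance of $\RRbeta$, $\RRshift$, or $\RRreset$ (i.e., the corresponding rule with the metalevel context taken to be $\mtctx$). After fixing, by $\alpha$-conversion, a representative of $\tm$ whose bound variables differ from $\varx$ and do not occur free in $\val$, it suffices to check, for each base rule, that $\subst \redex \varx \val$ is again a redex of the same shape which reduces to $\subst {\redex'} \varx \val$, and that the surrounding context is preserved; closure of $\redcbv$ under evaluation contexts then yields the result.

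Before the case analysis I would record two routine facts, each proved by a straightforward structural induction. First, substitution commutes with context filling and respects the context grammars: writing $\subst \rctx \varx \val$ for the context obtained by substituting $\val$ for $\varx$ throughout $\rctx$, one has $\subst{\inctx \rctx {\tm_0}}{\varx}{\val} = \inctx{\subst \rctx \varx \val}{\subst {\tm_0} \varx \val}$, and $\subst \rctx \varx \val$ is again an evaluation context --- because value-for-variable substitution maps values to values, the productions $\vctx \val \rctx$, $\apctx \rctx \tm$, $\resetctx \rctx$, and likewise those for pure contexts $\ctx$, are preserved. Second, the substitution composition lemma: for any value $\val_0$, if $y \neq \varx$ and $y \notin \fv \val$, then $\subst{\subst {\tm_0} y {\val_0}}{\varx}{\val} = \subst{\subst {\tm_0} \varx \val}{y}{\subst {\val_0} \varx \val}$.

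With these in hand, I would dispatch the three cases as follows. For $\RRbeta$, $\redex = \app {\lamp y {\tm_0}} {\val_0}$ and $\redex' = \subst {\tm_0} y {\val_0}$; then $\subst \redex \varx \val = \app {\lamp y {\subst {\tm_0} \varx \val}} {\subst {\val_0} \varx \val}$ is again a $\beta_v$-redex, which reduces by $\RRbeta$ to $\subst{\subst {\tm_0} \varx \val} y {\subst {\val_0} \varx \val}$, equal to $\subst {\redex'} \varx \val$ by substitution composition. For $\RRshift$, $\redex = \reset{\inctx \ctx {\shift \vark {\tm_0}}}$ and $\redex' = \reset{\subst {\tm_0} \vark {\lam z {\reset {\inctx \ctx z}}}}$ with $z$ fresh; by the first fact $\subst \ctx \varx \val$ is still a pure context, so $\subst \redex \varx \val$ is an $\RRshift$-redex (its side condition being met by picking $z$ not free in $\subst \ctx \varx \val$) reducing to $\reset{\subst{\subst {\tm_0} \varx \val} \vark {\lam z {\reset {\inctx {\subst \ctx \varx \val} z}}}}$, which equals $\subst {\redex'} \varx \val$ again by substitution composition (with $\val_0 = \lam z {\reset {\inctx \ctx z}}$). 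For $\RRreset$, $\redex = \reset {\val_0}$ and $\redex' = \val_0$; since $\subst {\val_0} \varx \val$ is a value, $\subst \redex \varx \val = \reset{\subst {\val_0} \varx \val}$ is an $\RRreset$-redex reducing to $\subst {\val_0} \varx \val = \subst {\redex'} \varx \val$. In every case $\subst \redex \varx \val \redcbv \subst {\redex'} \varx \val$ at top level, and since $\subst \rctx \varx \val$ is an evaluation context we conclude $\subst \tm \varx \val = \inctx{\subst \rctx \varx \val}{\subst \redex \varx \val} \redcbv \inctx{\subst \rctx \varx \val}{\subst {\redex'} \varx \val} = \subst {\tm'} \varx \val$. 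I do not expect any real obstacle here: the only points requiring care are the $\alpha$-conversion conventions that legitimize substitution composition and the freshness condition of $\RRshift$, and the entirely mechanical check that value-for-variable substitution preserves the context grammars.
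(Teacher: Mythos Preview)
Your proposal is correct and follows essentially the same approach as the paper: a case analysis on the three reduction rules $\RRbeta$, $\RRshift$, $\RRreset$, pushing the substitution through the evaluation context and the redex in each case. The paper's own proof is terser---it simply writes out the substituted terms and the resulting reduction for each rule---whereas you make explicit the auxiliary facts (substitution commuting with context filling, preservation of the context grammars, and the substitution composition lemma) that the paper uses implicitly.
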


\begin{proof}
  We proceed by case analysis on $\tm \redcbv \tm'$. 

  Suppose $\inctx \rctx {\app{\lamp y \tmzero}} \valzero \redcbv \inctx \rctx {\subst \tmzero y \valzero}$. We
  have 
  \begin{eqnarray*}
    \subst \tm \varx \val & =& \inctx {\subst \rctx \varx \val}{\app {\lamp y
        {\subst \tmzero \varx \val}}{\subst \valzero \varx \val}} \\
    & \redcbv& \inctx
    {\subst \rctx \varx \val}{\subst {\subst \tmzero \varx \val} y {\subst
        \valzero \varx \val}} = \subst {\tm'} \varx \val,
  \end{eqnarray*}
  as required.

  Suppose $\inctx \rctx {\reset{\inctx \ctxzero {\shift \vark \tmzero}}} \redcbv
  \inctx \rctx {\reset{\subst \tmzero \vark {\lam y {\reset {\inctx \ctxzero
            y}}}}}$. We have
  \begin{eqnarray*}
    \subst \tm \varx \val &=& \inctx{\subst \rctx \varx \val}{\reset{\inctx
        {\subst \ctxzero \varx \val}{\shift \vark {\subst \tmzero \varx \val}}}} \\
    &\redcbv& \inctx {\subst \rctx \varx \val}{\reset{\subst {\subst \tmzero \varx \val} \vark {\lam y {\reset
          {\inctx {\subst \ctxzero \varx \val} y}}}}} = \subst {\tm'} \varx \val,
  \end{eqnarray*}
  as required.

  Suppose $\inctx \rctx {\reset \valzero} \redcbv \inctx \rctx \valzero$. We
  have 
  $$\subst \tm \varx \val =
  \inctx{\subst \rctx \varx \val}{\reset {\subst \valzero \varx \val}} \redcbv
  \inctx{\subst \rctx \varx \val}{\subst \valzero \varx \val},$$ as
  required. 
  \qed
\end{proof}

\begin{lemma}
  \label{l:closc-properties}
  Let $\rel$ be a bisimulation. 
  \begin{itemize}
  \item If $\inctx \rctxzero {\app \varx \valzero} \nf{\closc\rel} \inctx
    \rctxone {\app \varx \valone}$ then $\inctx \rctxzero {\app \varx \valzero}
    \closc\rel \inctx \rctxone {\app \varx \valone}$ (and similarly for
    $\closbc\rel$).
  \item If $\lam \varx \tmzero \nf{\closc\rel} \lam \varx \tmone$ then $\tmzero
    \closc\rel \tmone$ (and similarly for $\closbc\rel$).
  \end{itemize}
\end{lemma}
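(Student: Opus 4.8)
The plan is to prove both items by a short case analysis on the last rule used to derive the given $\nf{\closc\rel}$ (resp.\ $\rnf{\closbc\rel}$) statement, followed by a reassembly of that rule's premises into a $\closc\rel$ (resp.\ $\closbc\rel$) derivation, using only the substitutivity rule and the context-closure rule of Figure \ref{f:closc}. In both items the left-hand term is either an open stuck term or a value; since values, control stuck terms, and open stuck terms are pairwise distinct (this is immediate from Lemma \ref{l:stuck} and the unique decomposition of normal forms into an evaluation context and a redex-or-stuck core), the matching defining rule of $\nf{\closc\rel}$ is forced, and similarly $\rnf{\closbc\rel}$ coincides with $\nf{\cdot}$ on values and on open stuck terms.

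For the first item, $\inctx \rctxzero {\app \varx \valzero}$ is an open stuck term, so the derivation of $\inctx \rctxzero {\app \varx \valzero} \nf{\closc\rel} \inctx \rctxone {\app \varx \valone}$ ends with the rule for open stuck terms, whose premises give $\rctxzero \closc\rel \rctxone$ and $\valzero \nf{\closc\rel} \valone$. By reflexivity of $\closc\rel$ we have $\app \varx z \closc\rel \app \varx z$ for a fresh $z$; applying the substitutivity rule to this and to $\valzero \nf{\closc\rel} \valone$, with $z$ as the substituted variable, yields $\app \varx \valzero \closc\rel \app \varx \valone$ (since $z$ is fresh, $z \neq \varx$). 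Feeding this, together with $\rctxzero \closc\rel \rctxone$, into the context-closure rule produces $\inctx \rctxzero {\app \varx \valzero} \closc\rel \inctx \rctxone {\app \varx \valone}$, as required. The $\closbc\rel$ case is verbatim the same, because the hypothesis now reads $\inctx \rctxzero {\app \varx \valzero} \rnf{\closbc\rel} \inctx \rctxone {\app \varx \valone}$ with $\rnf{\closbc\rel}$ equal to $\nf{\cdot}$ on open stuck terms and values, and $\closbc\rel$ is closed under the same substitutivity and context-closure rules.

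For the second item, $\lam \varx \tmzero$ is a value, so $\lam \varx \tmzero \nf{\closc\rel} \lam \varx \tmone$ can only come from the rule for values, which gives $\appval {(\lam \varx \tmzero)} z \closc\rel \appval {(\lam \varx \tmone)} z$ for a fresh $z$, that is, $\subst \tmzero \varx z \closc\rel \subst \tmone \varx z$ by the definition of $\rawappval$. It remains to undo this renaming: $\varx$ is a variable, hence a value, and $\varx \nf{\closc\rel} \varx$ holds (by the value rule, since $\app \varx y \closc\rel \app \varx y$ by reflexivity), so applying the substitutivity rule to $\subst \tmzero \varx z \closc\rel \subst \tmone \varx z$ and $\varx \nf{\closc\rel} \varx$, substituting $\varx$ for $z$, gives $\subst {(\subst \tmzero \varx z)} z \varx \closc\rel \subst {(\subst \tmone \varx z)} z \varx$, which is exactly $\tmzero \closc\rel \tmone$ because $z$ is fresh in $\tmzero$ and $\tmone$. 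The $\closbc\rel$ case is again identical.

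Everything above is a mechanical use of the closure rules, so I do not expect a real obstacle; the two points that deserve care are the case analyses fixing the last rule of the $\nf{\cdot}$/$\rnf{\cdot}$ derivation (which rely on the disjointness of the three shapes of normal forms) and the freshness side conditions that make the two uses of the substitutivity rule — first to push a value under $\app \varx {\cdot}$, then to rename $z$ back to $\varx$ — go through.
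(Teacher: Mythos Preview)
Your proof is correct and follows essentially the same approach as the paper: for the first item you extract the premises $\rctxzero \closc\rel \rctxone$ and $\valzero \nf{\closc\rel} \valone$ and reassemble them using reflexivity, substitutivity, and the context-closure rule (the paper applies context closure first and substitutivity second, you do the opposite, but this is immaterial). For the second item the paper just says ``easy by definition,'' implicitly using that $\varx$ itself is fresh for $\lam \varx \tmzero$ and $\lam \varx \tmone$ so that one may take $y=\varx$ and obtain $\tmzero \closc\rel \tmone$ directly; your explicit rename-back argument via substitutivity is an equally valid way to spell this out.
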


\begin{proof}
  The relation $\inctx \rctxzero {\app \varx \valzero} \nf{\closc\rel} \inctx
  \rctxone {\app \varx \valone}$ implies $\rctxzero \closc\rel \rctxone$ and
  $\valzero \nf{\closc\rel} \valone$. We have $\app \varx y \closc\rel \app
  \varx y$ for a fresh $y$, therefore we have $\inctx \rctxzero{\app \varx y}
  \closc\rel \inctx \rctxone {\app \varx y}$, which in turn implies $\inctx
  \rctxzero{\app \varx \valzero} \closc\rel \inctx \rctxone {\app \varx
    \valone}$.

  The second item is easy by definition of $\lam \varx \tmzero \nf{\closc\rel}
  \lam \varx \tmone$. \qed
\end{proof}

\begin{lemma}[Lemma \ref{l:main-lemma-bisim} in the paper]
  \label{l:closc-bisim}
  If $\rel$ is a bisimulation, then $\closc\rel$ is a bisimulation.
\end{lemma}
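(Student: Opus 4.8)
The plan is to reduce the statement to a \emph{simulation} claim. Every rule of Fig.~\ref{f:closc} is symmetric in $\rel$, so $\inv{(\closc\rel)} = \closc{(\inv\rel)}$ and, likewise, $\nf{\inv{(\closc\rel)}} = \inv{(\nf{\closc\rel})}$; since $\inv\rel$ is again a normal form bisimulation whenever $\rel$ is, it suffices to prove the following for an arbitrary normal form bisimulation $\rel$: whenever $\tmzero \closc\rel \tmone$ and $\tmzero \clocbv \tmzero'$ in $m$ steps with $\tmzero'$ a normal form, there is a normal form $\tmone'$ with $\tmone \evalcbv \tmone'$ and $\tmzero' \nf{\closc\rel} \tmone'$. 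Applying this once to $\rel$ and once to $\inv\rel$ then gives that both $\closc\rel$ and $\inv{(\closc\rel)}$ are simulations, i.e.\ that $\closc\rel$ is a bisimulation. I would establish the displayed implication by a lexicographic induction, first on $m$ and then on the derivation of $\tmzero \closc\rel \tmone$, taking the derivation order to include the $\closc\rel$-subderivations sitting inside the $\nf{\closc\rel}$-premises, so that a case which needs to inspect the evaluation of a component read off from a $\nf{\closc\rel}$-fact can still appeal to the induction hypothesis at the same value of $m$.

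The argument proceeds by case analysis on the last rule deriving $\tmzero \closc\rel \tmone$. The two base cases are immediate: for reflexivity one only needs that $\nf{\closc\rel}$ is reflexive on normal forms, which follows from reflexivity of $\closc\rel$ on terms and hence on evaluation contexts (via the extension of $\closc\rel$ to contexts recalled before Definition~\ref{d:nf-bisim}); for the rule $\tmzero \rel \tmone$ one uses that $\rel$ is itself a simulation together with $\nf\rel \subseteq \nf{\closc\rel}$. The four inductive rules follow a common pattern: isolate the principal subterm $\tmzero_0$ with $\tmzero_0 \closc\rel \tmone_0$, follow the reduction of $\tmzero$ while it stays inside $\tmzero_0$ --- lifting the reductions of $\tmzero_0$ through a substitution by Lemma~\ref{l:redcbv-subst} in the substitution rule, and through a context otherwise --- and, once $\tmzero_0$ reaches a normal form $\tmzero_0'$, apply the inner induction hypothesis to $\tmzero_0 \closc\rel \tmone_0$ (a strictly smaller derivation) to get $\tmone_0 \evalcbv \tmone_0'$ with $\tmzero_0' \nf{\closc\rel} \tmone_0'$; Lemma~\ref{l:closc-properties} then extracts the $\closc\rel$-relations on the components of these normal forms needed to rebuild the relation after the surrounding structure is plugged back in. Throughout one uses that $\closc\rel$ is substitutive and closed under context formation (hence its context-extension is stable under substitution) and that $\redcbv$ is deterministic, so that the reduction of $\tmone$ so constructed really is its evaluation.

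The hard part, and the reason the induction on $m$ is layered on top of the induction on derivations, is the substitution rule $\tmzero = \subst{\tmzero_0}\varx\valzero$, $\tmone = \subst{\tmone_0}\varx\valone$ with $\tmzero_0 \closc\rel \tmone_0$ and $\valzero \nf{\closc\rel} \valone$. When $\tmzero_0 \evalcbv \inctx\rctx{\app\varx\val}$, the term $\subst{\tmzero_0}\varx\valzero$ first reduces, by Lemma~\ref{l:redcbv-subst}, to $\inctx{\subst\rctx\varx\valzero}{\app\valzero{\subst\val\varx\valzero}}$, which is no longer stuck when $\valzero$ is a $\lambda$-abstraction: a fresh $\beta$-redex has appeared (and likewise the at most one step turning $\app\valzero{\cdot}$ into $\appval\valzero{\cdot}$). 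I would charge these steps, check that the resulting term is $\closc\rel$-related to its $\tmone$-side counterpart --- using $\valzero \nf{\closc\rel} \valone$, the $\rctx$- and $\val$-relations delivered by the inner induction hypothesis, substitutivity of $\closc\rel$, and the fact that $\rawappval$ introduces no gratuitous redex --- and then close the case by the outer induction hypothesis, now available because at least one step has been spent. The remaining subcases of the substitution rule ($\tmzero_0$ evaluating to a value, to a control stuck term, or to an open stuck term not headed by $\varx$) create no new redex, so substitution commutes with the normal form and the inner induction hypothesis suffices; the only thing to check there is that $\nf{\closc\rel}$ is stable under $\subst{(\cdot)}\varx\valzero$ and $\subst{(\cdot)}\varx\valone$, which again reduces to substitutivity of $\closc\rel$ applied to the $\rawappval$-clause. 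In the borderline situation where $\valzero$ is a variable but $\valone$ is not, one reads off the behaviour of $\appval\valone{\cdot}$ from the $\closc\rel$-derivation buried inside the $\nf{\closc\rel}$-premise, which is exactly why the derivation order was set up to reach through those premises. The context-filling, $\lambda$, and shift rules are analogous but easier; the one subtlety is that plugging a control stuck $\tmzero_0'$ into an evaluation context surrounding the hole with a $\rawreset$ un-sticks it and may fire a capture, which is once more absorbed by the induction on $m$. I expect this substitution subcase --- together with keeping the two induction measures straight --- to be the main obstacle; everything else is routine bookkeeping once that is in place.
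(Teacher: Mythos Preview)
Your plan is correct and matches the paper's proof: the same lexicographic induction on the number of reduction steps and the $\closc\rel$-derivation, the same case split on the last rule, and the same identification of the substitution rule with the hole variable in head position as the genuinely interesting case. Your reduction to a one-sided simulation statement via $\inv{(\closc\rel)} = \closc{(\inv\rel)}$ is in fact cleaner than the paper, which simply asserts that $\closc\rel$ is symmetric (which is only literally true when $\rel$ is).

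One correction of expectation: the evaluation-context rule is \emph{not} ``analogous but easier'' than substitution. In the paper it takes roughly as much work. After the inner term reaches a normal form you must also evaluate the surrounding context (via the premise $\inctx{\rctxzero}{\varx}\closc\rel\inctx{\rctxone}{\varx}$), substitute the inner normal form in, and only then continue; and there is a zero-step subcase---when the inner term is already a value and the context, filled with a fresh variable, is already an open stuck term headed by that variable---where no step has been spent and you cannot immediately invoke the outer hypothesis on $m$. The paper handles this exactly as you handle the analogous subcase of substitution: observe that the inner value must be a $\lambda$-abstraction (using the induction hypothesis at $m=0$ on the $\nf{\closc\rel}$-premise to rule out the variable/abstraction mismatch), fire one $\beta$-step, and recurse on $m-1$. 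Your description folds this into ``rebuild and use Lemma~\ref{l:closc-properties}'', which is not enough; plan for the same two-phase evaluation and the same zero-step escape hatch here. The reset-context case $\inctx{\rctxzero}{\reset{\inctx{\ctxzero}{\cdot}}}$ needs a further layer of the same pattern (evaluate inner term, then inner pure context, then fire the capture or the $\reset$-value rule, then outer context), which you allude to but should not underestimate.
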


\begin{proof}
  Because $\closc\rel$ is symmetric, we only have to prove that it is
  a simulation. We consider $\tmzero \closc\rel \tmone$ with $\tmzero$
  evaluating in $m$ steps; we prove that $\tmone$ evaluates to a term
  related by $\nf{\closc\rel}$ by induction on $m$, and on the
  derivation of $\tmzero \closc\rel \tmone$, ordered
  lexicographically. The case $m=0$ is easy by induction on $\tmzero
  \closc\rel \tmone$; we treat only the general case $m > 0$. Note
  that the cases $\lam \varx \tmzero \closc\rel \lam \varx \tmone$
  with $\tmzero \closc\rel \tmone$ and $\shift \vark \tmzero
  \closc\rel \shift \vark \tmone$ with $\tmzero \closc\rel \tmone$ are
  not treated here since they are part of the base case.
  
  \paragraph{Assume we have $\tmzero \rel \tmone$.} This case is easy because
  $\rel$ is a bisimulation and $\rel \subseteq \closc\rel$. The case $\tmzero
  \closc\rel \tmzero$ is also easy.
  
  \paragraph{Assume $\subst \tmzero \varx \valzero \closc\rel \subst \tmone
    \varx \valone$ with $\tmzero \closc\rel \tmone$ and $\valzero \nf{\closc\rel}
    \valone$.} We suppose first that $\subst \tmzero \varx \valzero \evalcbv
  \valzero'$. We have two cases to consider.
  \begin{itemize}
  \item If $\tmzero \evalcbv \valzero''$, then $\valzero' =
    \subst{\valzero''} \varx \valzero$ by Lemma
    \ref{l:redcbv-subst}. By the induction hypothesis, there exists
    $\valone''$ such that $\tmone \evalcbv \valone''$, and $\valzero''
    \nf{\closc\rel} \valone''$ holds. By Lemma \ref{l:redcbv-subst},
    we have $\subst \tmone \varx \valone \evalcbv \subst{\valone''}
    \varx \valone$, and we also have $\subst {\valzero''} \varx
    \valzero \nf{\closc\rel} \subst {\valone''} \varx \valone$, hence
    the result holds.
  \item Suppose $\tmzero \evalcbv \inctx \rctxzero {\app \varx
    {\valzero''}}$ with $\inctx {\subst \rctxzero \varx \valzero}{\app
    \valzero {\subst {\valzero''} \varx \valzero}} \evalcbv
    \valzero'$. By the induction hypothesis, there exist $\rctxone$,
    $\valone''$ such that $\tmone \evalcbv \inctx \rctxone {\app \varx
      {\valone''}}$, $\rctxzero \closc\rel \rctxone$, and $\valzero''
    \nf{\closc\rel} \valone''$. Because $\inctx {\subst \rctxzero
      \varx \valzero}{\app \valzero {\subst {\valzero''} \varx
        \valzero}}$ evaluates to a value, $\valzero$ must be a
    $\lambda$-abstraction $\valzero = \lam y {\tmzero'}$. Assume
    $\valone$ is a variable $y$. Since $\valzero \nf{\closc\rel}
    \valone$, we have $\appval \valzero z \closc\rel \app y z$ for a
    fresh $z$; because $\app y z$ is in normal form, we can apply the
    induction hypothesis with $m=0$. There exist $\ctxzero'$,
    $\valzero'''$ such that $\appval \valzero z \evalcbv \inctx
    {\ctxzero'}{\app y {\valzero'''}}$, $\inctx {\ctxzero'}{\varx'}
    \closc\rel \varx'$ for a fresh $\varx'$, and $\valzero'''
    \nf{\closc\rel} z$. Consequently, we have
    $$\inctx {\subst \rctxzero \varx \valzero}{\app
      \valzero {\subst {\valzero''} \varx \valzero}} \evalcbv
    \inctx {\subst \rctxzero \varx \valzero}{\inctx
      {\subst{\ctxzero'} z {\subst {\valzero''} \varx
          \valzero}}{\app y {\subst {\valzero'''} z {\subst
            {\valzero''} \varx \valzero}}}},$$ which is in
             contradiction with $\inctx {\subst \rctxzero \varx
               \valzero}{\app \valzero {\subst {\valzero''} \varx
                 \valzero}} \evalcbv \valzero'$. Therefore,
             $\valone$ must be a $\lambda$-abstraction $\lam y
             {\tmone'}$.

    By Lemma \ref{l:closc-properties}, we have $\tmzero' \closc\rel \tmone'$. The
    reductions 
    $$\inctx {\subst \rctxzero \varx \valzero}{\app \valzero {\subst
        {\valzero''} \varx \valzero}} \redcbv \inctx {\subst \rctxzero \varx
      \valzero}{\subst {\tmzero'} y {\subst {\valzero''} \varx \valzero}}$$
    and
    $$\inctx {\subst \rctxone \varx \valone}{\app \valone {\subst
        {\valone''} \varx \valone}} \redcbv \inctx {\subst \rctxone \varx
      \valone}{\subst {\tmone'} y {\subst {\valone''} \varx \valone}}$$ hold.
    Because $\inctx {\subst \rctxzero \varx \valzero}{\subst {\tmzero'} y
      {\subst {\valzero''} \varx \valzero}} \closc\rel \inctx {\subst \rctxone
      \varx \valone}{\subst {\tmone'} y {\subst {\valone''} \varx \valone}}$,
    and $\inctx {\subst \rctxzero \varx \valzero}{\subst {\tmzero'} y {\subst
        {\valzero''} \varx \valzero}}$ evaluates to $\valzero'$ in less than
    $m-1$ steps, we can apply the induction hypothesis. Therefore, there exists $\valone'$
    such that $\inctx {\subst \rctxone \varx \valone}{\subst {\tmone'} y {\subst
        {\valone''} \varx \valone}} \evalcbv \valone'$ and $ \valzero'
    \nf{\closc\rel} \valone'$. One can check that we have $\subst \tmone \varx \valone \evalcbv
    \valone'$, hence the result holds.
  \end{itemize}

  The case $\subst \tmzero \varx \valzero \evalcbv \inctx \ctxzero {\shift \vark
    {\tmzero'}}$ is treated similarly. Suppose $\subst \tmzero \varx \valzero
  \evalcbv \inctx \rctxzero {\app y {\valzero'}}$ with $y \neq x$. We have two
  possible cases. The case $\tmzero \evalcbv \inctx {\rctxzero'}{\app y
    {\valzero''}}$ is similar to the case $\subst \tmzero \varx \valzero
  \evalcbv \valzero'$ with $\tmzero \evalcbv \valzero$. Suppose $\tmzero
  \evalcbv \inctx {\rctxzero'}{\app \varx {\valzero''}}$ with $\inctx {\subst
    {\rctxzero'} \varx \valzero}{\app \valzero {\subst {\valzero''} \varx
      \valzero}} \evalcbv \inctx \rctxzero {\app y {\valzero'}}$. By the induction hypothesis,
  there exist $\rctxone'$, $\valone''$ such that $\tmone \evalcbv \inctx
  {\rctxone'}{\app \varx {\valone''}}$, $\rctxzero' \closc\rel \rctxone'$, and
  $\valzero'' \nf{\closc\rel} \valone''$. If both $\valzero$ and $\valone$ are
  $\lambda$-abstractions, then we proceed as in the case $\subst \tmzero \varx
  \valzero \evalcbv \valzero'$ with $\tmzero \evalcbv \inctx \rctxzero {\app
    \varx {\valzero''}}$. If both $\valzero$ and $\valone$ are variables, then
  we must have $\valzero=\valone=y$, and the required result holds. Suppose
  $\valzero$ is a variable and $\valone$ is a $\lambda$-abstraction (the
  symmetric case is treated similarly). Then we must have $\valzero = y$,
  $\rctxzero = \subst {\rctxzero'} \varx \valzero$, and $\valzero' = \subst
  {\valzero''} \varx \valzero$. Because $\valzero \nf{\closc\rel} \valone$, we
  have $\app y z \closc\rel \appval \valone z$ for a fresh $z$, so by the induction hypothesis
  (case $m=0$) there exist $\ctxone'$, $\valone'''$ such that $\appval \valone z
  \evalcbv \inctx {\ctxone'}{\app y {\valone'''}}$, $\inctx {\ctxone'}{\varx'}
  \closc\rel \varx'$ for a fresh $\varx'$, and $\valone''' \nf{\closc\rel}
  z$. Consequently, we have $\subst \tmone \varx \valone \evalcbv \inctx {\subst
    {\rctxone'} \varx \valzero}{\inctx {\subst{\ctxone'} z {\subst {\valone''}
        \varx \valone}}{\app y {\subst {\valone'''} z {\subst {\valone''} \varx
          \valone}}}}$. From the relations $\rctxzero' \closc\rel \rctxone'$ and
  $\varx' \closc\rel \inctx {\ctxone'}{\varx'}$, we deduce $\subst {\rctxzero'}
  \varx \valzero \closc\rel \inctx {\subst {\rctxone'} \varx
    \valzero}{\subst{\ctxone'} z {\subst {\valone''} \varx \valone}}$. From
  $\valzero'' \nf{\closc\rel} \valone''$ and $z \nf{\closc\rel} \valone'''$, we
  deduce $\subst {\valzero''} \varx \valzero \nf{\closc\rel} \subst {\valone'''} z
  {\subst {\valone''} \varx \valone}$. Consequently we have the required result.

  \paragraph{Assume $\inctx \ctxzero \tmzero \closc\rel \inctx \ctxone \tmone$
    with $\tmzero \closc\rel \tmone$ and $\inctx \ctxzero \varx
    \closc\rel \inctx \ctxzero \varx$ for a fresh $\varx$.} Suppose
  $\inctx \ctxzero \tmzero \evalcbv \valzero$. Then $\tmzero \evalcbv
  \valzero'$ and $\inctx \ctxzero {\valzero'} \evalcbv \valzero$. By
  the induction hypothesis, there exists $\valone'$ such that $\tmone
  \evalcbv \valone'$, and $\valzero' \nf{\closc\rel}
  \valone'$. Because $\inctx \ctxzero {\valzero'} \evalcbv$, there
  exists a normal form $\tmzero'$ such that $\inctx \ctxzero \varx
  \evalcbv \tmzero'$. By the induction hypothesis, there exists a normal
  form $\tmone'$ such that $\inctx \ctxone \varx \evalcbv \tmone'$ and
  $\tmzero' \nf{\closc\rel} \tmone'$. By Lemma \ref{l:redcbv-subst},
  we have $\inctx \ctxzero \tmzero \clocbv \subst {\tmzero'} \varx
  {\valzero'}$ and $\inctx \ctxone \tmone \clocbv \subst {\tmone'}
  \varx {\valone'}$. Suppose $\inctx \ctxzero \tmzero$ reduces to
  $\subst {\tmzero'} \varx {\valzero'}$ in at least one step. Then
  $\subst {\tmzero'} \varx {\valzero'}$ evaluates to $\valzero$ in
  strictly less than $m$ steps. The normal form $\tmzero'$ is either a
  value or an open stuck term. If $\tmzero'$ is a value, then $\subst
  {\tmone'} \varx {\valone'}$ is also a value. From $\tmzero'
  \nf{\closc\rel} \tmone'$ and substitutivity of $\closc\rel$, we can
  prove that $\subst {\tmzero'} \varx {\valzero'} \nf{\closc\rel}
  \subst {\tmone'} \varx {\valone'}$ holds, as wished. If $\tmzero'$
  is an open stuck term, then so is $\tmone'$, and $\subst {\tmzero'}
  \varx {\valzero'} \closc\rel \subst {\tmone'} \varx {\valone'}$
  holds by Lemma~\ref{l:closc-properties} and substitutivity of
  $\closc\rel$. By the induction hypothesis, there exists $\valone$ such
  that $\subst {\tmone'} \varx {\valone'} \evalcbv \valone$, and
  $\valzero \nf{\closc\rel} \valone$. One can check that $\inctx
  \ctxone \tmone \evalcbv \valone$, hence we have the required
  result. Suppose now that $\inctx \ctxzero \tmzero = \subst
  {\tmzero'} \varx {\valzero'}$. It is possible only if $\tmzero =
  \valzero'$ and $\inctx \ctxzero \varx = \inctx {\ctxzero'}{\app
    \varx {\valzero''}}$. Then we have $\tmone' =
  \inctx{\ctxone'}{\app \varx {\valone''}}$ with $\inctx
        {\ctxzero'}{\varx'} \closc\rel \inctx {\ctxone'}{\varx'}$ for
        a fresh $\varx'$ and $\valzero'' \nf{\closc\rel}
        \valone''$. Because $\inctx {\ctxzero'}{\app
          {\valzero'}{\valzero''}}$ evaluates to $\valzero$,
        $\valzero'$ must be $\lambda$-abstraction $\lam z
                 {\tmzero''}$. By a similar reasoning as in the case
                 $\subst \tmzero \varx \valzero \evalcbv \valzero'$
                 (second sub-case), $\valone'$ is also a
                 $\lambda$-abstraction $\lam z {\tmone''}$. By
                 Lemma~\ref{l:closc-properties}, we have $\tmzero''
                 \closc\rel \tmone''$. Therefore we have $\inctx
                 \ctxzero \tmzero \redcbv \inctx {\ctxzero'}{\subst
                   {\tmzero''} z {\valzero''}}$ and $\inctx \ctxone
                 \tmone \clocbv \inctx {\ctxone'}{\subst {\tmone''} z
                   {\valone''}}$. From $\valzero'' \nf{\closc\rel}
                 \valone''$, we obtain $\inctx {\ctxzero'}{\subst
                   {\tmzero''} z {\valzero''}} \closc\rel \inctx
                          {\ctxone'}{\subst {\tmone''} z {\valone''}}$
                          by substitutivity. Because $\inctx
                          {\ctxzero'}{\subst {\tmzero''} z {\valzero''}}$
                          evaluates to $\valzero$ in less than $m-1$
                          steps, by the induction hypothesis (on $m$),
                          there exists $\valone$ such that $\inctx
                          {\ctxone'}{\subst {\tmone''} z {\valone''}}
                          \evalcbv \valone$ and $\valzero
                          \nf{\closc\rel} \valone$. One can check that
                          $\inctx \ctxone \tmone \evalcbv \valone$
                          holds, hence we have the required result.

  The case $\inctx \ctxzero \tmzero \evalcbv \inctx {\ctxzero'}{\shift \vark
    {\tmzero'}}$ is similar. Suppose now that $\inctx \ctxzero \tmzero \evalcbv
  \inctx \rctxzero {\app y \valzero}$. We have two possible cases; the case
  $\tmzero \evalcbv \inctx {\rctxzero'}{\app y \valzero}$ is easy using
  induction. Suppose $\tmzero \evalcbv \valzero'$ and $\inctx
  \ctxzero {\valzero'} \evalcbv \inctx {\rctxzero}{\app y \valzero}$. By
  the induction hypothesis, there exists $\valone'$ such that $\tmone \evalcbv
  \valone'$, and $\valzero' \nf{\closc\rel} \valone'$. Because $\inctx \ctxzero
  {\valzero'} \evalcbv$, there exists a normal form $\tmzero'$ such that $\inctx
  \ctxzero \varx \evalcbv \tmzero'$. By the induction hypothesis, there exists a
  normal form $\tmone'$ such that $\inctx \ctxone \varx \evalcbv \tmone'$ and
  $\tmzero' \nf{\closc\rel} \tmone'$. By Lemma \ref{l:redcbv-subst}, we have
  $\inctx \ctxzero \tmzero \clocbv \subst {\tmzero'} \varx {\valzero'}$ and
  $\inctx \ctxone \tmone \clocbv \subst {\tmone'} \varx {\valone'}$. If $\inctx
  \ctxzero \tmzero$ reduces to $\subst {\tmzero'} \varx {\valzero'}$ in at least 
  one step, then we proceed as in the case $\inctx \ctxzero \tmzero
  \evalcbv \valzero$. Otherwise, we have $\tmzero = \valzero'$, $\inctx \ctxzero
  \varx = \inctx {\ctxzero'}{\app \varx {\valzero''}}$, $\tmone' =
  \inctx{\ctxone'}{\app \varx {\valone''}}$ with $\inctx {\ctxzero'}{\varx'}
  \closc\rel \inctx {\ctxone'}{\varx'}$ for a fresh $\varx'$, and $\valzero''
  \nf{\closc\rel} \valone''$. If both $\valzero'$ and $\valone'$ are
  $\lambda$-abstractions, then we proceed as in the case $\inctx \ctxzero
  \tmzero \evalcbv \valzero$. If they are both variables, then
  $\valzero'=\valone'=y$, and the result holds. If $\valzero'$ is a variable and
  $\valone'$ is a $\lambda$-abstraction, then we must have $\valzero' = y$,
  $\rctxzero = \ctxzero''$, and $\valzero = \valzero''$. Because we have $\app y
  z \closc\rel \appval \valone z$, by the induction hypothesis (case $m=0$), there exist
  $\ctxone''$, $\valone'''$ such that $\appval \valone z \evalcbv \inctx
  {\ctxone''}{\app y {\valone'''}}$, $y' \closc\rel \inctx {\ctxone''}{y'}$ for
  a fresh $y'$, and $z \nf{\closc\rel} \valone'''$. Consequently, we have $\inctx
  \ctxone \tmone \evalcbv \inctx {\ctxone'}{\inctx {\subst {\ctxone''} z
      {\valone''}}{\app y {\subst {\valone'''} z {\valone''}}}}$. From $\inctx
  {\ctxzero'}{\varx'} \closc\rel \inctx {\ctxone'}{\varx'}$ and $y' \closc\rel
  \inctx {\ctxone''}{y'}$, we deduce $\ctxzero' \closc\rel \inctx
  {\ctxone'}{\subst {\ctxone''} z {\valone''}}$. From $\valzero'' \nf{\closc\rel}
  \valone''$ and $z \nf{\closc\rel} \valone'''$, we deduce $\valzero'' \nf{\closc\rel}
  \subst {\valone'''} z {\valone''}$. Consequently, we have the required result.

  \paragraph{Assume $\inctx \rctxzero {\reset {\inctx \ctxzero \tmzero}} \closc\rel
    \inctx \rctxone {\reset {\inctx \ctxone \tmone}}$ with $\inctx \rctxzero
    \varx \closc\rel \inctx \rctxone \varx$, $\inctx \ctxzero \varx \closc\rel \inctx
    \ctxone \varx$ ($\varx$ fresh), and $\tmzero \closc\rel \tmone$.} Note that
  $\inctx \rctxzero {\reset {\inctx \ctxzero \tmzero}}$ cannot evaluate to
  $\inctx{\ctxzero'}{\shift \vark {\tmzero'}}$. Suppose $\inctx
  \rctxzero {\reset {\inctx \ctxzero \tmzero}} \evalcbv \valzero$. We have
  several cases to consider.
  \begin{itemize}
  \item Suppose $\tmzero \evalcbv \valzero'$, $\inctx \ctxzero {\valzero'}
    \evalcbv \valzero''$, and $\inctx \rctxzero {\reset {\valzero''}} \evalcbv
    \valzero$. By the induction hypothesis, there exists $\valone'$ such that $\tmone \evalcbv
    \valone'$ and $\valzero' \nf{\closc\rel} \valone'$. We have $\inctx \ctxone
    \tmone \clocbv \inctx \ctxone {\valone'}$ and $\inctx \ctxzero {\valzero'}
    \closc\rel \inctx \ctxone {\valone'}$. Because the evaluation $\inctx
    \rctxzero {\reset {\valzero''}} \evalcbv \valzero$ takes at least one step
    (corresponding to $\reset {\valzero''} \redcbv \valzero''$), we know that
    the evaluation $\inctx \ctxzero {\valzero'} \evalcbv \valzero''$ is in $m-1$
    steps or less. Therefore, by the induction hypothesis (on $m$), there exists $\valone''$
    such that $\inctx \ctxone {\valone'} \evalcbv \valone''$ and $\valzero''
    \nf{\closc\rel} \valone''$. Because $\inctx \rctxzero {\reset {\valzero''}}
    \evalcbv \valzero$, there exists a normal form $\tmzero'$ such that $\inctx
    \rctxzero \varx \evalcbv \tmzero'$. By the induction hypothesis, there exists a normal form
    $\tmone'$ such that $\inctx \rctxone \varx \evalcbv \tmone'$ and $\tmzero'
    \nf{\closc\rel} \tmone'$. We have $\inctx \rctxzero {\reset {\valzero''}}
    \clocbv \subst {\tmzero'} \varx {\valzero''}$ and $\inctx \rctxone {\reset
      {\valone''}} \clocbv \subst {\tmone'} \varx {\valone''}$. Because the
    reduction $\inctx \rctxzero {\reset {\valzero''}} \clocbv \subst {\tmzero'}
    \varx {\valzero''}$ takes at least one step, we know that the evaluation
    $\subst {\tmzero'} \varx {\valzero''} \evalcbv \valzero$ takes $m-1$ steps
    or less. Besides, $\tmzero'$ is either a value or an open stuck term. If
    $\tmzero'$ is a value, then so is $\tmone'$, and one can check that both
    $\valzero = \subst {\tmzero'} \varx {\valzero''} \nf{\closc\rel} \subst
    {\tmone'} \varx {\valone''}$ and $\inctx \rctxone {\reset {\inctx \ctxone
        \tmone}} \evalcbv \subst {\tmone'} \varx {\valone''}$ hold. If
    $\tmzero'$ is an open term, then so is $\tmone'$, and we have
    $\subst{\tmzero'} \varx {\valzero''} \closc\rel \subst{\tmone'} \varx
    {\valone''}$ by Lemma \ref{l:closc-properties} and substitutivity. Therefore,
    by the induction hypothesis (on $m$), there exists $\valone$ such that $\subst {\tmone'}
    \varx {\valone''} \evalcbv \valone$ and $\valzero \nf{\closc\rel}
    \valone$. Because $\inctx \rctxone {\reset {\inctx \ctxone \tmone}} \evalcbv
    \valone$, we have the required result.
  \item Suppose $\tmzero \evalcbv \valzero'$, $\inctx \ctxzero {\valzero'}
    \evalcbv \inctx {\ctxzero'}{\shift \vark {\tmzero'}}$, and $\inctx \rctxzero
    {\reset{\inctx {\ctxzero'}{\shift \vark {\tmzero'}}}} \evalcbv \valzero$. By
    the induction hypothesis, there exists $\valone'$ such that $\tmone \evalcbv \valone'$ and
    $\valzero' \nf{\closc\rel} \valone'$. We have $\inctx \ctxone \tmone \clocbv
    \inctx \ctxone {\valone'}$ and $\inctx \ctxzero {\valzero'} \closc\rel
    \inctx \ctxone {\valone'}$. Because the evaluation $\inctx \rctxzero
    {\reset{\inctx {\ctxzero'}{\shift \vark {\tmzero'}}}} \evalcbv \valzero$
    takes at least one step (corresponding to the capture of $\ctxzero'$ by
    shift), we know that the evaluation $\inctx \ctxzero {\valzero'} \evalcbv
    \inctx {\ctxzero'}{\shift \vark {\tmzero'}}$ is in $m-1$ steps or
    less. Therefore, by the induction hypothesis (on $m$), there exists $\ctxone'$, $\tmone'$
    such that $\inctx \ctxone {\valone'} \evalcbv \inctx {\ctxone'}{\shift \vark
      {\tmone'}}$, $\reset {\tmzero'} \closc\rel \reset{\tmone'}$, and $\inctx
    {\ctxzero'} y \closc\rel \inctx {\ctxone'} y$ for a fresh $y$. By
    congruence, we have $\lam y {\reset{\inctx {\ctxzero'} y}} \nf{\closc\rel} \lam y
    {\reset{\inctx {\ctxone'} y}}$, therefore, $\reset {\subst {\tmzero'} \vark
      {\lam y {\reset {\inctx {\ctxzero'} y}}}} \closc\rel \reset{\subst
      {\tmone'} \vark {\lam y {\reset {\inctx {\ctxone'} y}}}}$ holds by
    substitutivity. Because $\inctx \rctxzero {\reset{\inctx {\ctxzero'}{\shift
          \vark {\tmzero'}}}}$ evaluates to $\valzero$, we must have
    $\reset{\inctx {\ctxzero'}{\shift \vark {\tmzero'}}} \evalcbv \valzero''$,
    $\inctx \rctxzero \varx \evalcbv \tmzero''$ for some normal form
    $\tmzero''$, and $\subst {\tmzero''} \varx {\valzero''} \evalcbv
    \valzero$. Because of the capture step $\reset{\inctx {\ctxzero'}{\shift
        \vark {\tmzero'}}} \redcbv \reset {\subst {\tmzero'} \vark {\lam y
        {\reset {\inctx {\ctxzero'} y}}}}$, we know $\reset {\subst {\tmzero'}
      \vark {\lam y {\reset {\inctx {\ctxzero'} y}}}}$ evaluates to $\valzero''$
    in $m-1$ steps or less. Consequently, by the induction hypothesis (on $m$), there exists
    $\valone''$ such that $\reset{\subst {\tmone'} \vark {\lam y {\reset {\inctx
            {\ctxone'} y}}}} \evalcbv \valone''$ and $\valzero'' \nf{\closc\rel}
    \valone''$. Because $\inctx \rctxzero \varx \closc\rel \inctx \rctxone
    \varx$, we also know by the induction hypothesis that there exists a normal form $\tmone''$
    such that $\inctx \rctxone \varx \evalcbv \tmone''$ and $\tmzero''
    \nf{\closc\rel} \tmone''$. Because the reduction $\inctx \rctxzero
    {\reset{\inctx {\ctxzero'}{\shift \vark {\tmzero'}}}} \clocbv \subst
    {\tmzero''} \varx {\valzero''}$ takes at least one step, we know that the
    evaluation $\subst {\tmzero''} \varx {\valzero''} \evalcbv \valzero$ is in
    $m-1$ steps or less. Besides, $\tmzero''$ is either a value or an open stuck
    term. If $\tmzero''$ is a value, then so is $\tmone''$, and one can check
    that both $\valzero = \subst{\tmzero''} \varx {\valzero''} \nf{\closc\rel}
    \subst{\tmone''} \varx {\valone''}$ and $\inctx \rctxone {\reset {\inctx
        \ctxone \tmone}} \evalcbv \subst{\tmone''} \varx {\valone''}$ hold. If
    $\tmzero''$ is an open stuck term, then so is $\tmone''$, and we have
    $\subst {\tmzero''} \varx {\valzero''} \closc\rel \subst{\tmone''} \varx
    {\valone''}$ by Lemma \ref{l:closc-properties} and substitutivity. By
    induction (on $m$), there exists $\valone$ such that $\subst {\tmone''}
    \varx {\valone''} \evalcbv \valone$ and $\valzero \nf{\closc\rel}
    \valone$. One can check that $\inctx \rctxone {\reset {\inctx \ctxone
        \tmone}} \evalcbv \valone$ holds, therefore the required result holds.
  \item Suppose $\tmzero \evalcbv \inctx {\ctxzero'}{\shift \vark \tmzero}$ and
    $\inctx \rctxzero {\reset{\inctx \ctxzero {\inctx {\ctxzero'}{\shift \vark
            {\tmzero'}}}}} \evalcbv \valzero$. This sub-case is similar to the
    previous one.
  \end{itemize}

  Suppose $\inctx \rctxzero {\reset {\inctx \ctxzero \tmzero}} \evalcbv \inctx
  {\rctxzero'}{\app y \valzero}$. There are five sub-cases to consider: three of
  them are similar to the sub-cases of $\inctx \rctxzero {\reset {\inctx
      \ctxzero \tmzero}} \evalcbv \valzero$, and the remaining two are similar
  to the sub-cases of $\inctx \ctxzero \tmzero \evalcbv \inctx \rctxzero {\app y
    \valzero}$ (namely $\tmzero \evalcbv \inctx {\rctxzero''}{\app y \valzero}$
  with $\rctxzero' = \inctx \rctxzero {\reset {\inctx \ctxzero {\rctxzero''}}}$,
  or $\tmzero \evalcbv \valzero'$, $\inctx \ctxzero {\valzero'} \evalcbv \inctx
  {\rctxzero''}{\app y \valzero}$ with $\rctxzero' = \inctx \rctxzero {\reset
    {\rctxzero''}}$).

  \qed
\end{proof}

\begin{lemma}[Lemma \ref{l:main-lemma-rbisim} in the paper]
  If $\rel$ is a refined bisimulation, then so is $\closbc\rel$.
\end{lemma}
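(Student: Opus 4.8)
The plan is to replay, almost line for line, the proof of Lemma~\ref{l:closc-bisim}, performing throughout the systematic replacements $\nf{\closc\rel}\rightsquigarrow\rnf{\closbc\rel}$ and $\closc\rel\rightsquigarrow\closbc\rel$, and adding the one new case that corresponds to the extra closure rule of $\closbc\rel$. As before, one proves that if $\tmzero \closbc\rel \tmone$ and $\tmzero$ reaches a normal form $\tmzero'$ in $m$ reduction steps, then $\tmone \evalcbv \tmone'$ with $\tmzero' \rnf{\closbc\rel} \tmone'$, by nested induction on $m$ and on the derivation of $\tmzero \closbc\rel \tmone$, ordered lexicographically with $m$ primary. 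Every sub-case in which the relevant term evaluates to a value or to an open stuck term goes through verbatim: $\rnf{\closbc\rel}$ and $\nf{\closbc\rel}$ coincide on values and open stuck terms, and the manipulations of $\RRbeta$- and $\RRreset$-redexes, of substitutions, and of the context-closure rules are insensitive to the refinement. Lemma~\ref{l:redcbv-subst}, Lemma~\ref{l:closc-properties} (which is stated to hold for $\closbc\rel$ as well), and substitutivity of $\closbc\rel$ are used exactly as in the non-refined proof.

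The genuine content is in the sub-cases where the analysed term evaluates to a \emph{control stuck term}: the case $\tmzero \rel \tmone$ with $\tmzero$ control stuck (covered since $\rel$ is a refined bisimulation), and the cases $\subst\tmzero\varx\valzero \closbc\rel \subst\tmone\varx\valone$, $\inctx\ctxzero\tmzero \closbc\rel \inctx\ctxone\tmone$, and $\inctx\rctxzero{\reset{\inctx\ctxzero\tmzero}} \closbc\rel \inctx\rctxone{\reset{\inctx\ctxone\tmone}}$ when the inner evaluation produces a term $\inctx{\ctxzero'}{\shift\vark{\tmzero'}}$. There the induction hypothesis delivers, for a fresh $\vark'$, a match of the form $\reset{\subst{\tmzero'}\vark{\lam\varx{\reset{\app{\vark'}{\inctx{\ctxzero'}\varx}}}}} \closbc\rel \reset{\subst{\tmone'}\vark{\lam\varx{\reset{\app{\vark'}{\inctx{\ctxone'}\varx}}}}}$, and we must upgrade it to the match in which the captured context is enlarged by the surrounding pure context, i.e. $\reset{\subst{\tmzero'}\vark{\lam\varx{\reset{\app{\vark'}{\inctx\ctxzero{\inctx{\ctxzero'}\varx}}}}}} \closbc\rel \reset{\subst{\tmone'}\vark{\lam\varx{\reset{\app{\vark'}{\inctx\ctxone{\inctx{\ctxone'}\varx}}}}}}$. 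Following the sketch in Section~\ref{ss:refined}, I would first derive $\lam y{\app{\vark'}{\inctx\ctxzero y}} \rnf{\closbc\rel} \lam y{\app{\vark'}{\inctx\ctxone y}}$ from $\inctx\ctxzero y \closbc\rel \inctx\ctxone y$ by the context-closure rules, use substitutivity of $\closbc\rel$ to get the match with the captured context written as $\app{\lam y{\app{\vark'}{\inctx\ctxzero y}}}{\inctx{\ctxzero'}\varx}$, then rewrite the two shapes into one another with Lemma~\ref{l:rbisim-trick}, and finally re-close with $\rbisim \closbc\rel \rbisim \subseteq \closbc\rel$. This is exactly why the closure has to be taken up to $\rbisim$ (hence $\closbc\rel$, not $\closc\rel$) and why Lemma~\ref{l:rbisim-trick} — and, through it, Lemma~\ref{l:beta-omega-ax} and congruence of $\bisim$ — is needed.

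The one case absent from the template is the extra rule $\tmzero \rbisim u \closbc\rel v \rbisim \tmone$. Here I would not appeal to the step-counting induction hypothesis at all: since $\rbisim$ is itself a refined bisimulation, $\tmzero \evalcbv \tmzero'$ forces $u \evalcbv u'$ with $\tmzero' \rnf\rbisim u'$, and symmetrically $v \evalcbv v'$ forces $\tmone \evalcbv \tmone'$ with $v' \rnf\rbisim \tmone'$; one then matches $u \closbc\rel v$, obtaining $u' \rnf{\closbc\rel} v'$, and composes the three matches, using that $\rnf{}$ is monotone and distributes over relational composition together with $\rbisim \closbc\rel \rbisim \subseteq \closbc\rel$, to conclude $\tmzero' \rnf{\closbc\rel} \tmone'$.

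The main obstacle — and the only spot that requires real care beyond transcribing the proof of Lemma~\ref{l:closc-bisim} — is making this nested induction well-founded in the presence of the last rule: matching the middle premise $u \closbc\rel v$ is delicate because $\rbisim$ does not bound reduction lengths, so the number of steps needed to evaluate $u$ need not be at most $m$, and one cannot naively recur on a strictly smaller derivation at the same $m$. I expect to resolve this by setting up the induction so that traversing this rule never forces the primary measure to grow — for instance by a suitable strengthening of the induction hypothesis, or by absorbing the pre- and post-composition with $\rbisim$ outside the step count and re-closing with $\rbisim \closbc\rel \rbisim \subseteq \closbc\rel$ at the end. Once this interaction is handled and the control-stuck sub-cases are in place, the remaining bookkeeping is identical to that of Appendix~\ref{a:soundness}, and the corollaries (congruence of $\rbisim$, hence $\rbisim \subseteq \ctxequiv$) follow as for $\bisim$.
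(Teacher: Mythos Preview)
Your proposal matches the paper's approach: it transcribes the argument of Lemma~\ref{l:closc-bisim} with $\closbc\rel$ and $\rnf{}$ in place of $\closc\rel$ and $\nf{}$, handles the pure-context control-stuck sub-case via substitutivity and Lemma~\ref{l:rbisim-trick} exactly as in the paper's sketch, and treats the new sandwiching rule by composing $\rnf\rbisim$, $\rnf{\closbc\rel}$, $\rnf\rbisim$ and re-closing via $\rbisim\closbc\rel\rbisim \subseteq \closbc\rel$. The paper additionally spells out the $\inctx\rctxzero{\reset{\inctx\ctxzero\tmzero}}$ sub-case (where one must pass from the $\vark'$-form delivered by the refined hypothesis back to the actual capture $\lam y{\reset{\inctx{\ctxzero'} y}}$, done by substituting $\lam z\,z$ for $\vark'$ and invoking Lemma~\ref{l:rbisim-trick}), which you list but do not detail. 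As for the well-foundedness worry you flag in the sandwiching case, the paper's sketch simply writes ``by the induction hypothesis (on the definition of $\closbc\rel$)'' and does not discuss the step count for the middle term, so you are being more scrupulous than the paper here rather than diverging from it.
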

Because the proof is quite similar to the previous one, we sketch only the cases
with the largest differences.

\begin{proof}[Sketch]
  Assume $\tmzero \rbisim \tmzero^1 \closbc\rel \tmone^1 \rbisim \tmone$ and
  $\tmzero \evalcbv \tmzero'$, where $\tmzero'$ is a normal form. By
  bisimilarity, there exists $\tmzero''$ such that $\tmzero^1 \evalcbv
  \tmzero''$ and $\tmzero' \rnf\rbisim \tmzero''$. By the induction hypothesis (on the definition
  of $\closbc\rel$), there exists $\tmone''$ such that $\tmone^1 \evalcbv
  \tmone''$ and $\tmzero'' \rnf{\closbc\rel} \tmone''$. By bisimilarity, there exists
  $\tmone'$ such that $\tmone \evalcbv \tmone'$ and $\tmone'' \rnf{\rbisim}
  \tmone'$. Finally, we have $\tmzero' \rnf{(\rbisim \closbc\rel \rbisim)} \tmone'$, and
  because $\rbisim \closbc\rel \rbisim \subseteq \closbc\rel$, we have the
  required result.

  Assume we are in the case where $\inctx \ctxzero \tmzero \closbc\rel \inctx
  \ctxone \tmone$ with $\inctx \ctxzero y \closbc\rel \inctx \ctxone y$ for a
  fresh $y$, and $\tmzero \closbc\rel \tmone$. Moreover, suppose $\tmzero
  \evalcbv \inctx{\ctxzero'}{\shift \vark {\tmzero'}}$. Then by the induction hypothesis, we
  know that there exist $\ctxone'$, $\tmone'$ such that $\tmone \evalcbv \inctx
  {\ctxone'}{\shift \vark {\tmone'}}$, and $\reset {\subst {\tmzero'} \vark
    {\lam \varx {\reset {\app {\vark'}{\inctx {\ctxzero'} \varx}}}}} \closbc\rel
  \reset {\subst{\tmone'} \vark {\lam \varx {\reset{\app {\vark'}{\inctx
            {\ctxone'} \varx}}}}}$ (*) for a fresh $\vark'$. Hence, we have
  $\inctx \ctxzero \tmzero \evalcbv \inctx \ctxzero{\inctx{\ctxzero'}{\shift
      \vark {\tmzero'}}}$ and $\tmone \evalcbv \inctx \ctxone {\inctx
    {\ctxone'}{\shift \vark {\tmone'}}}$, and we want to prove that $\reset
  {\subst {\tmzero'} \vark {\lam \varx {\reset {\app {\vark'}{\inctx \ctxzero
            {\inctx {\ctxzero'} \varx}}}}}} \closbc\rel \reset {\subst{\tmone'}
    \vark {\lam \varx {\reset{\app {\vark'}{\inctx \ctxone{\inctx {\ctxone'}
              \varx}}}}}}$ holds. Because $\inctx \ctxzero y \closc\rel \inctx
  \ctxone y$, we have $\lam y {\app {\vark'}{\inctx \ctxzero y}} \nf{\closc\rel}
  \lam y {\app {\vark'}{\inctx \ctxone y}}$ (**). Using (*) and (**), we have
  $\reset {\subst {\tmzero'} \vark {\lam \varx {\reset {\app {\lamp y {\app
              {\vark'}{\inctx \ctxzero y}}}{\inctx {\ctxzero'} \varx}}}}}
  \closbc\rel \reset {\subst{\tmone'} \vark {\lam \varx {\reset{\app {\lamp y
            {\app {\vark'}{\inctx \ctxone y}}}{\inctx {\ctxone'} \varx}}}}}$ by
  substitutivity of $\closbc\rel$. By Lemma \ref{l:rbisim-trick}, we have
  \begin{align*}
    \reset {\subst {\tmzero'} \vark {\lam \varx {\reset {\app {\vark'}{\inctx
              \ctxzero {\inctx {\ctxzero'} \varx}}}}}} & \rbisim \reset {\subst
      {\tmzero'} \vark {\lam \varx {\reset {\app {\lamp y {\app {\vark'}{\inctx
                  \ctxzero
                  y}}}{\inctx {\ctxzero'} \varx}}}}} \\
    \reset {\subst{\tmone'} \vark {\lam \varx {\reset{\app {\vark'}{\inctx
              \ctxone{\inctx {\ctxone'} \varx}}}}}} & \rbisim \reset
    {\subst{\tmone'} \vark {\lam \varx {\reset{\app {\lamp y {\app
                {\vark'}{\inctx \ctxone y}}}{\inctx {\ctxone'} \varx}}}}},
  \end{align*}
  which means that $\reset {\subst {\tmzero'} \vark {\lam \varx {\reset {\app
          {\vark'}{\inctx \ctxzero {\inctx {\ctxzero'} \varx}}}}}} \rbisim
  \closbc\rel \rbisim \reset {\subst{\tmone'} \vark {\lam \varx {\reset{\app
          {\vark'}{\inctx \ctxone{\inctx {\ctxone'} \varx}}}}}}$ holds. We have
  then the required result because $\rbisim \closbc\rel \rbisim \subseteq
  \closbc\rel$.  

  Assume $\inctx \rctxzero {\reset {\inctx \ctxzero \tmzero}} \closbc\rel \inctx
  \rctxone {\reset {\inctx \ctxone \tmone}}$ with $\inctx \rctxzero \varx
  \closbc\rel \inctx \rctxone \varx$, $\inctx \ctxzero \varx \closbc\rel \inctx
  \ctxone \varx$ ($\varx$ fresh), and $\tmzero \closbc\rel \tmone$. Moreover,
  suppose $\tmzero \evalcbv \valzero'$, $\inctx \ctxzero {\valzero'} \evalcbv
  \inctx {\ctxzero'}{\shift \vark {\tmzero'}}$, and $\inctx \rctxzero
  {\reset{\inctx {\ctxzero'}{\shift \vark {\tmzero'}}}} \evalcbv \valzero$. By
  the induction hypothesis, there exists $\valone'$ such that $\tmone \evalcbv \valone'$ and
  $\valzero' \rnf{\closbc\rel} \valone'$. We have $\inctx \ctxone \tmone \clocbv
  \inctx \ctxone {\valone'}$ and $\inctx \ctxzero {\valzero'} \closbc\rel \inctx
  \ctxone {\valone'}$. Because $\inctx \rctxzero {\reset{\inctx
      {\ctxzero'}{\shift \vark {\tmzero'}}}} \evalcbv \valzero$ takes at least
  one step (corresponding to the capture of $\ctxzero'$ by shift), we know that
  the evaluation $\inctx \ctxzero {\valzero'} \evalcbv \inctx {\ctxzero'}{\shift
    \vark {\tmzero'}}$ is in $m-1$ steps or less. Therefore, by the induction hypothesis, there
  exists $\ctxone'$, $\tmone'$ such that $\inctx \ctxone {\valone'} \evalcbv
  \inctx {\ctxone'}{\shift \vark {\tmone'}}$ and $\reset {\subst {\tmzero'}
    \vark {\lam y {\reset {\app {\vark'}{\inctx {\ctxzero'} y}}}}} \closbc\rel
  \reset{\subst {\tmone'} \vark {\lam y {\reset {\app {\vark'}{\inctx {\ctxone'}
            y}}}}}$ for fresh $y$ and $\vark'$. Because $\lam z z
  \rnf{\closbc\rel} \lam z z$ and $\closbc\rel$ is substitutive, we have $\reset
  {\subst {\tmzero'} \vark {\lam y {\reset {\app {\lamp z z}{\inctx {\ctxzero'}
            y}}}}} \closbc\rel \reset{\subst {\tmone'} \vark {\lam y {\reset
        {\app {\lamp z z}{\inctx {\ctxone'} y}}}}}$. Using
  Lemma~\ref{l:rbisim-trick}, we obtain
  \begin{align*}
    \reset {\subst {\tmzero'} \vark {\lam y {\reset {\app {\lamp z z}{\inctx
              {\ctxzero'} y}}}}} & \rbisim \reset {\subst {\tmzero'} \vark {\lam
        y {\reset {\inctx {\ctxzero'} y}}}} \\
    \reset{\subst {\tmone'} \vark {\lam y {\reset {\app {\lamp z z}{\inctx
              {\ctxone'} y}}}}} & \rbisim \reset{\subst {\tmone'} \vark {\lam y
        {\reset {\inctx {\ctxone'} y}}}}.
  \end{align*}
  Consequently, we have $\reset {\subst {\tmzero'} \vark {\lam y
      {\reset {\inctx {\ctxzero'} y}}}} \rbisim \closbc\rel \rbisim
  \reset{\subst {\tmone'} \vark {\lam y {\reset {\inctx {\ctxone'}
          y}}}}$, and because $\rbisim \closbc\rel \rbisim \subseteq
  \closbc \rel$, we have $\reset {\subst {\tmzero'} \vark {\lam y
      {\reset {\inctx {\ctxzero'} y}}}} \closbc\rel \reset{\subst
    {\tmone'} \vark {\lam y {\reset {\inctx {\ctxone'} y}}}}$. From
  here, the proof is the same as in the corresponding case of the
  proof of Lemma \ref{l:closc-bisim}.  \qed
\end{proof}

\end{document}